\DeclareFontFamily{U}{MnSymbolC}{}
\DeclareSymbolFont{MnSyC}{U}{MnSymbolC}{m}{n}
\DeclareFontShape{U}{MnSymbolC}{m}{n}{
    <-6>  MnSymbolC5
   <6-7>  MnSymbolC6
   <7-8>  MnSymbolC7
   <8-9>  MnSymbolC8
   <9-10> MnSymbolC9
  <10-12> MnSymbolC10
  <12->   MnSymbolC12}{}
\DeclareMathSymbol{\intprod}{\mathbin}{MnSyC}{'270}
\newcommand*\bigcdot{\mathpalette\bigcdot@{.5}}
\newcommand*\bigcdot@[2]{\mathbin{\vcenter{\hbox{\scalebox{#2}{$\m@th#1\bullet$}}}}}
\newcommand{\calC}{\mathcal{C}}
\newcommand{\calD}{\mathcal{D}}
\newcommand{\calE}{\mathcal{E}}
\newcommand{\calH}{\mathcal{H}}
\newcommand{\calL}{\mathcal{L}}
\newcommand{\calR}{\mathcal{R}}
\newcommand{\calS}{\mathcal{S}}
\newcommand{\calT}{\mathcal{T}}
\newcommand{\mN}{\mathbb{N}}
\newcommand{\mR}{\mathbb{R}}
\newcommand{\mT}{\mathbb{T}}
\newcommand{\mZ}{\mathbb{Z}}
\newcommand{\bbk}{\mathbf{k}}
\newcommand{\bbn}{\mathbf{n}}
\newcommand{\bbx}{\mathbf{x}}
\newcommand{\balpha}{{\bm{\alpha}}}
\newcommand{\bmu}{\bm{\mu}}
\newcommand{\bxi}{\bm{\xi}}
\newcommand{\cnab}{{\mathring{\slashed{\nabla}}}}
\newcommand{\cLap}{{\mathring{\slashed{\Delta}}}}
\newcommand{\cg}{{\mathring{\slashed{g}}}}
\newtheorem{theorem}{Theorem}[section]
\newtheorem{lemma}[theorem]{Lemma}
\theoremstyle{definition}
\newtheorem{remark}[theorem]{Remark}
\theoremstyle{definition}
\theoremstyle{definition}
\theoremstyle{definition}
\begin{document}

\keywords{Klein-Gordon equation, general relativity, FLRW models,
  Schwarzschild-de Sitter, Reissner-Nordstr\"om-de Sitter}

\subjclass[2010]{Primary 58J45; Secondary 35B40, 35L15, 83C57, 83F05}

\title[Decay of Klein-Gordon equation solutions]{Decay of solutions  \\
to the Klein-Gordon equation\\
  on some expanding cosmological spacetimes}

\vspace{-0.3in}
  
\author[J. Nat\'ario]{Jos\'e Nat\'ario}
\address{CAMGSD, Departamento de Matem\'atica\\ Instituto Superior T\'ecnico\\
Universidade de Lisboa\\ Portugal}

\email{jnatar@math.ist.utl.pt}

\author[A. Sasane]{Amol Sasane}
\address{Department of Mathematics \\London School of Economics\\
     Houghton Street\\ London WC2A 2AE\\ United Kingdom}
\email{A.J.Sasane@lse.ac.uk}

\maketitle 

\vspace{-0.36in}

\begin{abstract}
  The decay of solutions to the Klein-Gordon equation is studied in
  two expanding cosmological spacetimes, namely
\begin{itemize}
\item the de Sitter universe in flat Friedmann-Lema\^{i}tre-Robertson-Walker (FLRW) form, and
\item the cosmological region of the Reissner-Nordstr\"om-de Sitter
  (RNdS) model.
\end{itemize}
Using energy methods, for initial data with finite higher order
energies, decay rates for the solution are obtained. 
Also, a previously established decay rate of the time derivative 
of the solution to the wave equation, in an expanding de Sitter universe in flat FLRW form,  
is improved, proving Rendall's conjecture. A similar improvement is also given for the 
wave equation in the cosmological region of the RNdS spacetime. 
\end{abstract}

\vspace{-0.21in}

\tableofcontents

\section{Introduction}

\noindent 
The aim of this article is to obtain exact decay rates for solutions
to the Klein-Gordon equation in a fixed background of some expanding
cosmological spacetimes.  The two spacetimes we
will consider are the de Sitter universe in flat Friedmann-Lema\^{i}tre-Robertson-Walker (FLRW)  form, and  the cosmological region of the
Reissner-Nordstr\"om-de Sitter (RNdS) model. 
The problem we consider is
linear, in that the background is fixed. This constitutes a 
first step towards understanding the more complicated nonlinear coupled problem, 
where one also considers the effect of the energy-momentum tensor of the solution to the 
Klein-Gordon equation on the Einstein equation. This 
nonlinear coupled problem is much more complicated, and usually requires, 
as a first step, a detailed understanding of our simpler linear problem.

There are several motivations behind the interest in this question.
Firstly, one may consider the linear wave equations as a proxy for the
Einstein equations, with the ultimate goal of understanding the
qualitative behaviour of solutions to the Einstein equations by
slicing spacetime into spacelike hypersurfaces.  After this first step, 
one may then proceed to consider linearised Einstein equations
(which can be reduced to tensor wave-like linear equations), and
finally, the full nonlinear Einstein equations. With the addition of a
positive cosmological constant to the Einstein field equations, the
expectation is that the resulting accelerated expansion has a
dominating effect on the decay of solutions. Precise estimates on
solutions may then prove useful in formulating and proving cosmic
no-hair theorems (e.g. \cite{AR}, \cite{CNO0}).

The wave equation $\square_g \phi=0$ in expanding cosmological spacetimes
$(M,g)$ has been amply studied in the literature, see for example
\cite{BGP},  \cite{CNO}, \cite{DR}, \cite{S},  and the references therein.  It is a
natural question to also study the Klein-Gordon equation
$\square_g\phi-m^2\phi=0$, the degenerate version of which, when $m^2=0$,
is the wave equation.  For example, in \cite[\S6]{S}, also the case of
the Klein-Gordon equation in the Schwarzschild-de Sitter
spacetime is considered. In \cite{Rin}, the asymptotic behaviour of the solutions to the 
Klein-Gordon equation near the Big Bang singularity is studied, while
we investigate the asymptotics of the Klein-Gordon equation in the far
future in the case of the de Sitter universe in flat FLRW form, and in the cosmological
region of the Reissner-Nordstr\"om-de Sitter solution. 
Recently, in \cite{ER}, among other things,  decay estimates for the solutions to the Klein-Gordon equation 
 were obtained in de Sitter models (see in particular, Corollary 2.1 
and the less obvious Proposition 3.1). However, these results are proved 
via Fourier transformation (reminiscent of our mode calculation in Appendix A) 
and do not seem to be as sharp as our Theorem~\ref{theorem_3}.

The wave equation in the de Sitter spacetime having flat
$3$-dimensional spatial sections was considered in Rendall \cite{Ren}.
There it was shown that the time derivative
$\partial_t \phi=:\dot{\phi}$ decays at least as
$e^{-Ht}=(a(t))^{-1}$, where $H=\sqrt{\Lambda/3}$ is the Hubble
constant, and $\Lambda>0$ is the cosmological constant. Moreover, it was
conjectured that the decay is of the order
$e^{-2Ht}=(a(t))^{-2}$. The almost-exact conjectured decay rate 
of $|\dot{\phi}|\lesssim (a(t))^{-2+\delta}$ (where $\delta>0$ can be chosen arbitrarily at the outset) 
follows as a corollary of a result shown recently \cite[Remark~1.1]{CNO}. We
improve this result, to obtaining full conformity with Rendall's conjecture, in
our result Theorem~\ref{theorem_2} below.

Finally, from the pure mathematical perspective, analysis of linear
wave equations on Lorentzian manifolds is a natural topic of study
within the realm of hyperbolic partial differential equations and 
differential geometry; see for example \cite{A}, \cite[\S7, Chap.2]{Tay}.

A naive heuristic indication of the effect of the accelerated
expansion on the decay of the solution, based on physical energy
considerations, can be obtained as follows. Considering an expanding
FLRW model with flat $n$-dimensional spatial sections of radius
$a(t)$, we have on the one hand that the energy density of a solution
$\phi$ of the Klein-Gordon equation is of the order of $m^2 \phi^2$.
On the other hand, if the wavelength of the particles associated with
$\phi$ follows the expansion, then it is proportional to $a(t)$, and
so the energy varies as
$$
E^2\sim m^2+p^2\sim A+\frac{B}{(a(t))^2},
$$
where $A,B>0$ are constants. Thus  
$\displaystyle 
m^2 \phi^2  (a(t))^n \;\propto\; \left(A +\frac{B}{(a(t))^2}\right), 
$  
giving
$$
m^2 \phi^2  \sim (a(t))^{-n} \left(A+\frac{B}{(a(t))^2} \right).
$$
As $\dot{a}\geq 0$ (expanding FLRW spacetime), the term $
A+\frac{B}{(a(t))^2}
$ 
approaches a finite positive value, and so one may expect
$$
\phi\sim (a(t))^{-\frac{n}{2}}.
$$
We will find out that in fact things are much more complicated: this
decay rate is valid only for $|m|\geq \frac{n}{2}$.  In order to
obtain precise conjectures on the expected decay, we will consider
Fourier modes for spatially-periodic solutions to the Klein-Gordon equation, or
equivalently, consider the expanding de Sitter universe in flat FLRW form   with toroidal
spatial sections. This exercise already demonstrates that the
underlying decay mechanism is the cosmological expansion, as opposed
to dispersion. The Fourier mode analysis, which is
peripheral to the rest of the paper, is relegated to Appendix~A.

In the cosmological region of the Reissner-Nordstr\"om-de Sitter spacetimes, 
the expanding region is foliated by spacelike hypersurfaces of `constant $r$'. 
One expects the decay rate with respect to $r$, for the solution to the Klein-Gordon equation, 
in the cosmological region of the Reissner-Nordstr\"om-de Sitter spacetime, 
to be the same as the one for the de Sitter universe in flat FLRW form, when $e^t$ is replaced by $r$. We show that this 
expectation is correct, and a suitable modification of the technique used in the 
case of the de Sitter universe in flat FLRW form, does enable one to obtain the expected decay rates also for the 
case of the Reissner-Nordstr\"om-de Sitter spacetime. 

\goodbreak 

\medskip 

\noindent 
Our main results are as follows:
\begin{itemize}
\item Theorem~\ref{theorem_2} considers the $m=0$ case (wave equation), and we obtain a decay estimate on $\partial_t{\phi}$,   improving a corollary of
  \cite[Theorem~1]{CNO}, and  proving the aforementioned Rendall's
  conjecture. 

\item Theorem~\ref{theorem_2b} improves \cite[Theorem~2]{CNO}, and we obtain a decay estimate on $\partial_r \phi$, using a similar method to the 
one we use for proving Rendall's conjecture.

\item Theorem~\ref{theorem_3} gives the decay rate of the
  solutions $\phi$  to the Klein-Gordon equation in the de Sitter universe in flat FLRW form.
\item Theorem~\ref{theorem_4} gives the decay rate of the
  solutions $\phi$ to the Klein-Gordon equation in the
  cosmological region of the RNdS model.
\end{itemize}
Theorems~\ref{theorem_2},  \ref{theorem_3}, \ref{theorem_4}, \ref{theorem_2b} are stated
and proved in Sections \ref{Section_Rendall},  \ref{Section_FLRW},
\ref{Section_RNdS}, \ref{Section_thm_2b}, respectively.  The Fourier mode analysis for
spatially-periodic solutions to the Klein-Gordon equation is given in Appendix A,
while Appendix B contains a technical lemma which is needed in the proof 
of Theorem~\ref{theorem_3}. 
Finally, in Appendix C, we establish the sharpness of the bound of the $|m|=\frac{n}{2}$ case of 
Theorem~\ref{theorem_3}.

\subsection{Relation of our results to previous work} 
Our decay rates for the Klein-Gordon equation solutions in the case of the de Sitter universe can be retrieved from the article \cite{Vas} by setting $x=e^{-t}$, $Y=\mathbb{R}^n$ therein. However, the methods used are entirely different: our proof in this case is more explicit, and more elementary (relying on energy methods, rather than technical tools from microlocal analysis of partial differential operators). 

In the article \cite{Gaj}, the Klein-Gordon equation is studied in the Nariai spacetime using energy methods, and en route it is also established that solutions of the Klein-Gordon equation decay exponentially in the de Sitter case (with spherical spatial sections). However, the decay rates are not given explicitly. 

The article \cite{DafRod} contains a general discussion of redshift estimates,  
which we use to prove our results in the context of the Reissner-Nordstr\"om-de Sitter spacetime. 
Similar estimates are used in the article \cite{S} to study the wave equation in the Schwarzschild-de Sitter spacetime, of which the Reissner-Nordstr\"om-de Sitter spacetime is a perturbation for large radius. Nevertheless, we do not appeal to these results, and instead of extracting what we need from these sources, we give a less technical, self-contained derivation for the convenience of the reader in \S\ref{Subsection_redshift_estimates}. Here we follow \cite{CNO} (where a 
similar derivation was given for the wave equation). 

\medskip 

\noindent {\bf Acknowledgements:} We thank Pedro Gir\~{a}o for suggesting the idea behind the proofs of Theorems 2.2 and 5.3. 
JN was partially supported by FCT/Portugal through UID/MAT/04459/2013 and grant (GPSEinstein) PTDC/MAT-ANA/1275/2014.

\section{Decay in the de Sitter universe in flat FLRW form; $m=0$}
\label{Section_Rendall}

\noindent 
In \cite[Theorem~1]{CNO}, the following result was shown:

\begin{theorem}$\;$\label{theorem_1}

\noindent 
Suppose that 
\begin{itemize}
\item $\delta>0$,
\item $I\subset \mR$ is an open interval of the form $(t_*,+\infty)$, $t_0\in I$, 
\item $a(\cdot) \in C^1(I)$ with $\dot{a}(t)\geq 0$ for $t\geq t_0$, and
   $\epsilon>0$ is such that
 
 \smallskip 
 
 \noindent $
 \phantom{aaaaaaaaaaaaaaaa}\displaystyle \int_{t_0}^\infty \frac{1}{(a(t))^\epsilon} dt <+\infty,
 $ 
\item $n\geq 2$, 
\item $(M,g)$ is an expanding {\em FLRW} spacetime with flat
  $n$-dimensional 
  
  \noindent sections, given by $I\times \mR^n$, with the metric
   \begin{equation}
   \label{equation_10_august_2020_16:53}
   g=-dt^2 +(a(t))^2 \left( (dx^1)^2+\cdots+(dx^n)^2\right),
   \end{equation}
\item $k>\frac{n}{2}+2$, $\;\;\phi_0\in H^k(\mR^n)$,
   $\;\;\phi_1\in H^{k-1}(\mR^n)$, and
\item $\phi$ is a smooth solution to the Cauchy problem 
 $$
 \left\{ \begin{array}{rcll}
          \square_g \phi&=&0,& \;\;\;\;(t\geq t_0,\;\bbx\in \mR^n), \\
          \phi(t_0,\bbx)&=& \phi_0(\bbx) &\;\;\;\;(\bbx\in \mR^n),\\
          \partial_t \phi(t_0,\bbx)&=& \phi_1(\bbx) & \;\;\;\;(\bbx\in \mR^n).
         \end{array}\right.
 $$
\end{itemize}
Then 
$$
\forall t\geq t_0,\;\;\;\;\|\partial_t \phi(t,\cdot)\|_{L^\infty(\mR^n)}\lesssim \left(a(t)\right)^{-2+\epsilon+\delta}.
$$
\end{theorem}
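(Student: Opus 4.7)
My approach combines a hierarchy of spatial-derivative energy estimates with the integral representation obtained by rewriting $\square_g\phi=0$ as $\partial_t(a^n\,\partial_t\phi)=a^{n-2}\,\Delta\phi$, together with the integrability hypothesis on $a^{-\epsilon}$ to absorb the $a^{n-2}$ weight in the resulting kernel.

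Because the coefficients of the metric \eqref{equation_10_august_2020_16:53} depend only on $t$, for every multi-index $\beta$ with $|\beta|\leq k-1$ the function $\phi_\beta:=\partial_\bbx^\beta\phi$ again satisfies $\square_g\phi_\beta=0$. Multiplying by $2a(t)^2\,\partial_t\phi_\beta$, integrating over $\mR^n$, and integrating by parts in $\bbx$ yields
\[
\frac{d}{dt}\int_{\mR^n}\bigl(a(t)^2|\partial_t\phi_\beta|^2+|\nabla_\bbx\phi_\beta|^2\bigr)\,d\bbx\;=\;-2(n-1)\,a(t)\,\dot a(t)\int_{\mR^n}|\partial_t\phi_\beta|^2\,d\bbx\;\leq\;0,
\]
so each of these energies is non-increasing. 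Summing over $|\beta|\leq k-1$ gives uniform-in-$t$ bounds on $\|\nabla_\bbx^j\phi(t,\cdot)\|_{L^2(\mR^n)}$ for $j=1,\ldots,k$, and since $k-2>n/2$ the Sobolev embedding $H^{k-2}(\mR^n)\hookrightarrow L^\infty(\mR^n)$ delivers the key pointwise bound $\sup_{t\geq t_0}\|\Delta\phi(t,\cdot)\|_{L^\infty(\mR^n)}\leq C$.

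Rewriting the equation as $\partial_t(a^n\,\partial_t\phi)=a^{n-2}\,\Delta\phi$ and integrating from $t_0$ to $t$ produces
\[
\partial_t\phi(t,\bbx)\;=\;\frac{a(t_0)^n}{a(t)^n}\,\partial_t\phi(t_0,\bbx)\;+\;\frac{1}{a(t)^n}\int_{t_0}^{t}a(s)^{n-2}\,\Delta\phi(s,\bbx)\,ds.
\]
Sobolev embedding applied to $\phi_1\in H^{k-1}(\mR^n)$ bounds the first term by $Ca(t)^{-n}\leq Ca(t)^{-2}$. For the second, monotonicity of $a$ and $n\geq 2$ yield $a(s)^{n-2+\epsilon}\leq a(t)^{n-2+\epsilon}$ for $t_0\leq s\leq t$, so that
\[
\frac{1}{a(t)^n}\int_{t_0}^{t}a(s)^{n-2}\,\|\Delta\phi(s,\cdot)\|_{L^\infty}\,ds\;\leq\;\frac{C}{a(t)^n}\,a(t)^{n-2+\epsilon}\int_{t_0}^{\infty}a(s)^{-\epsilon}\,ds\;\lesssim\;a(t)^{-2+\epsilon},
\]
which is the desired rate.

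The main obstacle is the uniform $L^\infty$ control of $\Delta\phi$; once that is in hand, the integral representation delivers the exponent $-2+\epsilon$ almost for free, and the extra $\delta$ in the statement is absorbed trivially. Its presence in the theorem presumably reflects a slightly weaker intermediate estimate (such as $\|\Delta\phi\|_{L^\infty}\lesssim a(t)^\delta$ produced by an interpolation argument) that one might be forced to use in settings with less symmetry or less regularity than the present one.
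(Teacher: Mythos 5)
Your argument is mathematically sound, but it is worth being clear about what it is: Theorem~\ref{theorem_1} is quoted from \cite{CNO}, and the proof there is a different (Gr\"onwall-based) energy-hierarchy argument that genuinely loses the $\delta$. What you have written is essentially the paper's own \emph{improvement} technique — the one used in Theorem~\ref{theorem_2} — transplanted to a general scale factor: first secure $\|\Delta\phi(t,\cdot)\|_{L^\infty}\lesssim 1$ by commuting with spatial derivatives and using the monotone energy $\int(a^2\dot\phi_\beta^2+|\nabla\phi_\beta|^2)$ (your identity $\frac{d}{dt}E_\beta=-2(n-1)a\dot a\int\dot\phi_\beta^2\leq 0$ is the same monotonicity the paper obtains from the multiplier $X=a^{2-n}\partial_t$ and the divergence theorem), then integrate $\partial_t(a^n\partial_t\phi)=a^{n-2}\Delta\phi$ in time; your $t$-integration is the paper's conformal-time integration after the substitution $dt=a\,d\tau$. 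The payoff is that you get the rate $a(t)^{-2+\epsilon}$ with no $\delta$-loss, which for $a(t)=e^t$ is exactly Rendall's conjectured rate.

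Two caveats. First, a mismatch with the precise statement rather than an error: the paper defines $\lesssim$ here to mean a constant $C(\delta)$ \emph{independent of $\epsilon$}, whereas your constant is $C\int_{t_0}^{\infty}a(s)^{-\epsilon}\,ds$, which depends on $\epsilon$ and blows up as $\epsilon$ decreases to the infimum of admissible exponents. So you have proved a statement that is sharper in the exponent but weaker in its uniformity in $\epsilon$; the factor $a(t)^{\delta}$ in the theorem is precisely what absorbs this $\epsilon$-dependence in \cite{CNO}, and your bound does not directly imply theirs (nor conversely). For any fixed admissible $\epsilon$ this is immaterial. Second, your integration by parts $\int_{\mR^n}\dot\phi_\beta\,\Delta\phi_\beta=-\int_{\mR^n}\langle\nabla\dot\phi_\beta,\nabla\phi_\beta\rangle$ discards a boundary term at spatial infinity without justification; for $H^k$ data the solution need not decay pointwise, and this is exactly why the paper runs the energy estimate on future domains of dependence $D^+(B_0)$ and lets $R\to\infty$ (and, where it does integrate by parts over all of $\mR^n$ in Section~\ref{Section_FLRW}, it proves Lemma~\ref{technical_lemma_surface_integral} to kill the surface terms). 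You should either adopt the domain-of-dependence argument or invoke such a trace lemma, noting that for the top-order commuted quantities ($|\beta|=k-1$) the fields $\dot\phi_\beta,\nabla\phi_\beta$ are only in $L^2$, so the trace-lemma route requires an approximation step there.
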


\noindent 
Here, the symbol $\lesssim $ is used to mean that there exists
a constant $C(\delta)$, independent of $\epsilon$, such that
$$
\|\partial_t \phi(t,\cdot)\|_{L^\infty(\mR^n)}\leq C(\delta) \left(a(t)\right)^{-2+\epsilon+\delta}.
$$
We also use the standard notation $H^k(\mR^n)$ for the Sobolev space, 
$$
\|\phi\|^2_{H^k(\mR^n)} := \int_{\mR^n} \sum_{|\balpha|\leq k} (\partial_\balpha\phi)^2 d^n\bbx<+\infty \;\;\textrm{ for } \phi \in H^k(\mR^n),
$$
where 
\begin{eqnarray*}
 \balpha&=&(\alpha_1,\cdots,\alpha_n)\in \mN_0^n, \quad \mN_0=\{0,1,2,3,\cdots\},\\ 
 |\balpha |&:=&\alpha_1+\cdots+\alpha_n, \textrm{ and}\\ 
 \partial_\balpha&= &(\partial_{x_1})^{\alpha_1}\cdots (\partial_{x_n})^{\alpha_n};
\end{eqnarray*}
see for example \cite[p.249]{Wal} or \cite[Chap. 4]{Tay}.

\begin{remark}[Smoothness assumption on the solution $\phi$] $\;$

\smallskip

\noindent 
In Theorem~\ref{theorem_1} (and later also in Theorems~\ref{theorem_2}, \ref{theorem_3}, \ref{theorem_4}, \ref{theorem_CNO_2}, \ref{theorem_2b}), we will assume, for the 
sake of simplicity of exposition, that the solution $\phi$ to the wave/Klein-Gordon equation is smooth. However, these theorems are also true without this assumption. 
To see this, we note that  for non-smooth solutions with initial data in $H^k \times H^{k-1}$, we can approximate the initial data 
by smooth functions in $H^k \times H^{k-1}$, prove the bounds for the $H^k$ norms of the corresponding solutions, 
and then take limits. Since the solution of the problem with rough initial 
data is in $C^0(I, H^k) \cap \;\!C^1(I, H^{k-1})$, these bounds will continue to be true in the limit, 
and we can then use the Sobolev embedding theorem. This enables one to drop the smoothness assumption. 
\end{remark}

\bigskip 

\noindent In Theorem~\ref{theorem_1} above, in particular, if 
$$
a(t)=e^{Ht},
$$
where $H$ is the Hubble constant,
then since $\epsilon>0$ can be taken to be arbitrarily small, we
obtain
$$
\|\partial_t \phi(t,\cdot)\|_{L^\infty(\mR^n)} \lesssim (a(t))^{-2+\delta}= e^{-(2-\delta)Ht},
$$
and this is in agreement with Rendall's conjecture up to the small
quantity $\delta>0$.  We will show below that in fact one gets the
exact rate $(a(t))^{-2}$ when $n>2$.  There is no loss of generality
in assuming that $H=1$.  Our result is the following.

\goodbreak

\begin{theorem}$\;$\label{theorem_2}

\noindent 
Suppose that 
\begin{itemize}
\item $I\subset \mR$ is an open interval of the form $(t_*,+\infty)$, $t_0\in I$, 
\item $n> 2$, 
\item $(M,g)$ is the expanding de Sitter universe in flat FLRW form, with flat
  $n$-dimensional sections, given by $I\times \mR^n$, with the metric
  $$
   g=-dt^2 +e^{2t} \left( (dx^1)^2+\cdots+(dx^n)^2\right),
  $$
\item $k>\frac{n}{2}+2$, $\;\;\phi_0\in H^k(\mR^n)$, $\;\;\phi_1\in H^{k-1}(\mR^n)$, and 
\item $\phi$ is a smooth solution to the Cauchy problem 
 $$
 \left\{ \begin{array}{rcll}
          \square_g \phi&=&0,&\;\; \;\;(t\geq t_0,\;\bbx\in \mR^n), \\
          \phi(t_0,\bbx)&=& \phi_0(\bbx) &\;\;\;\;(\bbx\in \mR^n),\\
          \partial_t \phi(t_0,\bbx)&=& \phi_1(\bbx) &\;\;\;\; (\bbx\in \mR^n).
         \end{array}\right.
 $$
\end{itemize}
Then 
$$
\forall t\geq t_0,\;\;\;\;\|\partial_t \phi(t,\cdot)\|_{L^\infty(\mR^n)}\lesssim \left(a(t)\right)^{-2}.
$$
\end{theorem}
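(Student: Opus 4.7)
The plan is to integrate the wave equation once in $t$ after establishing a uniform-in-time $L^\infty$ bound on the spatial Laplacian of $\phi$. In the given coordinates, $\square_g\phi=0$ reads
$$
\partial_t^2\phi+n\,\partial_t\phi-e^{-2t}\Delta\phi=0,
$$
where $\Delta=(\partial_{x^1})^2+\cdots+(\partial_{x^n})^2$, and this rearranges to
$$
\partial_t\bigl(e^{nt}\partial_t\phi\bigr)=e^{(n-2)t}\Delta\phi.
$$
Integrating in time from $t_0$ to $t$ and taking $L^\infty(\mR^n)$ norms in $\bbx$, if one has $\sup_{s\geq t_0}\|\Delta\phi(s,\cdot)\|_{L^\infty(\mR^n)}\leq C$, then for $n>2$,
$$
\|\partial_t\phi(t,\cdot)\|_{L^\infty(\mR^n)}\leq e^{-n(t-t_0)}\|\phi_1\|_{L^\infty(\mR^n)}+\frac{C}{n-2}\,e^{-2t}\lesssim (a(t))^{-2},
$$
where $\phi_1\in L^\infty(\mR^n)$ by Sobolev embedding, since $\phi_1\in H^{k-1}(\mR^n)$ with $k-1>n/2$.

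The main work is therefore to establish the uniform bound on $\Delta\phi$. The key step is the choice of a weighted energy
$$
F(t)=\frac{1}{2}\int_{\mR^n}\bigl(e^{2t}(\partial_t\phi)^2+|\nabla\phi|^2\bigr)\,d^n\bbx,
$$
which arises upon multiplying $\square_g\phi=0$ by $e^{2t}\partial_t\phi$ rather than by the conventional volume-form multiplier $e^{nt}\partial_t\phi$. A direct computation, using integration by parts in $\bbx$ on $\int\partial_t\phi\,\Delta\phi\,d^n\bbx$, yields
$$
\dot F(t)=-(n-1)\int_{\mR^n} e^{2t}(\partial_t\phi)^2\,d^n\bbx\leq 0,
$$
so $F$ is non-increasing and, in particular, $\|\nabla\phi(t,\cdot)\|_{L^2(\mR^n)}\leq\sqrt{2F(t_0)}$ uniformly in $t\geq t_0$.

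Because the equation is translation-invariant in $\bbx$, each spatial derivative $\partial_{\balpha}\phi$ with $|\balpha|\leq k-1$ solves the same wave equation. Applying the energy estimate to every such $\partial_{\balpha}\phi$ yields a uniform-in-$t$ bound on $\|\nabla\phi(t,\cdot)\|_{H^{k-1}(\mR^n)}$, and hence on $\|\Delta\phi(t,\cdot)\|_{H^{k-2}(\mR^n)}$, in terms of $\|\phi_0\|_{H^k(\mR^n)}$, $\|\phi_1\|_{H^{k-1}(\mR^n)}$, and $t_0$. Since $k-2>n/2$, the Sobolev embedding $H^{k-2}(\mR^n)\hookrightarrow L^\infty(\mR^n)$ supplies the desired uniform $L^\infty$ bound on $\Delta\phi$, closing the argument.

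The main obstacle is identifying the correct energy weight. The standard volume-form energy (weight $e^{nt}$) is not monotone: its time derivative retains a $|\nabla\phi|^2$ term with sign $\tfrac{n}{2}-1>0$ whenever $n>2$, so it cannot be used directly to control $\|\nabla\phi\|_{L^2}$. The weight $e^{2t}$ is the unique one for which the $|\nabla\phi|^2$ cross-term in $\dot F$ vanishes identically, and this algebraic cancellation is what makes monotonicity possible. The hypothesis $n>2$ also enters at the final integration step, where one needs $\int_{t_0}^t e^{(n-2)s}\,ds$ to be of order $e^{(n-2)t}$ rather than exhibiting the logarithmic growth that occurs at the borderline case $n=2$.
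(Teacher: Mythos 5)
Your proposal is correct and follows essentially the same route as the paper: your weighted energy $F$ is exactly the paper's energy $E(t)=\int \tfrac12\left(a^2\dot\phi^2+|\nabla\phi|^2\right)d^n\bbx$ associated with the multiplier $X=a^{2-n}\partial_t$ (same monotonicity, same commutation with spatial derivatives and Sobolev embedding to bound $\Delta\phi$ in $L^\infty$), and your time integration of $\partial_t\bigl(e^{nt}\partial_t\phi\bigr)=e^{(n-2)t}\Delta\phi$ is the paper's Step 3 carried out directly in $t$ rather than in conformal time $\tau$. The only point the paper treats more carefully is the justification of discarding the boundary term in the spatial integration by parts (done there via the divergence theorem on domains of dependence with a past-pointing causal current, or via the trace lemma of Appendix B), but that is a technicality rather than a different method.
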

\begin{proof} $\;$ We proceed in several steps. 
 
\smallskip 
 
\noindent {\bf Step 1: Bound on $\Delta \phi$.} 

\smallskip 

\noindent We will follow the preliminary steps of the proof of
\cite[Theorem~1]{CNO} in order to obtain a bound on $\Delta \phi$,
which will be needed in the proof of our Theorem~\ref{theorem_2}. 
We repeat this preliminary step here from \cite[\S2.2]{CNO} for the 
sake of completeness and for the convenience of the reader.

\smallskip 

\noindent For a vector field $X=X^\mu \partial_\mu$, it can be shown that 
$$
\nabla_\mu X^\mu =\frac{1}{\sqrt{-g}}\partial_\mu (\sqrt{-g} \;\!X^\mu),
$$
where $g:=\det [g_{\mu \nu}]$ is the determinant of the matrix
$[g_{\mu\nu}]$ describing the metric in the chart.  Then it follows
that
$$
\square_g  \phi=\nabla_\mu (\partial^\mu \phi)
=
\frac{1}{\sqrt{-g}} \partial_\mu(\sqrt{-g} \;\!\partial^\mu \phi).
$$
Thus $\square_g \phi=0$ can be rewritten as
$\partial_\mu(\sqrt{-g} \;\!\partial^\mu \phi)=0$.  With the metric for the de Sitter universe in flat FLRW form
 given by
$$
g=-dt^2+(a(t))^2 \left( (dx^1)^2+\cdots +(dx^n)^2\right),
$$
the wave equation can be rewritten as 
$$
\partial_\mu(a^n \partial^\mu \phi)=0,
$$
that is,
$$
-\ddot{\phi}-\frac{n\dot{a}}{a}\dot{\phi}
+\frac{1}{a^2} \delta^{ij} \partial_i \partial_j \phi=0.
$$
We recall (see e.g. \cite[Appendix~E]{Wal}) that the
energy-momentum tensor for the wave equation is
\begin{equation}
\label{eq_14_August_12:16}
T_{\mu\nu}=\partial_\mu \phi \partial _\nu \phi -
\frac{g_{\mu\nu}}{2} \partial_\alpha \phi \partial^\alpha \phi .
\end{equation}
Then it can be shown that 
$$
\nabla_\mu T^{\mu \nu}=0.
$$
From \eqref{eq_14_August_12:16}, have in particular that 
$$
T_{00}=\frac{1}{2} \left(\dot{\phi}^2 +a^{-2} \delta^{ij} \partial_i \phi \partial_j \phi\right).
$$
Define the vector field
$$
X=a^{2-n} \frac{\partial }{\partial t}.
$$
Then $X$ is future-pointing ($g(X,\partial_t)<0$) and causal ($X$ is
time-like since $g(X,X)<0$).  We form the current $J$, given by
$$
J_\mu=T_{\mu \nu} X^\nu.
$$
Then it can be shown that 
$$
J=(X\cdot \phi)\textrm{grad }\phi-\frac{1}{2} g( \textrm{grad } \phi,\textrm{grad }\phi)X.
$$
(Here $X\cdot \phi$ means the application of the vector field $X$ on $\phi$.) 
It follows that $g(J,J)\leq 0$, so that $J$ is causal. Also,  $J$ is 
past-pointing. To see this, we choose $E_1,\cdots,E_n$ orthogonal and spacelike 
such that $\{X,E_1,\cdots, E_n\}$ forms an orthogonal basis in each tangent space. 
Then expressing 
$$
\textrm{grad } \phi=c^0X+c^1E_1+\cdots c^n E_n,
$$
we obtain 
{\small 
\begin{eqnarray*}
g(J,X)\!\!&\!\!=\!\!&\!\!\!\left(g(X,\textrm{grad } \phi)\right)^2
-\frac{1}{2} g(\textrm{grad } \phi,\textrm{grad }\phi)\cdot g(X,X)\\
&\!\!=\!\!& \!\!\!\frac{(c^0)^2}{2} \!\left(g(X,X)\right)^2\!
-\!\frac{1}{2} \left((c^1)^2g(E_1,E_1)\!
+\!\cdots\!+\!(c^n)^2 g(E_n,E_n)\right) \cdot g(X,X)\!\geq \!0.
\end{eqnarray*}}

\noindent Set 
$$
N=\frac{\partial}{\partial t},
$$
the future unit normal vector field.  We define the energy $E$ by
$$
E(t)=\int_{\{t\}\times \mR^n} J_\mu N^\mu 
=\int_{\mR^n} a^2 T_{00} d^n \bbx
=\int_{\mR^n} \frac{1}{2} \left( a^2 \dot{\phi}^2 
+\delta^{ij} \partial_i \partial_j \phi\right) d^n \bbx.
$$
The deformation tensor $\Pi$ associated with the multiplier $X$ is 
$$
\Pi =\frac{1}{2} \calL_X g=-dt \calL_X dt +\dot{a} a^{3-n} \delta_{ij} dx^i dx^j.
$$
It can be shown that 
$$
\calL_X dt =(2-n)\dot{a} a^{1-n} dt.
$$
Thus 
$$
\Pi= (n-2) \dot{a} a^{1-n} dt^2 +\dot{a} a^{3-n} \delta_{ij} dx^i dx^j.
$$

\medskip 

\noindent {\bf Claim:} $\nabla_\mu J^\mu=T^{\mu\nu}\Pi_{\mu\nu}.$

\medskip

\noindent We have 
\begin{eqnarray*}
\Pi_{\mu \nu}
&=&\frac{1}{2} (\calL_X g)(\partial_\mu,\partial_\nu)
=\frac{1}{2} \left(\calL_X(g_{\mu\nu}) 
 -g(\calL_X \partial_\mu, \partial_\nu)-g(\partial_\mu, \calL_X\partial_\nu)\right)\\
&=& \frac{1}{2} \left( X(g_{\mu\nu})-g([X,\partial_\mu],\partial_\nu)-g(\partial_\mu,[X,\partial_\nu])\right).
\end{eqnarray*}
But by the definition of the Levi-Civita connection $\nabla$, 
\begin{eqnarray*}
 g(\nabla_{\partial_\mu} \partial_\nu,X)&=& 
 \frac{1}{2}\left( \partial_\mu (g(\partial_\nu,X))+\partial_\nu(g(\partial_\mu,X))-X(g(\partial_\mu,\partial_\nu))\right.\\
&&\left.+g([X,\partial_\mu],\partial_\nu)+g([X,\partial_\nu],\partial_\mu)+g(X,[\partial_\mu,\partial_\nu])\right)\phantom{\frac{1}{2}}\\
&=& \frac{1}{2} \left(\partial_\mu (g(\partial_\nu,X))+\partial_\nu (g(\partial_\mu,X))\right) 
\\
&&-\frac{1}{2}\left( X(g_{\mu\nu})-g([X,\partial_\mu],\partial_\nu)-g(\partial_\mu,[X,\partial_\nu])\right).
\end{eqnarray*}
So 
\begin{eqnarray*}
 \Pi_{\mu\nu}&=& \frac{1}{2}  \left( X(g_{\mu\nu})-g([X,\partial_\mu],\partial_\nu)-g(\partial_\mu,[X,\partial_\nu])\right)\\
 &=& \frac{1}{2}\left( \partial_\mu (g(\partial_\nu,X))+\partial_\nu (g(\partial_\mu,X))\right)-g(\nabla_\mu \partial_\nu,X).
\end{eqnarray*}
Writing $X=k\partial_0$, where $k(t):=(a(t))^{2-n}$, we have 
$$
 \Pi_{\mu\nu}= \frac{1}{2}  \left( \partial_\mu(kg_{\nu 0})+\partial _\nu(kg_{\mu 0})\right)-g(\nabla_\mu \partial_\nu,k\partial_0)
 = -\dot{k}\delta_{\mu 0}\delta_{\nu 0}+k\Gamma_{\mu \nu}^0.
 $$
 We have 
  $$
 \nabla_\mu X_\nu =\nabla _\mu (-k \delta_{\nu 0})=-\dot{k} \delta_{\mu 0} \delta_{\nu 0}+k\Gamma_{\mu \nu}^0.
 $$ 
 So $\Pi_{\mu\nu}=\nabla_\mu X_\nu$, and consequently, 
  $$
 \nabla_\mu J^\mu =\nabla_\mu (T^{\mu \nu}X_\nu)=0+T^{\mu \nu}\nabla_\mu X_\nu=T^{\mu \nu}\Pi_{\mu\nu}.
 $$ 
 This completes the proof of our claim that $\nabla_\mu J^\mu=T^{\mu\nu}\Pi_{\mu\nu}.$

 \medskip 
 
\noindent So the `bulk term' is
\begin{eqnarray*}
 \nabla_\mu J^\mu =T^{\mu\nu}\Pi_{\mu \nu} 
 &=&
 (n-2)\dot{a}a^{1-n} \dot{\phi}^2+\frac{n-2}{2} \dot{a} a^{1-n} \partial_\alpha \phi \partial^\alpha \phi 
 \\
 &&+ \dot{a}a^{-1-n} \delta^{ij} \partial_i \phi \partial_j\phi-\frac{n}{2} \dot{a}a^{1-n} \partial_\alpha  \phi \partial^\alpha \phi\\
 &=& (n-1)\dot{a} a^{1-n}\dot{\phi}^2 \geq 0.\phantom{\frac{1}{2}}
 \end{eqnarray*}
For each $R>0$, define the set 
$$
B_0:=\{(t_0,\bbx)\in I\times \mR^n: \langle \bbx,\bbx \rangle_{\scriptscriptstyle \mR^n} \leq R^2\}.
$$
The future domain of dependence of $B_{0}$ is the set
$$
D^+(B_{0}):=\left\{ p\in M\;\Big|\; \begin{array}{ll} \textrm{Every past inextendible causal curve}\\ \textrm{through }p \textrm{ intersects } B_{0}.\end{array}\right\}.
$$ 
(Here by a {\em causal curve}, we mean one whose tangent vector at
each point is a causal vector.  A curve $c:(a,b)\rightarrow M$ which
is smooth and future directed\footnote{That is, $\dot{c}$ is
  future-pointing.} is called {\em past inextendible} if
$\lim\limits_{t\rightarrow a}c(t)$ does not exist.)

\smallskip 

\noindent 
Let $t_1>t_0$. We will now apply the divergence theorem to the region 
$$
\calR:=D^+(B_{0})\;\!\cap\;\! \{(t,\bbx)\in M: t\leq t_1\}.
$$
For preliminaries on the divergence theorem in the context of a
time-oriented Lorentzian manifold, we refer the reader to
\cite[Appendix~B]{Wal}. We have
$$
\int_\calR (\nabla_\mu J^\mu) \epsilon =\int_{\partial \calR} J\intprod \epsilon,
$$
where $\partial\calR$ denotes the boundary of $\calR$, $\epsilon$ is
the volume form on $M$ induced by $g$, and $\intprod$ denotes
contraction in the first index.
%

\begin{figure}[h]
      \includegraphics[width=8.1 cm]{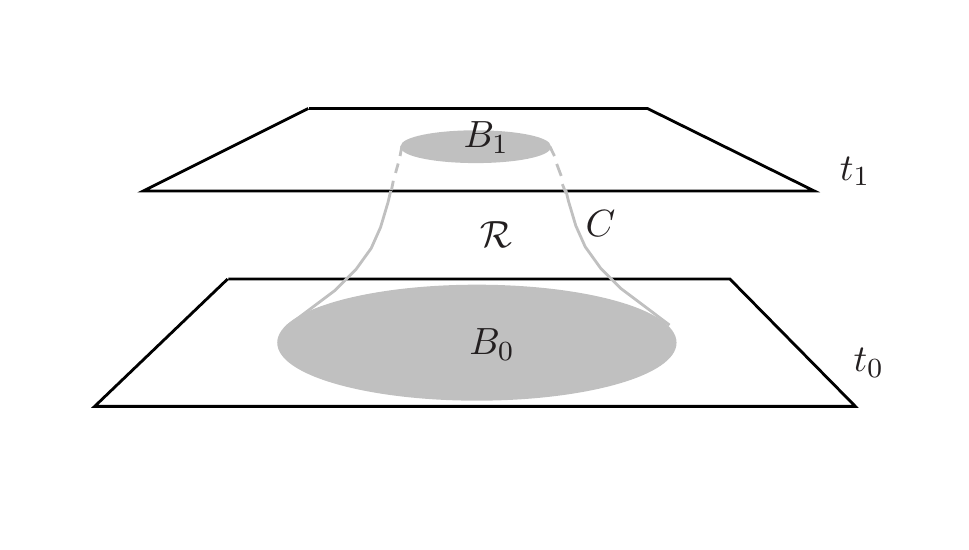}
\end{figure}

\noindent 
Since $J$ is past-pointing, the boundary integral over the null
portion $C$ of the boundary $\partial \calR$ is nonpositive. Also,
because $\nabla_\mu J^\mu$ is nonnegative, we have that the volume
integral over $\calR$ is nonnegative. This gives an inequality on the
two boundary integrals, one over $B_0$, and the other over
$B_1:=D^+(B_0) \cap \{t=t_1\}$, as follows:
$$
\int_{B_0} \frac{1}{2} (a^2 \dot{\phi}^2 +\delta^{ij} \partial_i \phi \partial_j \phi)d^n \bbx 
\geq 
\int_{B_1} \frac{1}{2} (a^2 \dot{\phi}^2 +\delta^{ij} \partial_i \phi \partial_j \phi)d^n \bbx .
$$
Passing the limit $R\rightarrow \infty$ yields 
 $
E(t_0)\geq E(t_1).
$ 
As the choice of $t_1>t_0$ was arbitrary, we have 
$$
\forall t\geq t_0, \;\;E(t)\leq E(t_0)<\infty.
$$
The finiteness of $E(t_0)$ follows from our assumption that
$\phi_0\in H^k(\mR^n)$ and $\phi_1\in H^{k-1}( \mR^n)$ for a $k$
satisfying $k>\frac{n}{2}+2\geq 1$.  From here, it follows that for all
$t \geq t_0$,
\begin{eqnarray*}
&& \int_{\mR^n} \dot{\phi}^2 d^n \bbx \lesssim \frac{1}{a^2},\;\;\textrm{ and}\\
&& \int_{\mR^n} \delta^{ij} \partial_i \phi \partial_j \phi d^n \bbx \lesssim  1.
\end{eqnarray*}
But since each partial derivative $\partial_i \phi$ is also a solution
of the wave equation, and as $k\geq 2$, we obtain, by applying the
above to the partial derivatives $\partial_i \phi$, that also
$$
\int_{\mR^n} (\Delta \phi)^2 d^n \bbx \lesssim 1.
$$
In fact, since $k>\frac{n}{2}+2$, we also obtain that for a
$k'>\frac{n}{2}$,
$$
\|\Delta \phi\|_{H^{k'} (\mR^n)}\lesssim 1.
$$
Finally, by the Sobolev inequality (see e.g. \cite[(7.30),
p.158]{GT}), we obtain
\begin{equation}
\label{30_Aug_9:48}
\|\Delta \phi\|_{L^\infty(\mR^n)}\lesssim 1.
\end{equation}
This completes Step 1 of the proof of Theorem~\ref{theorem_2}.

\medskip 

\noindent {\bf Step 2: The wave equation in conformal coordinates.} 
$\;$

\smallskip 

\noindent The key point of departure from the earlier derivation of the estimates from \cite{CNO} is 
the usage of `conformal coordinates', which renders the wave equation in a form where it becomes possible to 
integrate, leaving essentially just the time derivative of $\phi$ with other terms (e.g. $\Delta\phi$) for which we have a known bound. An 
application of the triangle inequality will then deliver the desired bound.

Define 
$$
\tau=\int_{t_0}^t \frac{1}{a(s)} ds.
$$
Then 
$$
\displaystyle 
\frac{d\tau}{dt}=\frac{1}{a(t)}\;\;\textrm{ and } \;\; a(t) \displaystyle \frac{d}{dt}=\frac{d}{d\tau}.
$$
With a slight abuse of notation, we write $a(\tau):=a(t(\tau))$.
Then $dt=a(\tau) d\tau$. So
$$
g= -dt^2 +(a(t))^2 \left( (dx^1)^2+\cdots+ (dx^n)^2\right)
= (a(\tau))^2 \left(-d\tau^2 +\delta_{ij} dx^i dx^j\right).
$$
The wave equation $\square_g \phi=0$ can be rewritten as
$\partial_\mu(\sqrt{-g} \;\!\partial^\mu \phi)=0$, which becomes
$$
\partial_\mu (a^{n+1} \partial^\mu \phi)=0.
$$
Separating the partial derivative operators with respect to the $\tau$ and $\bbx$ coordinates, we obtain the
wave equation in conformal coordinates
$$
\partial_\tau (a^{n-1} \partial_\tau \phi)=a^{n-1} \Delta \phi,
$$
where $\Delta$ is the usual Laplacian on $\mR^n$. This completes Step
2 of the proof of Theorem~\ref{theorem_2}.

\medskip 

\goodbreak

\noindent {\bf Step 3: $n>2$ and $a(t)=e^t$.}

\smallskip 

\noindent 
We have 
\begin{equation}
 \label{Aug_30_15:18}
\tau=\int_{t_0}^t \frac{1}{e^s} ds=e^{-t_0}-\frac{1}{e^t}=e^{-t_0}-\frac{1}{a},
\end{equation}
and so
$$
a(\tau)=\displaystyle \frac{1}{e^{-t_0}-\tau}.
$$
We note that $\tau \in [0,e^{-t_0})$. 
Also, 
$$
\displaystyle 
a(\tau=0)=\frac{1}{e^{-t_0}}=e^{t_0}=a(t=t_0).
$$ 
Integrating 
$$
\partial_\tau(a^{n-1}\partial_\tau\phi)=a^{n-1}\Delta \phi
$$
from $\tau=0$ to $\tau$, we obtain 
$$
a^{n-1} \partial_\tau \phi -a(t_0)^{n-1} \left. \partial_\tau \phi\right|_{\tau=0}
=
\int_0^\tau \Delta \phi \frac{1}{(e^{-t_0}-\tau)^{n-1} }d\tau,
$$
and so 
$$
a^{n-1} a\partial_t \phi =a(t_0)^{n-1} a(t_0) \left. \partial_t\phi\right|_{t=t_0}+ \int_0^\tau \Delta \phi \frac{1}{(e^{-t_0}-\tau)^{n-1} }d\tau,
$$
that is, 
$$
\partial_t \phi =(a(t))^{-n} \left(a(t_0)^n \phi_1 +\int_0^\tau \Delta \phi \frac{1}{(e^{-t_0}-\tau)^{n-1} }d\tau \right).
$$
Hence, using the bound from \eqref{30_Aug_9:48}, namely $\|\Delta\phi(t,\cdot)\|_{L^\infty(\mR^n)}\leq C$ for all $t\geq t_0$, we obtain 

\vspace{-0.3cm}

{\small 
\begin{eqnarray*}
 &&\|\partial_t \phi(t,\cdot)\|_{L^\infty(\mR^n)} \phantom{\Big(\frac{a(t_0)}{a(t)}\Big)^{n-2}}
 \\
 &\leq &\!\!\!\! (a(t))^{-n}\!\left(a(t_0)^n \|\phi_1\|_{L^\infty(\mR^n)} +\int_0^\tau \|\Delta\phi(t,\cdot)\|_{L^\infty(\mR^n)} 
 \frac{1}{(e^{-t_0}-\tau)^{n-1} }d\tau \right)\\
 &=& \!\!\!\!(a(t))^{-n}\!\left(a(t_0)^n \|\phi_1\|_{L^\infty(\mR^n)} \!+\!
 \frac{C }{n-2} \left( (e^{-t_0}\!-\!\tau)^{2-n}\! -(e^{-t_0})^{2-n}\right) \!\right)\\
 &=& \!\!\!\!(a(t))^{-n}\!\left(a(t_0)^n \|\phi_1\|_{L^\infty(\mR^n)} \!+\!
 \frac{C }{n-2} \left( (a(t))^{n-2}\! -(a(t_0))^{n-2}\right) \!\right)\\
 &\leq& \!\!\!\!(a(t))^{-n}(a(t))^{n-2}\left(\!\frac{a(t_0)^n \|\phi_1\|_{L^\infty(\mR^n)}}{(a(t))^{n-2}} \!+\!
 \frac{C }{n-2} \left(1\!-\!\Big(\frac{a(t_0)}{a(t)}\Big)^{n-2}\right) \!\right)\\
 &\leq & \frac{1}{(a(t))^2} \left( \frac{a(t_0)^n \|\phi_1\|_{L^\infty(\mR^n)}}{(a(t_0))^{n-2}} \!+\!
 \frac{C }{n-2} \left( 1\!-\!0\right) \!\right).
\end{eqnarray*}}

\vspace{-0.45cm}

\noindent 
\noindent Hence
$$
\|\partial_t \phi(t,\cdot)\|_{L^\infty(\mR^n)}\leq \frac{1}{(a(t))^2} \left( (a(t_0))^2 \|\phi_1\|_{L^\infty(\mR^n)}\!+\!
 \frac{C }{n-2}  \!\right),
 $$
 and so 
 $$
 \|\partial_t \phi(t,\cdot)\|_{L^\infty(\mR^n)}\lesssim (a(t))^{-2}.
 $$
 This completes the proof of Theorem~\ref{theorem_2}.
\end{proof}

\medskip 

\begin{remark} {\bf The case when $n=2$ and $a(t)=e^t$:} 

\smallskip 

\noindent Integrating 
$$
\partial_\tau(a\partial_\tau\phi)=a\Delta \phi
$$
from $\tau=0$ to $\tau$, we obtain 
$$
a \partial_\tau \phi -a(t_0) \left. \partial_\tau \phi\right|_{\tau=0}
=
\int_0^\tau \Delta \phi \frac{1}{e^{-t_0}-\tau }d\tau,
$$
and so 
$$
\partial_t \phi =(a(t))^{-2} \left(a(t_0)^2 \phi_1 +\int_0^\tau \Delta \phi \frac{1}{e^{-t_0}-\tau }d\tau \right).
$$
Hence 
\begin{eqnarray*}
 &&\|\partial_t \phi(t,\cdot)\|_{L^\infty(\mR^2)} \phantom{\frac{a(t_0)}{a(t)}\bigg)^{n-2}}
 \\
 &\leq &\!\!\!\! (a(t))^{-2}\!\left(\!a(t_0)^2 \|\phi_1\|_{L^\infty(\mR^2)} +\int_0^\tau \|\Delta\phi(t,\cdot)\|_{L^\infty(\mR^2)} 
 \frac{1}{e^{-t_0}-\tau}d\tau \right)\phantom{\frac{a(t_0)}{a(t)}\bigg)^{n-2}}\\
 &=& \!\!\!\!(a(t))^{-2}\!\left(\!a(t_0)^2 \|\phi_1\|_{L^\infty(\mR^2)} \!+\!
 C\left( -\log (e^{-t_0}\!-\!\tau)\Big|_0^\tau \right) \!\right)\phantom{\frac{a(t_0)}{a(t)}\bigg)^{n-2}}\\
 &=& \!\!\!\!(a(t))^{-2}(\log a(t))\!\bigg(\!\frac{a(t_0)^2 \|\phi_1\|_{L^\infty(\mR^2)}}{\log a(t)}  \!+\!
 C \Big( 1-\frac{\log a(t_0)}{\log a(t)} \Big) \!\bigg)\\
 &\leq &
 \!\!\!\!(a(t))^{-2}(\log a(t))\!\bigg(\!\frac{a(t_0)^2 \|\phi_1\|_{L^\infty(\mR^2)}}{t_0 }  \!+\!
 C \!\bigg),
\end{eqnarray*}
 and so 
 $$
 \|\partial_t \phi(t,\cdot)\|_{L^\infty(\mR^n)}\lesssim (a(t))^{-2}\log a(t).
 $$
 This can be viewed as an improvement to \cite[Theorem~1]{CNO} in the
 special case when $a(t)=e^t$ and $n=2$, since
 $$
 \log a(t)=t \lesssim e^{\delta t}= 1+ \delta t+\cdots .
 $$
\end{remark}

\medskip 

\begin{remark} {\bf The case when $a(t)=t^p$, $p\geq 1$:}
 
 \smallskip 
 
 \noindent One can prove an analogue of Theorem~\ref{theorem_2} when
 $a(t)=t^p$ as well.  In this case, the $\epsilon$ from
 Theorem~\ref{theorem_1} can be chosen to be any number satisfying
  $$
  \epsilon>\frac{1}{p},
  $$
  and so Theorem~\ref{theorem_1} gives the decay estimate
 $$
 \|\partial_t \phi(t,\cdot)\|_{L^\infty(\mR^n)}\lesssim   \left(a(t)\right)^{-2+\frac{1}{p}+\delta}=t^{-(2p-1-\delta')},
 $$
 where $\delta'>0$ can be chosen arbitrarily. We can improve this to the following:
 $$
 \|\partial_t \phi(t,\cdot)\|_{L^\infty(\mR^n)}\lesssim   \left(a(t)\right)^{-2+\frac{1}{p}}= t^{-(2p-1)}.
 $$
 The proof is the same, mutatis mutandis, as that of Theorem~\ref{theorem_2}.
\end{remark}

\begin{remark}
Using a similar method, one can also obtain an improvement to \cite[Theorem~2]{CNO}. 
But we will postpone this discussion until after Section~\ref{Section_RNdS}, since we will 
need some preliminaries about the RNdS spacetime, which will be established in Section~\ref{Section_RNdS}. 
\end{remark}

\section{Decay in the de Sitter universe in flat FLRW form}
\label{Section_FLRW}

\noindent 
The Klein-Gordon equation is $
\square_g\phi -m^2\phi=0$, 
that is, 
$$
\frac{1}{\sqrt{-g}} \partial_\mu (\sqrt{-g}\; \partial^\mu \phi)-m^2\phi=0.
$$
In the case of the de Sitter universe in flat FLRW form, we obtain 
\begin{equation}
 \label{KG_FLRW}
-\ddot{\phi}-\frac{n\dot{a}}{a}\dot\phi+\frac{1}{a^2} \delta^{ij} \partial_i \partial_j \phi-m^2\phi=0.
\end{equation}

\noindent 
In this section, we will prove Theorem~\ref{theorem_3}.  We arrive at
the guesses for the specific estimates given in Theorem~\ref{theorem_3} below,  
based on an analysis using Fourier modes, assuming  spatially periodic solutions. This 
Fourier mode analysis is given in Appendix A.

\begin{theorem}$\;$\label{theorem_3}

\noindent 
Suppose that  
\begin{itemize}
 \item $I\subset \mR$ is an open interval of the form $(t_*,+\infty)$, $t_0\in I$, 
 \item $m\in \mR$, 
 \item $n> 2$, 
 \item $(M,g)$ is the expanding de Sitter universe in flat FLRW form, with flat $n$-dimensional sections, 
  given by $I\times \mR^n$, with the metric 
   $$
   g=-dt^2 +e^{2t} \left( (dx^1)^2+\cdots+(dx^n)^2\right),
   $$
 \item $k>\frac{n}{2}+2$, $\;\;\phi_0\in H^k(\mR^n)$, $\;\;\phi_1\in H^{k-1}(\mR^n)$, and 
 \item $\phi$ is a smooth solution to the Cauchy problem 
 $$
 \left\{ \begin{array}{rcll}
          \square_g \phi -m^2 \phi &=&0,& \quad (t\geq t_0,\;\bbx\in \mR^n), \\
          \phi(t_0,\bbx)&=& \phi_0(\bbx) &\quad (\bbx\in \mR^n),\\
          \partial_t \phi(t_0,\bbx)&=& \phi_1(\bbx) & \quad (\bbx\in \mR^n).
         \end{array}\right.
         $$
\end{itemize}
Then for all $t\geq t_0$, we have 
$$
\| \phi(t,\cdot)\|_{L^\infty(\mR^n)}\lesssim \left\{ \begin{array}{ll} 
a^{-\frac{n}{2}} & \textrm{if }\;\; |m|>  \frac{n}{2},\phantom{a^{\sqrt{\frac{n^2}{4}-m^2}}}\\
a^{-\frac{n}{2}}\log a & \textrm{if }\;\;|m|=\frac{n}{2},\phantom{a^{\sqrt{\frac{n^2}{4}-m^2}}}\\
a^{-\frac{n}{2}+\sqrt{\frac{n^2}{4}-m^2}} &\textrm{if }\;\;|m|<\frac{n}{2}.
\end{array}\right.
$$
\end{theorem}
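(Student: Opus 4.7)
The plan is to use the substitution $\psi := a^{n/2}\phi$ to eliminate the first-order time derivative in \eqref{KG_FLRW}. Using $\dot a/a = 1$, a short computation reduces the Klein--Gordon equation to
\[
\ddot\psi - a^{-2}\Delta\psi + M^2\psi = 0, \qquad M^2 := m^2 - \tfrac{n^2}{4}.
\]
Since $\|\phi(t,\cdot)\|_{L^\infty(\mR^n)} = a^{-n/2}\|\psi(t,\cdot)\|_{L^\infty(\mR^n)}$, the three cases of the theorem correspond respectively to bounds $\|\psi(t,\cdot)\|_{L^\infty}\lesssim 1$, $\log a$, and $a^{\sqrt{-M^2}}$. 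Because the coefficients of the reduced equation depend only on $t$, each spatial derivative $\partial_{\balpha}\psi$ satisfies the same equation, and the Sobolev embedding $H^{k-1}(\mR^n) \hookrightarrow L^\infty(\mR^n)$ (valid since $k-1 > n/2$) will let me bootstrap from $L^2$ bounds on $\psi$ and its spatial derivatives to the desired $L^\infty$ bounds.

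For the cases $M^2 \ge 0$ (i.e.\ $|m| \ge n/2$), I would use the natural energy
\[
E(t) := \tfrac{1}{2}\int_{\mR^n}\!\bigl(\dot\psi^2 + a^{-2}|\nabla\psi|^2 + M^2\psi^2\bigr)\,d^n\bbx.
\]
A direct integration by parts (together with $\dot a = a$) gives $\dot E = -a^{-1}\|\nabla\psi\|_{L^2}^2 \le 0$. When $M^2 > 0$ this bounds $\|\psi(t,\cdot)\|_{L^2}$ uniformly; when $M^2 = 0$ it only yields $\|\dot\psi(t,\cdot)\|_{L^2} \le C$, and integrating in time gives $\|\psi(t,\cdot)\|_{L^2} \lesssim t = \log a$. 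Applying the argument to each $\partial_\balpha\psi$ with $|\balpha|\le k-1$ and invoking Sobolev produces the conjectured logarithmic factor in the critical $|m|=n/2$ case.

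For the tachyonic case $M^2 = -\nu^2 < 0$ (with $\nu := \sqrt{n^2/4 - m^2}$, i.e.\ $|m|<n/2$), the above energy is indefinite, so I would make a second substitution $\psi = a^{\nu}\chi$, reducing the equation to the damped wave equation
\[
\ddot\chi + 2\nu\dot\chi - a^{-2}\Delta\chi = 0.
\]
The idea is to multiply by $e^{\beta t}\dot\chi$ for a weight $\beta > 0$, integrate over $\mR^n$, integrate by parts the Laplacian term, and use $a^{-2}=e^{-2t}$ to arrive at
\[
\tfrac{1}{2}\partial_t\bigl(e^{\beta t}\|\dot\chi\|_{L^2}^2 + e^{(\beta-2)t}\|\nabla\chi\|_{L^2}^2\bigr) + \bigl(2\nu-\tfrac{\beta}{2}\bigr)e^{\beta t}\|\dot\chi\|_{L^2}^2 = \tfrac{\beta-2}{2}\,e^{(\beta-2)t}\|\nabla\chi\|_{L^2}^2.
\]
Choosing $\beta := \min(2,4\nu)$ makes both $2\nu-\beta/2 \ge 0$ and $\beta-2\le 0$, so the weighted energy in parentheses is non-increasing; this forces $\|\dot\chi(t,\cdot)\|_{L^2} \lesssim e^{-\beta t / 2}$, which is integrable in $t$. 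Integrating in time yields a uniform $L^2$ bound on $\chi$; applying the argument to spatial derivatives and invoking Sobolev then gives $\|\chi\|_{L^\infty}\le C$, and hence $\|\phi\|_{L^\infty} \lesssim a^{-n/2+\nu}$.

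The main obstacle will be the tachyonic case: the competition between the damping rate $2\nu$ and the factor $a^{-2}$ in front of $\Delta\chi$ forces the precise choice $\beta = \min(2,4\nu)$, which in turn has to handle the regimes $\nu\ge 1/2$ and $0<\nu<1/2$ separately (with different terms going to zero). I expect the technical lemma announced for Appendix~B to supply precisely the Gronwall-type integration step needed to convert these weighted $L^2$ estimates on $\dot\chi$ into a uniform $L^\infty$ control on $\chi$ across the full range $\nu > 0$, including the borderline values.
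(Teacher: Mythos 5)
Your proposal follows essentially the same route as the paper: the substitution $\psi=a^{\kappa}\phi$ with $\kappa=\frac{n}{2}$ (your two-stage substitution $a^{n/2}$ then $a^{\nu}$ in the tachyonic case composes to the paper's single $\kappa=\frac{n}{2}-\nu$), the same three energies, and the same differential inequalities --- your weighted-energy monotonicity with $\beta=\min(2,4\nu)$ is exactly the paper's $\widetilde{\calE}'\leq-\theta\widetilde{\calE}$ with $\theta=\min\{4\nu,2\}$ after multiplying by $e^{\theta t}$, followed by integrating $\|\dot\psi\|_{L^2}$ in time, commuting with spatial derivatives, and Sobolev embedding. The only caveat is that the Appendix~B lemma is not a Gr\"onwall step but the justification for discarding the boundary term at infinity when you integrate the Laplacian term by parts over $\mR^n$; with that (and the preliminary finiteness of the initial energies, which the paper secures via a first-round multiplier estimate), your argument is the paper's.
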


\smallskip 

\begin{remark} We recall that the conformally invariant wave equation in $n+1$ dimensions 
is 
$$
\left(\square_g -\frac{n-1}{4n} R_g\right)\phi=0,
$$
where $R_g$ is the scalar curvature of the metric $g$; see for instance \cite{Wal}. 
If $g$ is a FLRW metric with flat $n$-dimensional spatial sections, having the form given by \eqref{equation_10_august_2020_16:53}, then 
$$
R_g=\frac{2n\ddot{a}}{a}+\frac{n(n-1)\dot{a}^2}{a^2}.
$$
Thus in de Sitter space in flat FLRW form, the conformally invariant wave equation can be interpreted as a Klein-Gordon equation, with the mass parameter satisfying $m^2 = \frac{n^2-1}{4}$. From \cite[Appendix~B]{CNO}, 
we have 
\begin{equation}
\label{equation_10_august_2020_16:40}
\|\phi(t,\cdot)\|_{L^\infty(\mR^n)}\lesssim a^{\frac{1-n}{2}},
\end{equation}
which follows from  using the fact that the $L^\infty$-norm of $\psi(\cdot, t)$, 
defined by $\phi=a^{1-\frac{n+1}{2}}\psi$ (see \cite[eq. (178)]{CNO}), is uniformly bounded with respect to $t$. 
The estimate \eqref{equation_10_august_2020_16:40} is in complete agreement with the result of our Theorem~\ref{theorem_3} above, since the relation $\frac{n^2}{4} - m^2 = \frac{1}{4}$ reduces 
our bound $a^{-\frac{n}{2}+\sqrt{\frac{n^2}{4}-m^2}}$ precisely to $a^{\frac{1-n}{2}}$. 
\end{remark}

\subsection{Preliminary energy function and estimates}$\;$

\smallskip 

\noindent 
Define the energy-momentum tensor $T$ by 
$$
T_{\mu \nu}=\partial_\mu \phi \partial_\nu \phi -\frac{1}{2} g_{\mu\nu}(\partial_\alpha \phi \partial^\alpha \phi+m^2\phi^2).
$$
Then $\nabla_\mu T^{\mu\nu}=0$. Also, in particular, 
$$
T_{00}=\frac{1}{2}\left(\dot{\phi}^2+\frac{1}{a^2}|\nabla \phi|^2+m^2\phi^2\right)=T^{00}.
$$
Set 
$$
X=a^{-n}\frac{\partial}{\partial t}.
$$
Then $X$ is time-like and hence causal, and $X$ is future pointing. 

\medskip 

\noindent 
Define $J$ by 
$$
J^\mu=T^{\mu \nu}X_\nu.
$$
Then $J$ is causal and past-pointing. 

\medskip 

\noindent 
Let $N=\partial_t$. Define the energy $E$ by 
$$
E(t)=\int_{\{t\}\times \mR^n} J_\mu N^\mu 
=\int_{\mR^n}\frac{1}{2}\left( \dot{\phi}^2 +\frac{1}{a^2} |\nabla \phi|^2+m^2\phi^2\right)d^n\bbx.
$$
Define
$$
\Pi=\frac{1}{2}\calL_X g=-dt\calL_X dt 
+a^{-n+1}\dot{a}\left((dx^1)^2+\cdots+(dx^n)^2\right).
$$
As 
$$
\calL_X dt=-na^{-n-1}\dot{a} dt,
$$
we have 
$$
\Pi=na^{-n-1} \dot{a} dt^2+a^{-n+1}\dot{a} \left((dx^1)^2+\cdots+(dx^n)^2\right).
$$
Hence 
$$
\nabla_\mu J^\mu=T^{\mu \nu}\Pi_{\mu \nu}
=\frac{a^{-n-1}\dot{a}}{2} \left(2n\dot{\phi}^2 +\frac{2}{a^2}|\nabla \phi|^2\right) \geq 0.
$$
For $R>0$, define 
$$
B_0:=\{(t_0,\bbx)\in I\times \mR^n: \langle \bbx,\bbx\rangle_{\scriptscriptstyle \mR^n} \leq R^2\}.
$$
The future domain of dependence of $B_{0}$ is denoted by $D^+(B_0)$.  

\noindent 
Let $t_1>t_0$. We will now apply the divergence theorem to the region 
$$
\calR:=D^+(B_{0})\;\!\cap \;\!\{(t,\bbx)\in M: t\leq t_1\}.
$$
We have 
$$
\int_\calR (\nabla_\mu J^\mu) \epsilon =\int_{\partial \calR} J\intprod \epsilon.
$$
Using 
\begin{itemize}
 \item $\nabla_\mu J^\mu\geq 0$, and 
 \item the fact that the boundary contribution on $C$, the null portion of $\partial \calR$, is nonpositive 
(since $J$ is causal and past-pointing), 
\end{itemize}
we obtain the inequality 
$$
\int_{B_0} \!\frac{1}{2} \!\left( \dot{\phi}^2 \!+\!\frac{1}{a^2}|\nabla \phi|^2\!+\!m^2\phi^2\right)\!d^n \bbx 
\;\!\geq \;\!
\int_{B_1}\! \frac{1}{2}\! \left( \dot{\phi}^2 \!+\!\frac{1}{a^2}|\nabla \phi|^2\!+\!m^2\phi^2\right)\!d^n \bbx .
$$
Passing the limit $R\rightarrow \infty$ yields $E(t_1)\leq E(t_0)<+\infty$. As $t_1>t_0$ was arbitrary, we obtain 
$$
\forall t\geq t_0,\;\;E(t)= 
\int_{\mR^n} \frac{1}{2} \left( \dot{\phi}^2 +\frac{1}{a^2}|\nabla \phi|^2+m^2\phi^2\right)d^n \bbx
\leq E(t_0)\leq \infty.
$$
 From here, it follows that for all $t \geq t_0$, 
\begin{eqnarray*}
&&
\int_{\mR^n} \dot{\phi}^2 d^n \bbx \lesssim  1,\\
&& 
\int_{\mR^n} |\nabla\phi|^2 d^n \bbx \lesssim  a^2,\;\;\textrm{ and }\\
&& 
\int_{\mR^n} \phi^2 d^n \bbx \lesssim  1 \;\;\;\;(\textrm{if } m\neq 0).
\end{eqnarray*}
But since each partial derivative $(\partial_{x^1})^{i_1}\cdots (\partial_{x^n})^{i_n} \phi$ is also a solution of the Klein-Gordon equation,
it follows from $\phi_0 \in H^k(\mR^n)$  and $\phi_1 \in H^{k-1}(R^n)$ for a $k > \frac{n}{2 }+ 2$, that also $\phi(t,\cdot) \in H^k(\mR^n)$  and 
$\partial_t \phi(t,\cdot) \in H^{k-1}(\mR^n)$, and moreover
\begin{eqnarray*}
\|\dot{\phi}\|_{H^{k'}(\mR^n)} &\lesssim & 1,\\
\|\partial_i \phi\|_{H^{k'}(\mR^n)}   &\lesssim & a,\;\;\textrm{ and }\\
\|\phi\|_{H^{k'}(\mR^n)} &\lesssim&  1 \;\;\;\;(\textrm{if } m\neq 0),
\end{eqnarray*}
where $k':=k-1$.

\subsection{The auxiliary function $\psi$ and its PDE}$\;$

\smallskip 

\noindent Motivated by the decay rate we anticipate for $\phi$, we define the auxiliary function $\psi$ by 
$$
\psi:= a^{\kappa}\phi,
$$
where 
$$
\kappa:=\left\{\begin{array}{ll}
               \frac{n}{2} & \textrm{if } \;|m|\geq \frac{n}{2},\\
               \frac{n}{2}-\sqrt{\frac{n^2}{4}-m^2} & \textrm{if }\; |m|\leq \frac{n}{2}.
              \end{array}\right.
$$
Then, using \eqref{KG_FLRW}, it can be shown that $\psi$ satisfies the equation 
\begin{equation}
 \label{KG_FLRW_for_psi}
 \ddot{\psi}+(\kappa^2-n\kappa+m^2)\psi+(n-2\kappa)\dot{\psi}-\frac{1}{a^2} \Delta \psi=0.
\end{equation}

\subsection{The case $|m|>\frac{n}{2}$} $\;$

\smallskip 

\noindent 
We have  $\kappa=\frac{n}{2}$,  so that $n-2\kappa=0$, while 
$$
\kappa^2-n\kappa+m^2=m^2-\frac{n^2}{4},
$$
and thus \eqref{KG_FLRW_for_psi} becomes 
$$
\ddot{\psi}-\frac{1}{a^2} \Delta \psi+\left(m^2-\frac{n^2}{4}\right)\psi=0.
$$
We note that if $\phi\in H^\ell(\mR^n)$ and $\dot{\phi}\in H^{\ell-1}(\mR^n)$ for some $\ell$, then 
$\psi\in H^\ell(\mR^n)$ too, and also
$$
\dot{\psi}=\frac{n}{2}a^{\frac{n}{2}-1}\dot{a}\phi +a^{\frac{n}{2}}\dot{\phi}\in H^{\ell-1}(\mR^n).
$$
Define the new energy $\calE$, associated with the $\psi$-evolution, by 
$$
\calE(t):=\frac{1}{2}\int_{\mR^n} \left( \dot{\psi}^2+\frac{1}{a^2} |\nabla \psi|^2+\Big(m^2-\frac{n^2}{4}\Big)\psi^2 \right) d^n\bbx.
$$
Then  using the fact that $a=e^t=\dot{a}>0$, and also equation \eqref{KG_FLRW_for_psi}, we obtain
{\small 
\begin{eqnarray*}
 \calE'(t)&=& \!\!\!\int_{\mR^n}\left(\dot{\psi}\ddot{ \psi}-\frac{a\dot{a}}{a^4}|\nabla \psi|^2 
 +\frac{1}{a^2}\langle \nabla \psi,\nabla\dot{\psi}\rangle +\Big(m^2-\frac{n^2}{4}\Big)\psi \dot{\psi}\right)d^n \bbx\\
 &\leq & \!\!\! \int_{\mR^n}\left(\dot{\psi}\ddot{ \psi} 
 +\frac{1}{a^2}\langle \nabla \psi,\nabla\dot{\psi}\rangle +\Big(m^2-\frac{n^2}{4}\Big)\psi\dot{\psi}\right)d^n \bbx\\
 &\leq &\!\!\! \int_{\mR^n}\left(\dot{\psi}\Big(\frac{1}{a^2}\Delta\psi -\Big(m^2-\frac{n^2}{4}\Big)\psi\Big)
 +\frac{1}{a^2}\langle \nabla \psi,\nabla\dot{\psi}\rangle +\Big(m^2-\frac{n^2}{4}\Big)\psi \dot{\psi}\right)d^n \bbx\\
 &=&\!\!\! \frac{1}{a^2} \int_{\mR^n}\left(\dot{\psi} \Delta\psi + \langle \nabla \psi,\nabla\dot{\psi}\rangle \right)d^n \bbx= \frac{1}{a^2} \int_{\mR^n} \nabla \cdot (\dot{\psi} \nabla \psi) d^n\bbx.
\end{eqnarray*}}

\noindent 
For a fixed $t$, and for a ball $B(\mathbf{0},r)\subset \mR^n$, where $r>0$, it follows  from the divergence theorem 
(since $\dot{\psi}$ and $\nabla{\psi}$ are smooth),  that 
$$
\int_{B(\mathbf{0},r)}  \nabla \cdot (\dot{\psi} \nabla \psi) \;\!d^n\bbx=\int_{\partial B(\mathbf{0},r)}  \dot{\psi} \;\!\langle \nabla \psi,\bbn\rangle \;\! d\sigma_r,
$$
where $d\sigma_r$ is the surface area measure on the sphere $S_r=\partial B(\mathbf{0},r)$, and $\bbn$ is the outward-pointing unit normal. 
The right hand side surface integral tends to $0$ as $r\rightarrow+\infty$, 
 by an application of Lemma~\ref{technical_lemma_surface_integral}, given in Appendix~B.  

So for $t\geq t_0$, we have $\calE'(t)\leq 0$, which yields $\calE(t)\leq \calE(t_0)$. 
In particular, for all $t\geq t_0$, $\|\psi(t,\cdot)\|_{L^2(\mR^n)}\lesssim C$, that is, 
$\|a^{\frac{n}{2}}\phi(t,\cdot)\|_{L^2(\mR^n)}\lesssim C$, and so\footnote{We note that to reach this conclusion, we used Lemma~\ref{technical_lemma_surface_integral}, 
for which we need $\dot{\psi}(t,\cdot), \nabla \psi(t,\cdot)\in  H^1(\mR^n)$, which means that the initial conditions
for $\phi$ must be such that  $\phi_0\in  H^2(\mR^n)$ and $\phi_1 \in H^1(\mR^n)$.} 
\begin{equation}
\label{15:37_9_september}
\|\phi(t,\cdot)\|_{L^2(\mR^n)} \lesssim a^{-\frac{n}{2}}.
\end{equation}
Then with enough regularity on $\phi_0,\phi_1$ at the outset, that is, if 
$\phi_0\in H^k(\mR^n)$ and $\phi_1\in H^{k-1}(\mR^n)$ for a $k>\frac{n}{2}+2$, 
and by considering $(\partial_{x^1})^{i_1}\cdots (\partial_{x^n})^{i_{n}}\phi$ as a solution to the 
Klein-Gordon equation, we arrive at\footnote{Note that in order to use the estimate \eqref{15:37_9_september}, 
for $D\phi:=(\partial_{x^1})^{i_1}\cdots (\partial_{x^n})^{i_{n}}\phi$ replacing $\phi$, where $|(i_1,\cdots ,i_n)|=:k'$,  
we must ensure that the initial conditions  for $D\phi$, namely $(D \phi(t_0,\cdot),D\dot{\phi}(t_0,\cdot))$ is in $ (H^2(\mR^n), H^1(\mR^n))$,  
which is guaranteed if the initial condition for $\phi$, namely $(\phi_0,\phi_1)$ is in $ (H^k(\mR^n), H^{k-1}(\mR^n))$, 
with $k-k'=2$.}
$$
\|\phi(t,\cdot)\|_{H^{k'}(\mR^n)} \lesssim a^{-\frac{n}{2}},
$$
where $k':=k-2$. As $k'=k-2>\frac{n}{2}$, we have, using the Sobolev inequality, that 
$$
\forall t\geq t_0,\;\;\;\; \|\phi(t,\cdot)\|_{L^\infty(\mR^n)} \lesssim a^{-\frac{n}{2}}.
$$
This completes the proof of Theorem~\ref{theorem_3} in the case when $|m|>\frac{n}{2}$.

\medskip 

\subsection{The case $|m|<\frac{n}{2}$} $\;$

\smallskip 

\noindent 
We have 
$$
\kappa=\displaystyle \frac{n}{2}-\sqrt{\frac{n^2}{4}-m^2},\quad\;\;  n-2\kappa=2\displaystyle \sqrt{\frac{n^2}{4}-m^2}\;>0,
$$
and $\kappa^2-n\kappa+m^2=0$. Equation \eqref{KG_FLRW_for_psi} becomes 
$$
\ddot{\psi} +2\Big(\sqrt{\frac{n^2}{4}-m^2}\Big)\;\dot{\psi}-\frac{1}{a^2} \Delta \psi=0.
$$
Defining 
 $\displaystyle 
\widetilde{\calE}(t):=\frac{1}{2}\int_{\mR^n} \left(\dot{\psi}^2+\frac{1}{a^2}|\nabla \psi|^2 \right) d^n\bbx,
$ 
we obtain 
\begin{small} 
\begin{eqnarray*}
\!\!\!\!\!\!\!\!\!\!\!\! \!\!\!\!\!\widetilde{\calE}'(t)\!\!\!\!\!&=&\!\!\!\!\!
 \int_{\mR^n}\left( \dot{\psi}\ddot{\psi}-\frac{\dot{a}}{a^3} |\nabla \psi|^2 +\frac{1}{a^2} \langle \nabla \psi, \nabla \dot{\psi}\rangle \right)d^n\bbx \phantom{aa}
 \\
 &=& \!\!\!\!
 \int_{\mR^n}\!\!\left( \dot{\psi}\Big(\!-\!2\Big(\sqrt{\frac{n^2}{4}\!-\!m^2}\Big)\;\dot{\psi}\!+\!
 \frac{1}{a^2} \Delta \psi\Big) \!-\!\frac{\dot{a}}{a^3} |\nabla \psi|^2 \!+\!\frac{1}{a^2} \langle \nabla \psi, \!\nabla \dot{\psi}\rangle \!\right)d^n\bbx \phantom{aaa}
 \\
 &=& \!\!\!\!
 \!-2\Big(\sqrt{\frac{n^2}{4}\!-\!m^2} \Big)
 \int_{\mR^n} \dot{\psi}^2 d^n\bbx \!-\!\frac{\dot{a}}{a^3} \int_{\mR^n} |\nabla \psi|^2 d^n\bbx.
\end{eqnarray*}
\end{small}

\noindent 
Using $a=e^t=\dot{a}$, we obtain 
\begin{eqnarray*}
 \widetilde{\calE}'(t)&=&
 -4\Big(\sqrt{\frac{n^2}{4}-m^2}\Big) \;
 \frac{1}{2}\int_{\mR^n} \dot{\psi}^2 d^n\bbx -2\frac{1}{2} \int_{\mR^n} \frac{1}{a^2}|\nabla \psi|^2 d^n\bbx\\
 &\leq& 
 -\min\left\{ 4\Big(\sqrt{\frac{n^2}{4}-m^2}\Big) , \;2\right\}\cdot \frac{1}{2} \int_{\mR^n} \left(\dot{\psi}^2+\frac{1}{a^2} |\nabla \psi|^2 \right) d^n\bbx
 \phantom{aaa}
 \\
 &=&-\;\theta \cdot \widetilde{\calE}(t),
 \phantom{\frac{1}{2} \int_{\mR^n} \left(\frac{1}{a^2} |\nabla \psi|^2 \right)} 
\end{eqnarray*}
\noindent where 
$$
\theta:=\min\left\{ 4\Big(\sqrt{\frac{n^2}{4}-m^2} \Big)\;,\; 2\right\}>0.
$$
So $\widetilde{\calE}'(t)+\theta \cdot \widetilde{\calE}(t)\leq 0$. Multiplying throughout by $e^{\theta t}>0$, we obtain 
$$
\frac{d}{dt}\left(e^{\theta t}\cdot \widetilde{\calE}(t)\right)\leq 0.
$$
Integrating from $t_0$ to $t$ yields 
$$
e^{\theta t}\cdot \widetilde{ \calE}(t)\leq e^{\theta t_0}\cdot \widetilde{ \calE}(t_0),
$$
that is, $\widetilde{\calE}(t)\lesssim  e^{-\theta t}$. In particular, 
$$
\|\dot{\psi}(t,\cdot)\|_{L^2(\mR^n)} \leq \sqrt{2\;\!\widetilde{\calE}(t)} \lesssim e^{-\frac{\theta}{2}t}.
$$
We have 
$\displaystyle 
\psi(t,\bbx)=\psi(t_0,\bbx)+ \int_{t_0}^t (\partial_t \psi)(s,\bbx) ds,
$ 
and so 
 \begin{eqnarray*}
 \|\psi(t,\cdot)\|_{L^2(\mR^n)}
 &\leq & 
 \|\psi(t_0,\cdot)\|_{L^2(\mR^n)} + \int_{t_0}^t \|(\partial_t \psi)(s,\cdot)\|_{L^2(\mR^n)} ds,
 \\
 &\lesssim & 
 A+\int_{t_0}^t Be^{-\frac{\theta}{2}s} ds 
 = A+B\frac{e^{-\frac{\theta}{2}t_0}-e^{-\frac{\theta}{2}t}}{\theta/2} \lesssim C.
 \end{eqnarray*}
 Thus for all $t\geq t_0$, we have 
 $$
 \|\phi(t,\cdot)\|_{L^2(\mR^n)} =a^{-\kappa}\|\psi(t,\cdot)\|_{L^2(\mR^n)}\lesssim a^{-\kappa}.
 $$
 By considering $(\partial_{x^1})^{i_1}\cdots (\partial_{x^n})^{i_n}\phi$ and using the Sobolev inequality, we have
 $$
\forall t\geq t_0,\;\; \;\; \|\phi(t,\cdot)\|_{L^\infty(\mR^n)} \lesssim a^{-\kappa}=a^{-(\frac{n}{2}-\sqrt{\frac{n^2}{4}-m^2}\;\!)}.
$$
This completes the proof of  Theorem~\ref{theorem_3} in the case when $|m|<\frac{n}{2}$.

\subsection{The case $|m|=\frac{n}{2}$}  $\;$

\smallskip 

\noindent We have $\kappa= \frac{n}{2}$, and equation \eqref{KG_FLRW_for_psi} becomes 
$
\displaystyle \ddot{\psi} -\frac{1}{a^2} \Delta \psi=0.
$ 

\smallskip 

\noindent 
Defining the same energy as we used earlier in the case when $|m|<\frac{n}{2}$, 
$$
\widetilde{\calE}(t):=\frac{1}{2}\int_{\mR^n} \left(\dot{\psi}^2+\frac{1}{a^2}|\nabla \psi|^2 \right) d^n\bbx,
$$
we obtain  
\begin{eqnarray*}
 \widetilde{\calE}'(t)\!\!\!&=&\!\!\!\!
 \int_{\mR^n}\left( \dot{\psi}\ddot{\psi}-\frac{\dot{a}}{a^3} |\nabla \psi|^2 +\frac{1}{a^2} \langle \nabla \psi, \nabla \dot{\psi}\rangle \right)d^n\bbx 
 \\
 &=& \!\!\!\!
 \int_{\mR^n}\!\!\left( \dot{\psi}
 \frac{1}{a^2} \Delta \psi \!-\!\frac{\dot{a}}{a^3} |\nabla \psi|^2 \!+\!\frac{1}{a^2} \langle \nabla \psi, \!\nabla \dot{\psi}\rangle \!\right)d^n\bbx 
 \!=\!-\frac{\dot{a}}{a^3} \int_{\mR^n}\!\! |\nabla \psi|^2 d^n\bbx\leq 0.
\end{eqnarray*}
So $\widetilde{\calE}(t)\leq \widetilde{\calE}(t_0)$ for $t\geq t_0$. In particular, $\|\dot{\psi}(t,\cdot)\|_{L^2(\mR^n)}\lesssim B$ for $t\geq t_0$. 
Again, 
$$
\displaystyle 
\psi(t,\bbx)=\psi(t_0,\bbx)+ \int_{t_0}^t (\partial_t \psi)(s,\bbx) ds,
$$ 
gives  
 \begin{eqnarray*}
 \|\psi(t,\cdot)\|_{L^2(\mR^n)}
 &\leq & 
 \|\psi(t_0,\cdot)\|_{L^2(\mR^n)} + \int_{t_0}^t \|(\partial_t \psi)(s,\cdot)\|_{L^2(\mR^n)} ds,
 \\
 &\lesssim & 
 A'+\int_{t_0}^t B ds 
  \lesssim A+Bt\lesssim \log a.
 \end{eqnarray*}
 Thus for all $t\geq t_0$, we have 
  $
 \|\phi(t,\cdot)\|_{L^2(\mR^n)} =a^{-\kappa}\|\psi(t,\cdot)\|_{L^2(\mR^n)}\lesssim a^{-\kappa} \log a.
 $ 
 Hence (by considering $(\partial_{x^1})^{i_1}\cdots (\partial_{x^n})^{i_n}\phi$ and using the Sobolev inequality)  
 \begin{equation}
  \label{30_Aug_10:34} 
\forall t\geq t_0,\;\;  \|\phi(t,\cdot)\|_{L^\infty(\mR^n)} \lesssim a^{-\frac{n}{2}}\log a.
 \end{equation}
(One can show that this bound is sharp; see Appendix C.) 

\noindent 
This completes the proof of Theorem~\ref{theorem_3}.

\section{Decay in the cosmological region of the $\mathrm{RNdS}$ spacetime}
\label{Section_RNdS}

\noindent 
The Reissner-Nordstr\"om-de Sitter (RNdS) spacetime $(M,g)$ is a solution to the Einstein-Maxwell 
equations with a positive cosmological constant, and it represents a pair\footnote{We note that 
there is no solution analogous to RNdS but with only one black hole. This is analogous to 
(but much more complicated than, and still not fully understood)  the fact that one cannot have a single 
electric charge on a spherical universe (Gauss's law requires that the total charge must be zero). In fact, the 
fundamental solution of the Laplace equation on the sphere gives a unit positive charge at some point 
and a unit negative charge at the antipodal point. One can have more than two black holes, 
for instance the so-called Kastor-Traschen solution \cite{KT}.} of antipodal charged 
black holes in a spherical\footnote{``Spherical'' here means that the Cauchy hypersurface (that is, ``space'') is an $n$-sphere.} universe which is undergoing accelerated expansion. 
The Reissner-Nordstr\"om-de Sitter metric in $n+1$ dimensions is given by 
$$
g=-\frac{1}{V} dr^2 +V dt^2+r^2 d\Omega^2,
$$
where 
$$
V= \displaystyle r^2+\frac{2M}{r^{n-2}}-\frac{e^2}{r^{n-1}}-1,
$$ and 
 $d\Omega^2$ is the unit round metric on $S^{n-1}$. The constants $M$ and $e$ are proportional to the mass and the charge, respectively, of the 
 black holes, and the cosmological constant is chosen to be 
 $$
 \Lambda=\frac{n(n-1)}{2}
 $$
 by an appropriate choice of units. 
 
 Consider the polynomial 
 $$
 p(r):=r^{n-1} V(r)=r^{n+1}-r^{n-1}+2Mr -e^2.
 $$
 As $p(0)=-e^2<0$ and as $p(r)\stackrel{r\rightarrow\infty}{\longrightarrow} \infty $, it follows that  
 $p$ will have a real root in $(0,+\infty)$, and the largest real root of $p$, which we denote by $r_c$, must be positive. 
 If $r>r_c$, then clearly $p(r)>0$, and so also $V(r)>0$.
 
 It can also be seen that $p$ has at most three distinct positive roots. Suppose, on the contrary, that $p$ has 
 more than three distinct positive roots: $r_1<r_2<r_3<r_4$. Applying Rolle's theorem to 
 $p$ on $[r_i,r_{i+1}]$ ($i=1,2,3$), we conclude that $p'$ must have three distinct roots $r_i'\in (r_i,r_{i+1})$ ($i=1,2,3$). 
  Applying Rolle's theorem to $p'$ on $[r_i',r_{i+1}']$ ($i=1,2$), we conclude that $p''$ must have two distinct roots $r_i''\in (r_i',r_{i+1}')$ ($i=1,2$). 
  But 
  $$
  p''=r^{n-3}n(n+1) \Big(r^2-\frac{(n-1)(n-2)}{n(n+1)}\Big),
  $$
  which has only one positive root, a contradiction. 
  
  The `subextremality' assumption on the RNdS spacetime made in Theorem~\ref{theorem_3}, refers to a nondegeneracy of the 
  positive roots of $p$: we assume that there are exactly three positive roots, $r_-, r_+$ and $r_c$, and  
  $$
  0<r_-<r_+<r_c.
  $$
  These describe the event horizon $r=r_+$, $\phantom{l}$and the Cauchy `inner' horizon  $\;r=r_-$.  
 It can be seen that the subextremality condition then implies $p'(r_c)>0$. (Indeed, $p'(r_c)$ cannot be negative, as otherwise $p$ would acquire a root larger than $r_c$ since 
 $p(r)\stackrel{r\rightarrow \infty}{\longrightarrow }\infty$.
 Also, if $p'(r_c)=0$, then Rolle's theorem implies again that $p'$ would have three positive roots, ones in $(r_-,r_+)$ and $(r_+,r_c)$, 
 and one at $r_c$, which is impossible, as we had seen above.) $p'(r_c)>0$ implies that  $V'(r_c)>0$. 
 We will also assume that 
 $$
 V''(r_c)>0.
 $$
 Our assumptions have the following consequence, which will be used in our proof of Theorem~\ref{theorem_4}.

 \begin{lemma}[Global redshift] 
  $V'(r)>0$ for all  $r\geq r_c$.
 \end{lemma}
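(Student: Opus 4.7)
The plan is to count the positive zeros of $V'$ and show that they all lie strictly to the left of $r_c$, so that the already-established inequality $V'(r_c)>0$ propagates to the entire interval $[r_c,+\infty)$.

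First I would clear denominators by passing to the polynomial
$$q(r):=r^{n}V'(r)=2r^{n+1}-2M(n-2)r+e^{2}(n-1),$$
which has the same positive zeros as $V'$. Differentiating twice gives $q''(r)=2n(n+1)r^{n-1}>0$ for $r>0$, so $q'$ is strictly increasing on $(0,+\infty)$. Since $q'(0)=-2M(n-2)<0$ and $q'(r)\to+\infty$, there is a unique positive critical point of $q$, which is its minimum. Consequently $q$, and hence $V'$, has at most two positive zeros. This convexity-of-$q'$ trick is what lets us avoid computing a critical point of $V'$ explicitly, and is the step I expect to be the main obstacle, since a direct zero-counting argument for $V'$ is less transparent.

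Next I would extract the sign pattern of $V$ and force two zeros of $V'$ to lie in $(r_-,r_c)$. The Rolle argument already recalled in the text bounds the positive zeros of $p'$ (and thus of $p''$) and shows that $p$ cannot have four or more distinct positive roots; the same bookkeeping applied under the subextremality hypothesis shows that each of $r_-,r_+,r_c$ is a simple root of $p$, hence of $V$. Since $V(r)\to-\infty$ as $r\to0^{+}$, $V$ alternates sign across consecutive roots, so $V'(r_-)>0$, $V'(r_+)<0$, and $V'(r_c)>0$. The intermediate value theorem then produces one zero of $V'$ in $(r_-,r_+)$ and another in $(r_+,r_c)$.

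Combining the two steps, $V'$ has exactly two positive zeros, both contained in $(r_-,r_c)$, so $V'$ has no zeros in $[r_c,+\infty)$. Together with $V'(r_c)>0$, continuity delivers $V'(r)>0$ for all $r\geq r_c$, which is the desired global redshift property. The auxiliary assumption $V''(r_c)>0$ is not needed for this lemma; it presumably enters later in the proof of Theorem~\ref{theorem_4}.
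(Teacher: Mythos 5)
Your proof is correct, and it takes a genuinely different (and arguably cleaner) route than the paper's. The paper also reduces to the polynomial $q(r)=r^nV'(r)$ and exploits the fact that $q'$ has at most one nonnegative zero, but it argues by contradiction: assuming a smallest zero $r_*>r_c$ of $V'$, it splits into cases according to whether $r_*$ is a repeated or simple zero of $q$ and whether $q'(r_c)$ vanishes, and in each case manufactures two positive zeros of $q'$. To launch that argument it needs $q'(r_c)\geq 0$, which is exactly where the standing hypothesis $V''(r_c)>0$ enters. You instead count all positive zeros of $V'$ at once: $q''>0$ forces $q$ to be strictly decreasing and then strictly increasing on $(0,+\infty)$ (using $M>0$ and $n>2$ to get $q'(0)<0$), so $V'$ has at most two positive zeros, and the three zeros of $V$ already pin both of them inside $(r_-,r_c)$, leaving none on $[r_c,+\infty)$; positivity at $r_c$ then propagates. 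This avoids the case analysis entirely and, as you correctly observe, never uses $V''(r_c)>0$. Two minor remarks. First, you could shortcut the sign-alternation step by applying Rolle's theorem to $V$ on $[r_-,r_+]$ and on $[r_+,r_c]$, which produces the two interior zeros of $V'$ without any discussion of simplicity of the roots; that said, your justification of simplicity (via the bound of two on the number of positive zeros of $p'$, which already has one Rolle zero in each of $(r_-,r_+)$ and $(r_+,r_c)$) is sound. Second, your expression $q(r)=2r^{n+1}-2M(n-2)r+e^2(n-1)$ is the correct one; the coefficient $2(3-n)M$ displayed in the paper is a slip for $2(2-n)M$, which affects neither argument since both only use that $q'$ changes sign exactly once on $(0,+\infty)$.
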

 \begin{proof} We have 
 $$
 V'(r)=\frac{rp'(r)-(n-1)p(r)}{r^n}=\frac{2r^{n+1}+2(3-n) Mr +(n-1)e^2}{r^n}=:\frac{q(r)}{r^n}.
  $$
  As $V'(r_c)>0$, we have $q(r_c)>0$. Also $V''(r_c)>0$ and so $V'$ is increasing near $r_c$. But 
  then $q(r)=r^n V'(r)$ is also increasing near $r_c$, and in particular, $q'(r_c)\geq 0$. 
  Let us suppose that there exists an $r_*>r_c$ such that $V'(r_*)=0$, and let $r_*$ be the smallest such root.  Then $q(r_*)=0$ too. 
  We note that 
  $$
  q'=2(n+1)r^n+2(3-n)M,
  $$
  and so $q'$ can have only one nonnegative root, namely 
   $
  \left(\frac{(n-3)}{n+1}M\right)^{\frac{1}{n}}\geq 0. 
  $ 
  \begin{itemize} 
   \item[$1^\circ$] $r_*$ is a repeated root of $q$. Then $q'(r_*)=0$.
   
    If in addition $q'(r_c)=0$, then we arrive at a contradiction, since 
    $q'$ then has two positive roots (at $r_c$ and at $r_*$), which is impossible.  
   
    If $q'(r_c)>0$, then we arrive at a contradiction as follows. As $q$ is increasing near $r_c$, and 
    since $q(r_c)>0=q(r_*)$, it follows by the intermediate value theorem that there is some $r_{c}' \in (r_c,r_*)$ 
    such that $q(r_{c}')=q(r_c)$. But by Rolle's theorem applied to $q$ on $[r_c,r_{c}']$, there must exist 
    an $r_*'\in (r_c, r_{c}')$ such that $q'(r_*')=0$. Again $q'$ acquires two zeros (at $r_*$ and at $r_*'$), which is 
    impossible. 
    
    \item[$2^\circ$] $r_*$ is a simple root of $q$. But as $q(r)\stackrel{r\rightarrow\infty}{\longrightarrow }\infty$, 
    it follows that there must be at least one more root $r_{**}>r_{*}$ of $q$. By Rolle's theorem applied to $q$ on $[r_*, r_{**}]$, it follows that  $q'(r_{**}')=0$ 
    for some $r_{**}' \in (r_*,r_{**})$.

    If in addition $q'(r_c)=0$, then we arrive at a contradiction, since 
    $q'$ then has two positive roots (at $r_c$ and at $r_{**}'$), which is impossible.  
    
    If $q'(r_c)>0$, then, as in the last paragraph of $1^\circ$ above, there exists an $r_{*}'\in (r_c,r_c')\subset (r_c,r_*)$ 
    such that $q'(r_*')=0$. Thus $q'$ again gets two positive roots (at $r_*'$ and at $r_{**}'$), which is impossible.
  \end{itemize}
This shows that our assumption the $V'$ is zero beyond $r_c$ is incorrect. 
 \end{proof}

 \noindent 
 The hypersurfaces of constant $r$ are spacelike cylinders 
 with a future-pointing unit normal vector field 
 $$
 N=V^{\frac{1}{2}} \frac{\partial }{\partial r},
 $$
 and volume element 
 $$
 dV_n=V^{\frac{1}{2}} r^{n-1} dt d\Omega.
 $$
 
 \medskip 
 
 \noindent 
 The global structure of a maximal spherically symmetric extension of this metric 
 can be depicted by a conformal Penrose diagram shown below, repeated periodically; see for example  \cite{CNO0}. 
 
 \medskip

\begin{figure}[h]
     \center 
      \includegraphics[width=12.6 cm]{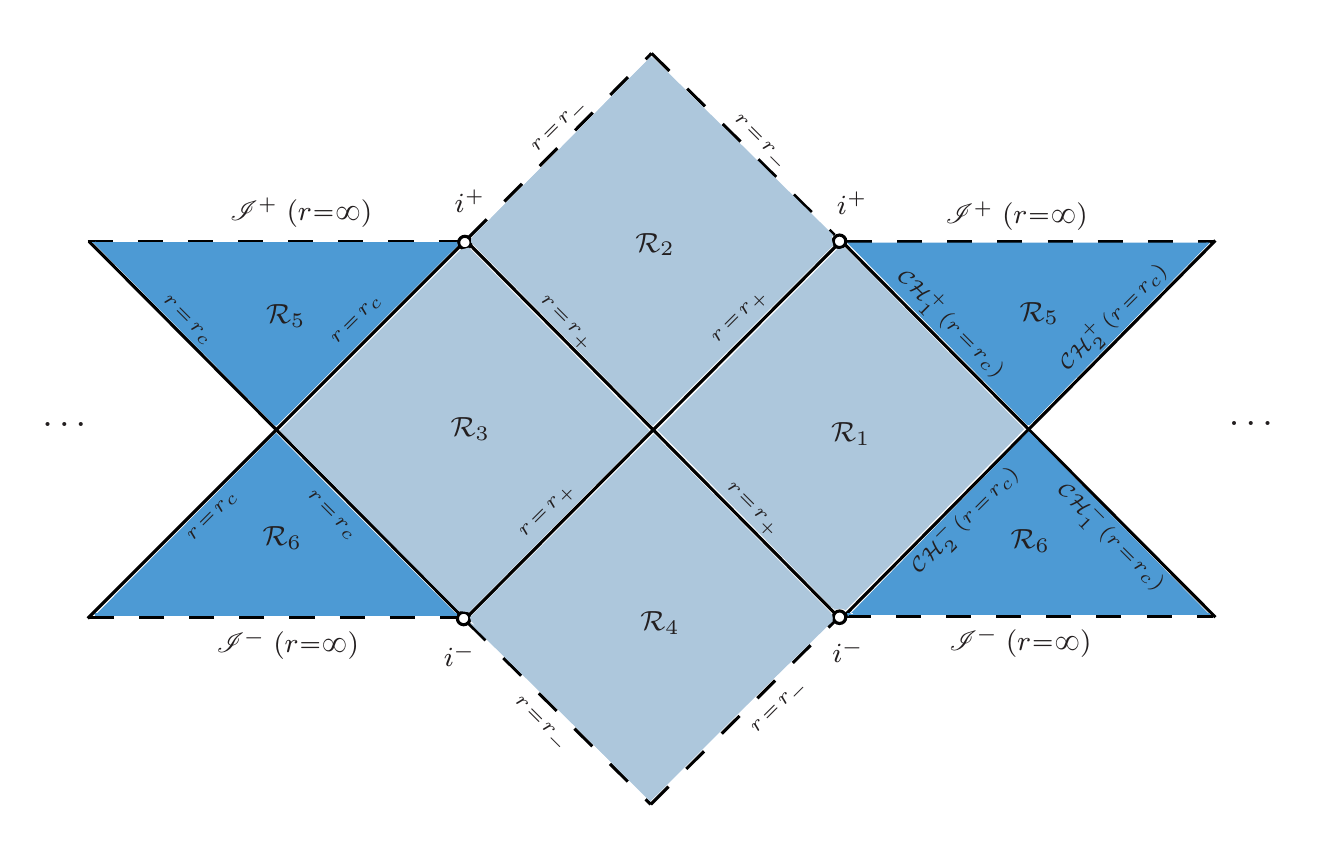}
     \caption{Conformal diagram of the Reissner-Nordstr\"om-de Sitter spacetime.}
      \label{31_Aug_15:57}
 \end{figure}

 \medskip
 
 \noindent 
 We are interested in the behaviour of the solution to the  Klein-Gordon equation in the cosmological region $\calR_5$ of this spacetime 
 (see Figure~\ref{31_Aug_15:57}), 
 bounded by the cosmological horizon branches $\calC\calH_1^+$, $\calC\calH_2^+$, the future null infinity ${\mathscr{I}}^+$, and 
 the point $i^+$. In particular, we want to obtain estimates for the decay rate of $\phi$ as $r\rightarrow \infty$. 
 We guess the decay rates simply by substituting $r$ instead of $e^t$ in the estimates we had obtained for 
 the decay rate of $\phi$ with respect to $t$ in the case of the de Sitter universe in flat FLRW form from the previous Section~\ref{Section_FLRW}. 
 
 We will prove the following result. 
 
 \goodbreak
 
\begin{theorem}
$\;$\label{theorem_4}

\noindent 
Suppose that 
\begin{itemize}
 \item $\epsilon>0$, 
 \item $m\in \mR$, 
 \item $M>0$, 
 \item $e> 0$,
 \item $n> 2$, 
 \item $(M,g)$ is the $(n+1)$-dimensional subextremal Reissner-Nordstr\"om-de Sitter solution given by the metric 
   $$
   g=-\frac{1}{V} dr^2+Vdt^2 +r^2 d\Omega^2,
   $$
   where 
   $$
   V= r^2+\frac{2M}{r^{n-2}} -\frac{e^2}{r^{n-1}}-1,
   $$
   and $d\Omega^2$ is the metric of the unit $(n-1)$-dimensional sphere $S^{n-1}$, 
 \item $k>\frac{n}{2}+2$, and 
 \item $\phi$ is a smooth solution to $\square_g \phi-m^2 \phi=0$ such that 
$$
\|\phi\|_{H^{k}(\calC\calH_1^+)}<+\infty \;\;\;\;\textrm{ and }\;\;\;\;\|\phi\|_{H^{k}(\calC\calH_2^+)}<+\infty,
$$
where $\calC\calH_1^+\simeq \calC\calH_2^+\simeq \mR\times S^{n-1}$ are the two components of the 
future cosmological horizon, parameterised by the flow parameter of the global Killing vector field $\frac{\partial}{\partial t}$. 
\end{itemize}
Then there exists a $r_0$ large enough so that for all $r\geq r_0$, 
$$
\| \phi(r,\cdot)\|_{L^\infty(\mR\times S^{n-1})}\lesssim 
\left\{ \begin{array}{ll} 
r^{-\frac{n}{2}+\epsilon} & \textrm{if }\;\; |m|> \frac{n}{2},\phantom{a^{\sqrt{\frac{n^2}{4}-m^2}}}\\
r^{-\frac{n}{2}+\sqrt{\frac{n^2}{4}-m^2}\;\!+\epsilon} &\textrm{if }\;\;|m|\leq\frac{n}{2}.
\end{array}\right.
$$
 \end{theorem}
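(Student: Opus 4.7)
\medskip
\noindent
\textbf{Proof proposal.} The plan is to mimic the structure of the proof of Theorem~\ref{theorem_3}, with the coordinate $r$ playing the role of the expansion factor $a(t)=e^t$ in the de Sitter FLRW case, and to treat the cosmological region of RNdS as an asymptotic perturbation of flat FLRW de Sitter. Since
\[
V(r) = r^2 + \frac{2M}{r^{n-2}} - \frac{e^2}{r^{n-1}} - 1 = r^2 + O(1) \quad (r\to\infty),
\]
the metric $g=-V^{-1}dr^2+V\,dt^2+r^2 d\Omega^2$ behaves for large $r$ like $-r^{-2}dr^2+r^2dt^2+r^2 d\Omega^2$, which after a $\log r$-change of variable is comparable to de Sitter in flat FLRW form. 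The $r^{\epsilon}$-loss in the stated bound reflects precisely the accumulation of the lower-order perturbative terms in $V-r^2$ over the integration in $r$.

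The first step is a preliminary energy estimate for $\phi$ on the constant-$r$ hypersurfaces $\Sigma_r\simeq \mR\times S^{n-1}$ in $\calR_5$, propagating the data given on $\calC\calH_1^+\cup\calC\calH_2^+$. Because $\partial_r$ is future-pointing and timelike in $\calR_5$ (as $V>0$ there), one may choose a multiplier $X\propto \partial_r$ analogous to the $X=a^{-n}\partial_t$ used in Section~\ref{Section_FLRW}. Applying the divergence theorem to the region bounded by $\calC\calH_1^+$, $\calC\calH_2^+$ and $\Sigma_r$, the global redshift estimate just established ($V'(r)>0$ for $r\geq r_c$) ensures that the bulk term $\nabla_\mu J^\mu$ has a favourable sign, and the null portions of the boundary contribute with the correct sign as well. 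This yields a uniform $H^1$-type bound for $\phi$ on $\Sigma_r$ in terms of the $H^k$ data on the horizons; commuting with the Killing vector $\partial_t$ and with the angular momentum operators on $S^{n-1}$ propagates this to higher Sobolev norms, as required later for the Sobolev embedding.

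Next, I introduce the auxiliary function
\[
\psi := r^\kappa \phi, \qquad \kappa := \begin{cases} \dfrac{n}{2} & \text{if } |m|\geq \dfrac{n}{2},\\[4pt] \dfrac{n}{2}-\sqrt{\dfrac{n^2}{4}-m^2} & \text{if } |m|\leq \dfrac{n}{2}, \end{cases}
\]
and compute the equation it satisfies. Because $V=r^2+O(1)$, the resulting PDE agrees to leading order in $r$ with \eqref{KG_FLRW_for_psi} under the identification $a\leftrightarrow r$, with additional terms of order $r^{-1}$ and smaller coming from $V-r^2$ and its derivatives. Following the case analysis of Section~\ref{Section_FLRW}, I would construct energies $\widetilde{\calE}(r)$ on $\Sigma_r$ and derive differential inequalities in $r$. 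In the case $|m|>\frac{n}{2}$, the leading part is conservative and $\widetilde{\calE}(r)$ remains bounded, while the perturbations produce an extra factor $r^\epsilon$ via Gronwall, using $\int^r s^{-1}ds=\log r\lesssim r^\epsilon$. In the case $|m|\leq\frac{n}{2}$, the leading part yields exponential-in-$\log r$ (hence polynomial-in-$r$) decay, and the error terms again cost only $r^\epsilon$; this also absorbs the logarithmic factor of the critical case $|m|=\frac{n}{2}$ into the uniform $r^\epsilon$.

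The final step is to commute the Klein–Gordon equation with $\partial_t$ and with spherical derivatives (both of which preserve the equation by isometry), to obtain $H^{k'}(\Sigma_r)$ bounds on $\phi$ for some $k'>n/2$, and then to apply the Sobolev inequality on $\mR\times S^{n-1}$ to conclude the pointwise bound. The main obstacle will be the careful bookkeeping of the perturbative error terms from $V-r^2$ in the $\psi$-equation, making sure that they really do contribute only an $r^\epsilon$ loss after integration in $r$, and that the non-compact $t$-direction on $\Sigma_r$ (with natural volume element $V^{1/2}r^{n-1}dt\,d\Omega$) is handled consistently: the energies must be normalised so that the finiteness of $\|\phi\|_{H^k(\calC\calH_i^+)}$ translates into quantitative $L^2$-bounds on each $\Sigma_r$, uniform up to the announced $r^{\epsilon}$-loss.
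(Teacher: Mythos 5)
Your proposal follows essentially the same route as the paper: a multiplier proportional to $\partial_r$ on the constant-$r$ hypersurfaces, the substitution $\psi=r^{\kappa}\phi$ with exactly the paper's $\kappa$, differential inequalities in $r$ for energies of $\psi$ closed by Gr\"onwall with an $r^{\epsilon}$ loss, a transfer of the horizon data to $\{r=r_0\}$, and commutation with $\partial_t$ and the spherical Killing fields followed by Sobolev embedding. The one step that would fail as written is your claim that, for the multiplier $X=\frac{V}{r^{n-1}}\partial_r$, the global redshift $V'>0$ makes the bulk term $\nabla_\mu J^\mu$ favourably signed: the deformation-tensor computation gives $T^{\mu\nu}\Pi_{\mu\nu}=\frac{n-1}{r^n}V^2\phi'^2+|\cnab\phi|^2\bigl(\frac{V}{r^{n+2}}-\frac{V'}{2r^{n+1}}\bigr)$, whose angular coefficient need not be nonnegative, so the bulk term is only bounded below by $-\frac{C}{r^{n+2}}|\cnab\phi|^2$ and must itself be absorbed by Gr\"onwall. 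More importantly, the energy associated with $X$ alone degenerates at the cosmological horizon (since $V(r_c)=0$ it only retains control of $(K\cdot\phi)^2$ there), so it cannot by itself convert the $H^k$ data on $\calC\calH_1^+\cup\calC\calH_2^+$ into an $H^1$ bound at $r=r_0$; the paper adds the transverse vector field $Y=\partial_r-\frac{1}{V}\partial_t$ to the multiplier, and it is in the deformation tensor of $Y$ that $V'>0$ is actually used, to complete a square in $(Y\cdot\phi,\,K\cdot\phi)$ and regain control of the transverse and angular derivatives near $r=r_c$. With that standard redshift modification inserted, your outline matches the paper's proof; the remaining discrepancy is cosmetic (in the paper the $r^{\epsilon}$ loss arises from replacing the limit $rV'/V\to 2$ by two-sided bounds $(2\pm\epsilon)/r$ in the Gr\"onwall rate and, for $|m|\le\frac{n}{2}$, from a Cauchy--Schwarz estimate on the cross term $\int\psi\psi'$, rather than from a $\log r\lesssim r^{\epsilon}$ integration).
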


 \smallskip 
 
 \subsection{Preliminary energy function} $\;$
 
 \smallskip 
 
 \noindent 
 For a $\phi$ defined in the cosmological region $\calR_5$, we define 
 $$
 \phi':=\frac{\partial \phi }{\partial r}\;\;\textrm{ and }\;\;
 \dot{\phi}:=\frac{\partial \phi}{\partial t}.
 $$
 We will also use the following notation:
 $$
 \begin{array}{ccl}
 \cnab \phi && \textrm{gradient of }\phi\textrm{ on }S^{n-1}\textrm{ with respect to the unit round metric},\\
 |\cnab\phi|&& \textrm{norm with respect to the unit round metric},\\
 \cLap\phi && \textrm{Laplacian of }\phi\textrm{ on }S^{n-1}\textrm{ with respect to the unit round metric},\\
 \cg && \textrm{determinant of the unit round metric}.\phantom{S^{n-1}\cnab \phi }
 \end{array}
 $$
 Suppose that $\phi$ satisfies the Klein-Gordon equation $\square_g\phi-m^2 \phi=0$. 
 Recall that the energy-momentum tensor associated with $\phi$ is given by 
 $$
 T_{\mu \nu}=\partial_\mu \phi \partial_\nu \phi -\frac{1}{2} g_{\mu\nu} (\partial_\alpha \partial^\alpha \phi+m^2\phi^2).
 $$
 Thus 
 \begin{eqnarray*}
  T(N,N)
  &=&
  \left( \phi'^2-\frac{1}{2}\frac{(-1)}{V} \Big(\phi'^2(-V)+\dot{\phi}^2 \frac{1}{V}+\frac{1}{r^2} |\cnab\phi|^2 +m^2\phi^2\Big)\right)V\\
  &=& \frac{1}{2}\left( V\phi'^2 +\frac{1}{V}\dot{\phi}^2 +\frac{1}{r^2} |\cnab\phi|^2 +m^2\phi^2\right).
 \end{eqnarray*}
  Define 
 $$
 X:=\frac{V^{\frac{1}{2}}}{r^{n-1}} N=\frac{V^{\frac{1}{2}}}{r^{n-1}} V^{\frac{1}{2}}\frac{\partial}{\partial r}=\frac{V}{r^{n-1}}\frac{\partial}{\partial r}.
 $$
 We define the energy 
 \begin{eqnarray*}
 E(r)&:=&\int_{\mR\times S^{n-1}} T(X,N) dV_n\phantom{\frac{1}{2} \int_{\mR\times S^{n-1}} }\\
 &=&
 \frac{1}{2} \int_{\mR\times S^{n-1}}  \left( V^2 \phi'^2 +\dot{\phi}^2 +\frac{V}{r^2} |\cnab \phi|^2+m^2V\phi^2\right)dt d\Omega.
 \end{eqnarray*}

\subsection{The auxiliary function $\psi$ and its PDE}$\;$

\smallskip 

\noindent 
The Klein-Gordon equation $\square_g\phi-m^2\phi=0$ can be rewritten as:
\begin{eqnarray*}
 && \frac{1}{\sqrt{-g}}\;\!\partial_\mu (\sqrt{-g} \;\partial^\mu \phi)-m^2 \phi=0,
 \\
 &\Leftrightarrow& \frac{1}{r^{n-1}\sqrt{\cg}}\;\! \partial_\mu\left( r^{n-1}\sqrt{\cg} \;g^{\mu\nu} \partial_\nu \phi\right)-m^2\phi=0.
 \end{eqnarray*}
This becomes 
$$
\frac{1}{r^{n-1}\!\sqrt{\cg}} \;\!\bigg( \partial_r \Big(r^{n-1} \!\sqrt{\cg} \; (-V) \partial_r \phi\Big) 
+\partial_t \Big( r^{n-1} \!\sqrt{\cg} \;\frac{1}{V} \partial_t \phi\Big) 
 +\frac{r^{n-1} \!\sqrt{\cg}}{r^2} \;\cLap \phi \bigg) -m^2\phi=0
 $$
 that is, 
 \begin{eqnarray*}
 && 
 -(V\phi')'-\frac{(n-1)}{r} V \phi'+\frac{\ddot{\phi}}{V}+\frac{1}{r^2} \cLap \phi -m^2 \phi=0,\\
 &\Leftrightarrow&
 \phi'' +\frac{(n-1)}{r}\phi' +\frac{V'}{V} \phi' -\frac{\ddot{\phi}}{V^2} -\frac{1}{r^2 V} \cLap \phi +\frac{m^2}{V}\phi=0.
\end{eqnarray*}
Define 
$$
\psi:=r^{\kappa}\phi,
$$
where 
$$
\kappa=\left\{\begin{array}{cl} \displaystyle
               \frac{n}{2}  &\textrm{ if } |m|\geq \displaystyle \frac{n}{2},\\ [.3cm]
               \displaystyle\frac{n}{2}-\sqrt{\frac{n^2}{4}-m^2}& \textrm{ if }|m|\leq  \displaystyle \frac{n}{2}.
              \end{array}\right.
$$
Then, using the PDE for $\phi$, it can be shown that 
\begin{equation}
 \label{RNdS_KG_aux}
\psi''+\left(\frac{V'}{V}+\frac{n-1}{r}-\frac{2\kappa}{r}\right)\psi'-\frac{\ddot{\psi}}{V^2}
-\frac{1}{r^2 V} \cLap \psi +\theta \psi=0,
\end{equation}
where 
$$
\theta:=\frac{m^2}{V}+\frac{\kappa}{r}\left(\frac{1}{r}-\frac{V'}{V}\right)-\frac{\kappa}{r^2} (n-1-\kappa).
$$

\smallskip 

\subsection{The case $|m|>\frac{n}{2}$}$\;$

\smallskip 

\noindent 
Then $\kappa=\frac{n}{2}$, and \eqref{RNdS_KG_aux} becomes 
\begin{equation}
 \label{RNdS_KG_aux_m_big}
 \psi''+\left(\frac{V'}{V}-\frac{1}{r}\right)\psi'-\frac{\ddot{\psi}}{V^2}
-\frac{1}{r^2 V} \cLap \psi +\theta \psi=0,
\end{equation}
where 
$$
\theta:=-\frac{n}{2}\left(\frac{n}{2}-1\right)\frac{1}{r^2} +\frac{m^2}{V}+\frac{n}{2r}\left( \frac{1}{r}-\frac{V'}{V}\right).
$$
We will use an energy function to obtain the required decay of $\psi$ for large $r$, 
and in order to do so, we will need to keep careful track of the limiting behaviour of the 
various functions appearing in the expression for $\theta$ and the coefficients of the PDE \eqref{RNdS_KG_aux_m_big}. 
We will do this step-by-step in a sequence of lemmas. 

\begin{lemma}
Given any $\epsilon>0$, there exists an $r_0$ large enough so that for all $r\geq r_0$, 
$$
\frac{2+\epsilon}{r}\geq \frac{V'}{V}\geq \frac{2-\epsilon}{r}.
$$
\end{lemma}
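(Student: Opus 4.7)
The plan is to reduce the claim to a direct asymptotic calculation, by computing $rV'/V$ explicitly and showing that this ratio tends to $2$ as $r\to\infty$. Once this limit is established, both inequalities follow simultaneously by the definition of a limit.

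First, I would differentiate the defining expression
$$
V(r)=r^2+\frac{2M}{r^{n-2}}-\frac{e^2}{r^{n-1}}-1
$$
to obtain
$$
V'(r)=2r-\frac{2M(n-2)}{r^{n-1}}+\frac{e^2(n-1)}{r^n}.
$$
Then I would form the product
$$
rV'(r)=2r^2-\frac{2M(n-2)}{r^{n-2}}+\frac{e^2(n-1)}{r^{n-1}},
$$
so that
$$
\frac{rV'(r)}{V(r)}=\frac{2r^2-\dfrac{2M(n-2)}{r^{n-2}}+\dfrac{e^2(n-1)}{r^{n-1}}}{r^2+\dfrac{2M}{r^{n-2}}-\dfrac{e^2}{r^{n-1}}-1}.
$$
Dividing the numerator and denominator by $r^2$, I get
$$
\frac{rV'(r)}{V(r)}=\frac{2-\dfrac{2M(n-2)}{r^{n}}+\dfrac{e^2(n-1)}{r^{n+1}}}{1+\dfrac{2M}{r^{n}}-\dfrac{e^2}{r^{n+1}}-\dfrac{1}{r^2}}.
$$
All the correction terms in the numerator and denominator vanish as $r\to\infty$, and since $n>2$ the denominator tends to $1$, which is bounded away from $0$ for $r$ sufficiently large. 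In particular, the ratio tends to $2$.

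Next, given $\epsilon>0$, I would apply the standard $\varepsilon$--definition of limit: there exists $r_0>r_c$ (note $V(r)>0$ on $(r_c,\infty)$ by the global redshift lemma already established, so the ratio is well defined there) such that for all $r\geq r_0$,
$$
\left|\frac{rV'(r)}{V(r)}-2\right|\leq \epsilon.
$$
Rearranging and dividing by $r>0$ yields the two-sided bound
$$
\frac{2-\epsilon}{r}\leq \frac{V'(r)}{V(r)}\leq \frac{2+\epsilon}{r},
$$
which is precisely what is claimed.

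I do not expect any serious obstacle here; the only mild subtlety is ensuring $V(r)$ stays positive and bounded away from zero on the interval $[r_0,\infty)$ so that the quotient is well defined and the limit can be taken. This is immediate from $V\sim r^2$ and the subextremality assumption together with the global redshift lemma.
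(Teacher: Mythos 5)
Your proposal is correct and follows essentially the same route as the paper: both compute $V'$ explicitly, form $rV'/V$, divide through by $r^2$, observe that the limit is $2$ as $r\to\infty$, and conclude by the definition of the limit. The added remark about $V>0$ for $r>r_c$ is a harmless (and correct) bit of bookkeeping that the paper leaves implicit.
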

\begin{proof}
 This follows immediately from
 $$
  \lim_{r\rightarrow \infty} r\frac{V'}{V}
  \!=\!
  \lim_{r\rightarrow \infty}r\cdot 
  \frac{2r-\frac{2(n-2)M}{r^{n-1}}+\frac{e^2(n-1)}{r^n}}{r^2+\frac{2M}{r^{n-2}}-\frac{e^2}{r^{n-1}}-1}
  \!=\!\lim_{r\rightarrow \infty}\frac{2-\frac{2(n-2)M}{r^{n}}+\frac{e^2(n-1)}{r^{n+1}}}{1+\frac{2M}{r^{n}}-\frac{e^2}{r^{n+1}}-\frac{1}{r^2}}
  \!=\!2.
  $$
  \end{proof}
  
\begin{lemma}
There exists $r_0$ large enough so that for $r\geq r_0$, we have $\theta>0$.   
\end{lemma}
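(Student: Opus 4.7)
The plan is to show that $r^2 \theta$ converges to a strictly positive limit as $r \to \infty$, from which the claim follows immediately. The key arithmetic is that the three contributions to $r^2 \theta$ combine to give $m^2 - n^2/4$, which is positive by the hypothesis $|m| > n/2$.

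First I would analyze each term in $\theta$ separately, multiplied by $r^2$. The term $-\frac{n}{2}\big(\frac{n}{2}-1\big)\frac{1}{r^2}$ contributes the constant $-\frac{n^2}{4}+\frac{n}{2}$. For the term $\frac{m^2}{V}$, since $V = r^2 + O(1)$ as $r\to\infty$, one obtains $r^2 \cdot \frac{m^2}{V} \to m^2$. For the remaining term $\frac{n}{2r}\big(\frac{1}{r}-\frac{V'}{V}\big)$, I would refine the previous lemma, whose proof already established $\lim_{r\to\infty} r V'/V = 2$. From that same computation one reads off $\frac{V'}{V} = \frac{2}{r} + o(1/r)$, so that
\[
\frac{n}{2r}\Big(\frac{1}{r}-\frac{V'}{V}\Big) = \frac{n}{2r}\Big(-\frac{1}{r}+o(1/r)\Big) = -\frac{n}{2r^2}+o(1/r^2),
\]
whose $r^2$-multiple tends to $-\frac{n}{2}$.

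Summing the three limits gives
\[
\lim_{r\to\infty} r^2 \theta \;=\; \Big(-\tfrac{n^2}{4}+\tfrac{n}{2}\Big) + m^2 + \Big(-\tfrac{n}{2}\Big) \;=\; m^2 - \tfrac{n^2}{4} \;>\; 0,
\]
using $|m| > n/2$. Therefore there exists $r_0$ large enough so that $r^2 \theta \geq \tfrac12\big(m^2 - n^2/4\big) > 0$ for all $r \geq r_0$, which in particular yields $\theta > 0$ there.

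The calculation is entirely elementary; the only mild obstacle is making sure the sharpness of the asymptotic $V'/V = 2/r + o(1/r)$ is clean enough — the bounds from the previous lemma alone only sandwich $V'/V$ between $(2\pm\epsilon)/r$, but inspecting the explicit rational expression for $rV'/V$ computed in that lemma's proof shows that the error is actually $O(1/r^3)$, which is more than sufficient.
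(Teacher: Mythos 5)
Your proposal is correct and follows essentially the same route as the paper: the paper lower-bounds $r^2\theta$ by $\bigl(m^2-\tfrac{n^2}{4}\bigr)-\epsilon\bigl(m^2+\tfrac{n}{2}\bigr)$ using the sandwich $r^2/V\geq 1-\epsilon$ and $V'/V\leq (2+\epsilon)/r$, which is just the $\epsilon$--$\delta$ phrasing of your limit computation $\lim_{r\to\infty} r^2\theta = m^2-\tfrac{n^2}{4}>0$. Your closing worry is unnecessary: the sandwich $(2\pm\epsilon)/r$ for every $\epsilon>0$ is exactly the statement $rV'/V\to 2$, i.e. $V'/V=2/r+o(1/r)$, which is all your limit argument requires.
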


\noindent 
(We note that the proof uses the fact that $|m|>\frac{n}{2}$, and so this result is specific to this subsection.) 

\begin{proof}
We have 
$$
\lim_{r\rightarrow \infty}\frac{r^2}{V}=\lim_{r\rightarrow \infty}\frac{1}{1+\frac{2M}{r^n}-\frac{e^2}{r^{n+1}}-\frac{1}{r^2}}=1,
$$
and so there exists a $r_0'$ such that 
$$
\displaystyle 
\frac{r^2}{V}\geq 1-\epsilon
$$
for $r\geq r_0'$. Also, by the previous lemma, there exists a $r_0>r_0'$ such that 
$$
\frac{V'}{V}\leq \frac{2+\epsilon}{r} 
$$
for all $r\geq r_0$. Then we have for $r>r_0$ that 
\begin{eqnarray*}
 \theta &=& \frac{1}{r^2}\left(
 -\frac{n}{2}\left(\frac{n}{2}-1\right) +m^2\frac{r^2}{V}+\frac{n}{2}\Big(1-\frac{V'}{V}r\Big)\right)\\
 &\geq & \frac{1}{r^2}\left(
 -\frac{n^2}{4}+\frac{n}{2} +m^2(1-\epsilon)+\frac{n}{2}\Big(1-\frac{(2+\epsilon)}{r}r\Big)\right)
 \\
 &=& \frac{1}{r^2}\left(\delta-\epsilon\Big(\delta+\frac{n}{2}+\frac{n^2}{4}\Big)\right),
\end{eqnarray*}
where 
 $
\delta:=m^2-\frac{n^2}{4}>0.
$ 

\medskip 

\noindent 
Taking $\epsilon$ at the outset small enough so as to satisfy 
 $
0<\epsilon<\frac{\delta}{\delta+\frac{n}{2}+\frac{n^2}{4}},
$ 
we see that $\theta>0$ for $r\geq r_0$.
\end{proof}

\smallskip 

\noindent 
Define the energy 
$$
\calE(r):=\frac{1}{2}\int_{\mR\times S^{n-1}} \left(\psi'^2+\frac{1}{V^2}\dot{\psi}^2+\frac{1}{r^2 V} |\cnab \psi|^2+\theta \psi^2\right) dt d\Omega.
$$

\medskip 

\noindent 
(We assume for the moment that this is finite for a sufficiently large $r_0$. 
Later on, in the subsection on redshift estimates, we will see how our initial finiteness of Sobolev norms 
of $\phi$ on the two branches $\calC \calH_1^+$, $\calC \calH_2^+$ of the cosmological horizon  guarantees this.)

\medskip

\noindent We now proceed to find an expression for $\calE'(r)$, and to  simplify it, we will use \eqref{RNdS_KG_aux_m_big}, and the divergence theorem, 
to get rid of the terms involving $\ddot{\psi}$ and $\cLap \psi$, the spherical Laplacian of $\psi$: 
{\small 
\begin{eqnarray*}
\calE'(r)\!\!\!\!\!&=&\!\!\! \!\! \int_{\mR\times S^{n-1}} \!\! \left( \psi' \psi'' +\frac{1}{2} \Big(\frac{1}{V^2}\Big)' \dot{\psi}^2 +
\frac{1}{V^2} \dot{\psi}\dot{\psi}'+\frac{1}{2} \Big( \frac{1}{r^2 V}\Big)'|\cnab \psi|^2\right.\\
&&\left. \phantom{aaaaaaaa} +\frac{1}{r^2V}\langle \cnab \psi,(\cnab \psi)'\rangle 
+\frac{\theta'}{2} \psi^2 +\theta \psi \psi' \right) dt d\Omega\\
&=&\!\!\!\!\! \int_{\mR\times S^{n-1}}\!\!  \left( \psi' \bigg(-\Big( \frac{V'}{V}-\frac{1}{r} \Big)\psi' +\frac{1}{V^2}\ddot{\psi}+\frac{1}{r^2V}\cLap \psi-\cancel{\theta \psi}\bigg) \right. \\
&&\phantom{aaaaaaaa} +\frac{1}{2} \Big(\frac{1}{V^2}\Big)' \dot{\psi}^2 +
\frac{1}{V^2} \dot{\psi}\dot{\psi}'+\frac{1}{2} \Big( \frac{1}{r^2 V}\Big)'|\cnab \psi|^2\\
&&\left. \phantom{aaaaaaaa} +\frac{1}{r^2V}\langle \cnab \psi,(\cnab \psi)'\rangle 
+\frac{\theta'}{2} \psi^2 +\cancel{\theta \psi \psi'} \right) dt d\Omega\\
&=&\!\!\!\!\! \int_{\mR\times S^{n-1}}\!\!  \left( -\Big( \frac{V'}{V}\!-\!\frac{1}{r} \Big)\psi'^2 +\cancel{\frac{1}{V^2}\ddot{\psi}\psi'}
+\cancel{  \frac{1}{V^2} \dot{\psi}\dot{\psi}' }+\!
\frac{1}{r^2V}\left( \bcancel{\psi' \cLap \psi}+\bcancel{ \langle \cnab \psi, (\cnab \psi)'\rangle} \right)  \right. \\
&&\phantom{aaaaaaaa} \left. +\frac{1}{2} \Big(\frac{1}{V^2}\Big)' \dot{\psi}^2 +\frac{1}{2} \Big( \frac{1}{r^2 V}\Big)'|\cnab \psi|^2
+\frac{\theta'}{2} \psi^2  \right) dt d\Omega.
\end{eqnarray*}
}
We note that in the above,  getting rid of the spherical Laplacian by using the divergence theorem is allowed because the compact sphere $S^{n-1}$ has no boundary. 
For the second time derivative, however, there is a boundary at infinity (with two connected components), namely 
$$
\lim_{t\rightarrow +\infty}\int_{S^{n-1}} \dot{\psi} \psi' d\Omega-\lim_{t\rightarrow -\infty} \int_{S^{n-1}} \dot{\psi} \psi' d\Omega,
$$
which can be seen to be equal to $0$, by  Lemma~\ref{cylinder_technical_lemma_surface_integral} 
from Appendix B.
\noindent $\!\!$Thus
{\small $$
\calE'(r)=
\int_{\mR\times S^{n-1}} 
\left(-\Big(\frac{V'}{V}-\frac{1}{r}\Big)\psi'^2+\frac{1}{2}\Big(\frac{1}{V^2}\Big)'\dot{\psi}^2+
\frac{1}{2}\Big(\frac{1}{r^2V}\Big)' |\cnab \psi|^2 +\frac{\theta'}{2}\psi^2
\right) dt d\Omega.
$$}

\goodbreak 

\noindent 
Let $\epsilon>0$ be given. Then there exists an $r_0$ large enough such that:

\medskip 

\noindent (a) $\displaystyle \frac{V'}{V}-\frac{1}{r}\geq \frac{2-\epsilon}{r}-\frac{1}{r}=\frac{1-\epsilon}{r}$,

\medskip 

\noindent (b) $\displaystyle \Big(\frac{1}{V^2}\Big)'=-2\frac{V'}{V}\frac{1}{V^2}\leq -2\frac{(2-\epsilon)}{r}\frac{1}{V^2}$, 

\medskip 

\noindent (c) $\displaystyle \Big(\frac{1}{r^2V}\Big)'=-\frac{1}{r^2V}\Big(\frac{2}{r}+\frac{V'}{V}\Big)\leq -\frac{1}{r^2V}\Big(\frac{2}{r}+\frac{2-\epsilon}{r}\Big)=-\frac{1}{r^2 V}
\frac{(4-\epsilon)}{r}$,

\medskip 

\noindent (d) $\displaystyle \frac{\theta'}{\theta}\!
=\!\frac{1}{r} \Big( \frac{ -\frac{n}{2}(\frac{n}{2}-1)(-2)\!-\frac{m^2V'}{r^3 V^2} \!-\frac{n}{2r^5}(\frac{1}{r}-\frac{V'}{V}) +\frac{n}{2r^4}(-\frac{1}{r^2}-\frac{V''V-V'^2}{V^2}) }{
 -\frac{n}{2}(\frac{n}{2}-1)+\frac{m^2}{r^2 V} +\frac{n}{2r^3}(\frac{1}{r}-\frac{V'}{V})  } \Big)$
 
 $\phantom{aaa} \displaystyle \leq \frac{1}{r}(-2+\epsilon).$

 \medskip 
 
\noindent 
Hence, using (a)-(d) above, we obtain 
{\small 
\begin{eqnarray*}
 \calE'(r)\!\!\!
 &=&\!\!\!
\int_{\mR\times S^{n-1}}\!\!
\left(-\Big(\frac{V'}{V}-\frac{1}{r}\Big)\psi'^2+\frac{1}{2}\Big(\frac{1}{V^2}\Big)'\dot{\psi}^2+
\frac{1}{2}\Big(\frac{1}{r^2V}\Big)' |\cnab \psi|^2 +\frac{\theta'}{2}\psi^2
\!\right) dt d\Omega\\
&\leq &\!\!\!
\int_{\mR\times S^{n-1}} 
\left(-\frac{(1-\epsilon)}{r}\psi'^2+\frac{1}{2}\frac{(-2)(2-\epsilon)}{rV^2}\dot{\psi}^2+
\frac{1}{2}\frac{(-1)}{r^2V}\frac{(4-\epsilon)}{r}|\cnab \psi|^2 \right.\\
& & \left.\phantom{aaaaaaa}+\!\frac{1}{2}\frac{1}{r}(-2+\epsilon)\theta \psi^2
\!\right) dt d\Omega\\
&=&\!\!\!-\frac{1}{r}\frac{1}{2}
\int_{\mR\times S^{n-1}} 
\left(2(1-\epsilon)\psi'^2 + 2(2-\epsilon)\frac{1}{V^2}\dot{\psi}^2 + 
(4-\epsilon)\frac{1}{r^2V}|\cnab \psi|^2 \right.\\
&& \left. \phantom{\frac{1}{V^2}aaaaaaala}+(2-\epsilon)\theta \psi^2
\right) dt d\Omega \\
&\leq&\!\!\!  -\frac{2(1-\epsilon)}{r}\calE(r).
\end{eqnarray*}}

\noindent 
Using Gr\"onwall's inequality (see e.g. \cite[Appendix~B(j)]{Eva}), we obtain 
$$
\calE(r)\leq \;\calE(r_0) e^{\int_{r_0}^r -\frac{2(1-\epsilon)}{r} dr }\; =\calE(r_0)\Big(\frac{r}{r_0}\Big)^{-2(1-\epsilon)}
\lesssim r^{-2+2\epsilon}.
$$
Thus 
$$
\int_{\mR\times S^{n-1}} \theta \psi^2 dt d\Omega \;\leq\;  2\calE(r) \;\lesssim \; r^{-2+2\epsilon}, 
$$
and so 
$$
\int_{\mR\times S^{n-1}}  \psi^2 dt d\Omega 
\; \lesssim \;
\frac{r^{2\epsilon}}{r^2 \theta} 
\; \lesssim \; 
\frac{r^{2\epsilon}}{r^2\frac{1}{r^2}} \; = \;  r^{2\epsilon}. 
$$
Hence 
$$
\|\psi(r,\cdot)\|_{L^2(\mR\times S^{n-1})} \lesssim r^\epsilon.
$$
Consequently, 
$$
\|\phi(r,\cdot)\|_{L^2(\mR\times S^{n-1})} \lesssim r^{-\frac{n}{2}+\epsilon}.
$$
Recall that $S^{n-1}$ admits $\frac{n(n-1)}{2}$ independent Killing vectors, given by 
$$
L_{ij}=x^i \frac{\partial}{\partial x^j}-x^j \frac{\partial}{\partial x^i},
$$
for $i<j$ (under the usual embedding $S^{n-1}\subset \mR^n$). As 
$\frac{\partial}{\partial t}$ and $L_{ij}$ are Killing vector fields, it follows that 
$\dot{\phi}$ and $L_{ij}\cdot \phi$ are also solutions to $\square_g \phi-m^2 \phi=0$. 
Commuting with the Killing vector fields $\frac{\partial}{\partial t}$ and $L_{ij}$,  
 if we assume at the moment\footnote{This will be proved later in the subsection on redshift estimates.} that at $r_0$ we have 
 $
\|\phi(r_0,\cdot)\|_{H^k(\{r=r_0\})} <+\infty,
$ 
then we also obtain for all $r\geq r_0$ that 
 $
\|\phi(r,\cdot)\|_{H^{k'}(\mR\times S^{n-1})} \lesssim r^{-\frac{n}{2}+\epsilon},
$ 
where $k'=k-2>\frac{n}{2}$. By the Sobolev inequality\footnote{ 
The part of the Sobolev embedding theorem concerning inclusion in H\"older spaces  
holds for a complete Riemannian manifold with a positive injectivity radius and a bounded sectional curvature; see e.g. \cite[\S3.3, Thm.3.4]{Heb} or \cite[Ch.2]{Aub}.},  
 $
\|\phi(r,\cdot)\|_{L^\infty(\mR\times S^{n-1})} \lesssim r^{-\frac{n}{2}+\epsilon}.
$

This completes the proof of Theorem~\ref{theorem_4} in the case when $|m|>\frac{n}{2}$ 
(provided we show the aforementioned finiteness of energy, which will be carried out in Subsection~\ref{Subsection_redshift_estimates} on redshift estimates). 

\smallskip 

\subsection{The case $|m|\leq \frac{n}{2}$}$\;$

\smallskip 

\noindent 
Let $\epsilon'>0$ be given. Define 
$$
\widetilde{\calE}(r)=\frac{1}{2} \int_{\mR \times S^{n-1}} 
\left( \psi'^2 +\frac{1}{V^2} \dot{\psi}^2+\frac{1}{r^2 V} |\cnab \psi|^2 +\frac{\epsilon'}{r^2} \psi^2\right) dt d\Omega.
$$
We now proceed to find an expression for $\widetilde{\calE}'(r)$, and we will simplify it using  \eqref{RNdS_KG_aux} and the divergence theorem, 
in order to get  rid of the terms involving $\ddot{\psi}$ and the spherical Laplacian of $\psi$: $\phantom{\ddot{\psi}}$
{\small 
\begin{eqnarray*}
\!\!\!\!\!\!\!
&&\!\!\!\!\!\!\!\!\!\!\!\!\widetilde{\calE}'(r)\phantom{\int_{\mR\times S^{n-1}} \!\! \boldsymbol{\left(} \frac{1}{2} \Big(\frac{1}{V^2}\Big)'\right) }
\\
\!\!\!\!\!&=&\!\!\! \!\! \int_{\mR\times S^{n-1}} \!\! \left( \psi' \psi'' +\frac{1}{2} \Big(\frac{1}{V^2}\Big)' \dot{\psi}^2 +
\frac{1}{V^2} \dot{\psi}\dot{\psi}'+\frac{1}{2} \Big( \frac{1}{r^2 V}\Big)'|\cnab \psi|^2\right.\\
&&\left. \phantom{aaaaaaaa} +\frac{1}{r^2V}\langle \cnab \psi,(\cnab \psi)'\rangle 
-\frac{\epsilon'}{r^3} \psi^2 +\frac{\epsilon'}{r^2} \psi \psi' \boldsymbol{\right)} dt d\Omega\\
\!\!\!\!\!&=&\!\!\!\!\! \int_{\mR\times S^{n-1}}\!\!  \left( \psi' \bigg(-\Big( \frac{V'}{V}+\frac{n-1}{r}-\frac{2\kappa}{r} \Big)\psi' +\cancel{\frac{\ddot{\psi}}{V^2}}+\bcancel{
\frac{1}{r^2V}\cLap \psi}
-\theta \psi \bigg) \right.\\
&& \phantom{aaaaaaaa} +\frac{1}{2} \Big(\frac{1}{V^2}\Big)' \dot{\psi}^2 +
\cancel{\frac{1}{V^2} \dot{\psi}\dot{\psi}'}+\frac{1}{2} \Big( \frac{1}{r^2 V}\Big)'|\cnab \psi|^2 
+\bcancel{\frac{1}{r^2V}\langle \cnab \psi,(\cnab \psi)'\rangle} \\
&& \phantom{aaaaaa}\left.\phantom{\Bigg(} 
-\frac{\epsilon'}{r^3} \psi^2 +\frac{\epsilon'}{r^2} \psi \psi' \right) dt d\Omega\\
\!\!\!\!\!&=&\!\!\!\!\! \int_{\mR\times S^{n-1}}\!\!  \left( \!-\Big( \frac{V'}{V}\!+\!\frac{n-1}{r}\!-\!\frac{2\kappa}{r} \Big)\psi'^2 
\!+\!\frac{1}{2} \Big(\frac{1}{V^2}\!\Big)' \dot{\psi}^2 \!+\! \frac{1}{2} \Big( \frac{1}{r^2 V}\!\Big)'|\cnab \psi|^2 \!-\!\frac{\epsilon'}{r^3} \psi^2\!\right) dt d\Omega \\
&& +\left( \frac{\epsilon'}{r^2} -\theta\right) \int_{\mR\times S^{n-1}}\!\! \psi \psi'  dt d\Omega.
\end{eqnarray*}
}
Again, for getting rid of the spherical Laplacian, we use the divergence theorem, noting that the sphere $S^{n-1}$ has no boundary. 
For handling the second time derivative, as before, we note that there is a boundary at infinity (with two connected components), 
which can be seen to be equal to $0$, by  Lemma~\ref{cylinder_technical_lemma_surface_integral} 
from Appendix B.

\medskip 

\noindent 
Thus  
{\small 
\begin{eqnarray*}
 \widetilde{\calE}'(r)\!\!\!\!&\!=\!&\!\!\!\!
\int_{\mR\times\! S^{n-1}}\!\! \!\left(\! -\Big( \frac{V'}{V}\!+\!\frac{n\!-\!1}{r}\!-\!\frac{2\kappa}{r}\Big)\psi'^2 
\!+\!\frac{1}{2} \Big(\frac{1}{V^2}\!\Big)' \dot{\psi}^2 \!+\!\frac{1}{2}\Big(\frac{1}{r^2 V}\!\Big)'|\cnab \psi|^2
\!\!-\!\frac{\epsilon'}{r^3} \psi^2\!\right) dt d\Omega\\
&&+\left( \frac{\epsilon'}{r^2}-\theta\right) \int_{\mR\times S^{n-1}} \psi \psi' dt d\Omega. 
\end{eqnarray*}}
\noindent Now there exists an $r_0$ large enough such that for  all $r\geq r_0$,  we have:
\begin{itemize}

\smallskip 

 \item[(i)] $\displaystyle \frac{V'}{V}+ \frac{n-1}{r}-\frac{2\kappa}{r}\geq \frac{2-\epsilon'}{r}+\frac{n-1}{r}-\frac{2\kappa}{r}=\frac{1-\epsilon'+(n-2\kappa)}{r}
 \geq \frac{1-\epsilon'}{r}$,
 
 \smallskip 
 
 \noindent using $n-2\kappa\geq 0$.
 
 \medskip 
 
 \item[(ii)] $\displaystyle \Big(\frac{1}{V^2}\Big)'\leq -\frac{2(2-\epsilon')}{r}\cdot \frac{1}{V^2}$.
 
 \medskip 
 
 \item[(iii)] $ \displaystyle \Big( \frac{1}{r^2 V}\Big)'\leq -\frac{1}{r^2 V} \Big( \frac{2}{r}+ \frac{2-\epsilon'}{r}\Big)$.
 
 \medskip 
\end{itemize}
Using (i), (ii) and (iii), it can be seen that 
{\small 
\begin{eqnarray*}
 \widetilde{\calE}'(r)\!\!\!\!&\!\leq\!&\!\!\!\!
 \int_{\mR\times S^{n-1}}\!\! \!\left(\! -\frac{(1\!-\!\epsilon')}{r}\psi'^2 
\!-\!\frac{1}{2} \frac{2(2\!-\!\epsilon')}{rV^2}\dot{\psi}^2 
\!-\!\frac{1}{2}\frac{1}{r^2 V}\Big(\frac{2}{r}\!+\!\frac{2\!-\!\epsilon'}{r}\Big) |\cnab \psi|^2 
\!-\!\frac{\epsilon'}{r^3} \psi^2\!\right) dt d\Omega
\\
&&+\left( \frac{\epsilon'}{r^2}-\theta\right) \int_{\mR\times \!S^{n-1}} \psi \psi' dt d\Omega\\
\!\!\!\!&\!\leq\!&\!\!\!\!
-\frac{1}{r}\frac{1}{2} \int_{\mR\times S^{n-1}}\!\! \!\left(\! 2(1\!-\!\epsilon')\psi'^2 
\!+\!2(2\!-\!\epsilon')\frac{1}{V^2}\dot{\psi}^2 
\!+\!(4-\epsilon')\frac{1}{r^2 V} |\cnab \psi|^2 
\!+\!\frac{2\epsilon'}{r^2} \psi^2\!\right) dt d\Omega\\
&&+\left( \frac{\epsilon'}{r^2}-\theta\right)\int_{\mR\times S^{n-1}} \psi \psi' dt d\Omega.
\end{eqnarray*}}

\noindent 
Hence 
$$
\widetilde{\calE}'(r)\leq -\frac{2(1-\epsilon')}{r}\widetilde{\calE}(r)+\Big( \frac{\epsilon'}{r^2}-\theta\Big)\int_{\mR\times S^{n-1}} \psi \psi' dt d\Omega.
$$
We have 
\begin{eqnarray*}
\theta&=&\frac{m^2}{V}+\frac{\kappa}{r}\Big(\frac{1}{r}-\frac{V'}{V}\Big)-\frac{\kappa}{r^2}(n-1-\kappa)\\
&=&\frac{1}{r^2} \left( \frac{m^2}{\frac{V}{r^2}}+\kappa\Big(1-\frac{V'}{V}r\Big) -\kappa(n-1-\kappa)\right).
\end{eqnarray*}
As $\displaystyle \frac{V}{r^2}\stackrel{r\rightarrow\infty}{\longrightarrow} 1$ and 
$\displaystyle \frac{V'}{V}r\stackrel{r\rightarrow\infty}{\longrightarrow} 2$, it follows that 
$$
r^2\theta \;\stackrel{r\rightarrow\infty}{\longrightarrow}\; \frac{m^2}{1}+\kappa(1-2)-\kappa(n-1-\kappa)=m^2-\kappa n+\kappa^2=0.
$$
Thus, given $\epsilon'>0$, there exists an $r_0$ large enough such that for $r\geq r_0$, 
$|r^2\theta|<\epsilon'$, that is, 
$$
|\theta|<\frac{\epsilon'}{r^2}.
$$
So 
\begin{eqnarray*}
 \widetilde{\calE}'(r)&\leq & -\frac{2(1-\epsilon')}{r}\widetilde{\calE}(r)+\Big( \frac{\epsilon'}{r^2}-\theta\Big)\int_{\mR\times S^{n-1}} \psi \psi' dt d\Omega\\
 &\leq & -\frac{2(1-\epsilon')}{r}\widetilde{\calE}(r)+\Big( \frac{\epsilon'}{r^2}+\frac{\epsilon'}{r^2}\Big)\left|\int_{\mR\times S^{n-1}} \psi \psi' dt d\Omega\right|.
\end{eqnarray*}
The Cauchy-Schwarz inequality applied to the last integral gives 
\begin{eqnarray*}
\left|\int_{\mR\times S^{n-1}} \psi \psi' dt d\Omega\right|
&\leq & 
\sqrt{\int_{\mR\times S^{n-1}}  \psi^2 dt d\Omega}\;\cdot \; \sqrt{\int_{\mR\times S^{n-1}} \psi'^2 dt d\Omega}\\
&\leq & \sqrt{\frac{2r^2}{\epsilon'}\widetilde{\calE}(r)} \;\cdot \;\sqrt{2\;\!\widetilde{\calE}(r)}=\frac{2r}{\sqrt{\epsilon'}} \widetilde{\calE}(r).
\end{eqnarray*}
So we obtain 
$$
\widetilde{\calE}'(r)\leq -\frac{2(1-\epsilon')}{r}\widetilde{\calE}(r)+\frac{2\epsilon'}{r^2}\frac{2r}{\sqrt{\epsilon'}} \widetilde{\calE}(r)
=(-2+2\epsilon'+4\sqrt{\epsilon'})\frac{1}{r} \widetilde{\calE}(r).
$$
Application of Gr\"onwall's inequality yields 
$$
\widetilde{\calE}(r)\leq \widetilde{\calE}(r_0)e^{\int_{r_0}^r (-2+2\epsilon'+4\sqrt{\epsilon'})\frac{1}{r}dr} =
\frac{\widetilde{\calE}(r_0)}{r_0^{-2+2\epsilon'+4\sqrt{\epsilon'}}} r^{-2+2\epsilon'+4\sqrt{\epsilon'}}.
$$
So 
\begin{eqnarray*}
\int_{\mR\times S^{n-1}} \psi^2 dt d\Omega &=& \frac{2r^2}{\epsilon'}\frac{1}{2} \int_{\mR\times S^{n-1}} \frac{\epsilon'}{r^2} \psi^2 dt d\Omega \leq  \frac{2r^2}{\epsilon'} \widetilde{\calE}(r)\\
&\leq & \frac{2r^2}{\epsilon'}\frac{\widetilde{\calE}(r_0)}{r_0^{-2+2\epsilon'+4\sqrt{\epsilon'}}} r^{-2+2\epsilon'+4\sqrt{\epsilon'}}
=  \frac{2\widetilde{\calE}(r_0)}{\epsilon' r_0^{-2+2\epsilon'+4\sqrt{\epsilon'}}} r^{2\epsilon'+4\sqrt{\epsilon'}}.
\end{eqnarray*}
Thus 
$$
\|\psi(r,\cdot)\|_{L^2(\mR\times S^{n-1})} \leq  \sqrt{\frac{2\widetilde{\calE}(r_0)}{\epsilon'}} \frac{1}{r_0^{-1+\epsilon'+2\sqrt{\epsilon'}}} r^{\epsilon'+2\sqrt{\epsilon'}},
$$
and so 
$$
\|\phi(r,\cdot)\|_{L^2(\mR\times S^{n-1})} \leq \sqrt{\frac{2\widetilde{\calE}(r_0)}{\epsilon'}} \frac{1}{r_0^{-1+\epsilon'+2\sqrt{\epsilon'}}} r^{-\kappa +\epsilon'+2\sqrt{\epsilon'}}.
$$
Given $\epsilon>0$, arbitrarily small, we can choose $\epsilon'=\epsilon'(\epsilon)>0$ small enough so that $\epsilon'+2\sqrt{\epsilon'}<\epsilon$ at the outset, so that 
$$
\|\phi(r,\cdot)\|_{L^2(\mR\times S^{n-1})} \lesssim r^{-\kappa+\epsilon}.
$$
Again assuming at the moment that at $r_0$ we have 
$$
\|\phi(r_0,\cdot)\|_{H^k(\{r=r_0\})} <+\infty,
$$
and by commuting with the Killing vector fields $\frac{\partial}{\partial t}$ and $L_{ij}$, then we also obtain for all $r\geq r_0$ that 
$$
\|\phi(r,\cdot)\|_{H^{k'}(\mR\times S^{n-1})} \lesssim r^{-(\frac{n}{2}-\sqrt{\frac{n^2}{4}-m^2}\;\!)+\epsilon},
$$
where $k'=k-2>\frac{n}{2}$. By the Sobolev inequality, this yields 
$$
\|\phi(r,\cdot)\|_{L^\infty(\mR\times S^{n-1})} \lesssim r^{-(\frac{n}{2}-\sqrt{\frac{n^2}{4}-m^2}\;\!)+\epsilon}.
$$
This completes the proof of Theorem~\ref{theorem_4} in the case when $|m|\leq \frac{n}{2}$ 
(provided we show the finiteness of energy, which will be carried out in the subsection on redshift estimates below). 

\medskip 

\subsection{Redshift estimates}$\;$\label{Subsection_redshift_estimates}

\smallskip 

\noindent 
The last step is to use redshift estimates to transfer finiteness of the energies along the branches $\calC\calH_1^+$ and $\calC\calH_2^+$ of the cosmological horizon  
to finiteness at $r=r_0$, justifying the finiteness of the energies $\calE(r_0)$ and $\widetilde{\calE}(r_0)$ assumed in the previous two subsections.

\smallskip 

\noindent 
Define the new coordinate $u$ by 
$$
u=t+\int_{r_*}^r \frac{1}{V} dr,
$$
where $r_*>r_c$ is arbitrary, but fixed. Then 
$$
du=dt+\frac{1}{V}dr.
$$
The Reissner-Nordstr\"om-de Sitter metric can be rewritten using the coordinates $(u,r,\cdots)$, instead of the old $(t,r,\cdots)$-coordinates, as follows 
\begin{eqnarray*}
 g&=& -\frac{1}{V} dr^2+Vdt^2+r^2 d\Omega^2=V\Big(-\frac{1}{V^2} dr^2+dt^2\Big) +r^2d\Omega^2\\
 &=&-V\Big(\frac{1}{V}dr +dt\Big) \Big(\frac{1}{V} dr -dt\Big)+r^2d\Omega^2 \\
 &=&-Vdu\Big(-du +\frac{2}{V}dr\Big)+r^2 d\Omega^2=Vdu^2-2du dr +r^2d\Omega^2.
\end{eqnarray*}
The matrix of the metric in the $(u,r,\cdots)$-coordinate system is 
$$
[g_{\mu\nu}]=\left[\begin{array}{rr|c} V& -1 & \\-1& 0 &\\\hline & & \ast \end{array}\right].
$$
Since 
$$
\det \left[\begin{array}{rr} V & -1 \\ -1 & 0 \end{array}\right]=-1,
$$
this coordinate system extends across the cosmological horizon $r=r_c$ (where $V=0$). The hypersurfaces of 
constant $u$ are null and transverse to the cosmological horizon. Thus only one of the branches of the cosmological 
horizon, namely $\calC\calH_1^+$,  is covered by the $(u,r,\cdots)$-coordinates. (In order to cover the other branch $\calC\calH_2^+$, 
where $u=-\infty$, we can introduce 
$$
v:=-t+\int_{r_*}^r\frac{1}{V}dr,
$$
and use the $(v,r,\cdots)$-coordinate chart.)

%

 \begin{figure}[h]
     \center 
      \includegraphics[width=9 cm]{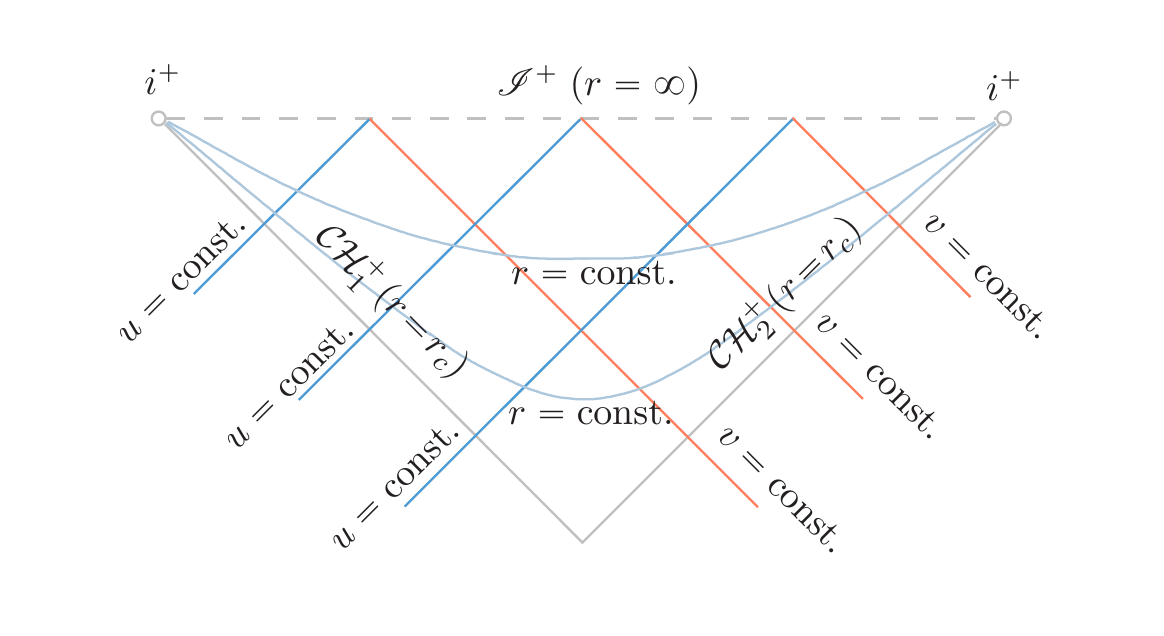}
 \end{figure}

\noindent 
We will only consider $\calC\calH_1^+$ in the remainder of this subsection, since $\calC\calH_2^+$ can be treated in an analogous manner. 

\medskip 

\noindent 
The Killing vector field 
$$
K=\frac{\partial}{\partial u}=\frac{\partial}{\partial t}
$$
is well-defined across $\calC\calH_1^+$, and is null on the cosmological horizon $\calC\calH_1^+$, 
even though the $t$-coordinate is not defined there. Consider the vector field in the $(u,r,\cdots)$-coordinate chart given by 
$$
Y=\left( \frac{\partial}{\partial r}\right)_{u}.
$$
The subscript $u$ means that the integral curves of $Y$ in the $(u,r,\cdots)$-coordinate chart have a constant $u$-coordinate. 
Then we have $du(Y)=0$ and $dr(Y)=1$, and so in the old $(t,r,\cdots)$-coordinate chart, the vector field $Y$ can be expressed as 
$$
Y=\frac{\partial }{\partial r}-\frac{1}{V}\frac{\partial}{\partial t}.
$$
Let the vector field $X$ be defined by 
$$
X=\frac{V^{\frac{1}{2}}}{r^{n-1}}N=\frac{V}{r^{n-1}}\frac{\partial}{\partial r}
$$
in the old $(t,r,\cdots)$-coordinate chart. To find the expression for $X$ in the 
$(u,r,\cdots)$-coordinate chart induced basis vectors, we first find 
$$
N=-\frac{\textrm{grad }r}{|\textrm{grad }r|}
$$
in the $(u,r,\cdots)$-coordinate chart induced basis vectors. Since 
$$
\left[\begin{array}{rr} V & -1 \\ -1 & 0 \end{array}\right]^{-1}
=
\frac{1}{-1} \left[\begin{array}{rr} 0 & 1 \\ 1 & V \end{array}\right]
=
\left[\begin{array}{rr} 0 & -1 \\ -1 & -V \end{array}\right],
$$
we have 
$$
\langle \textrm{grad }r ,\textrm{grad } r \rangle=\langle dr ,dr\rangle=-V.
$$
If $\omega:=-\frac{1}{\sqrt{V}} dr$, then
$$
N=g^{\mu \nu } \omega_\nu= \frac{1}{\sqrt{V}} \Big( \frac{\partial}{\partial u}+V\frac{\partial}{\partial r}\Big).
$$
So 
$$
X=\frac{\sqrt{V}}{r^{n-1}} N=\frac{1}{r^{n-1}}\Big(\frac{\partial}{\partial u}+V\frac{\partial}{\partial r}\Big).
$$
The energy 
\begin{eqnarray*}
 E(r)&=& \int_{\mR\times S^{n-1}} T(X,N) dV_n
\\
&=&
\frac{1}{2}\int_{\mR\times S^{n-1}} \Big(V^2\phi'^2 +\dot{\phi}^2+\frac{V}{r^2} |\cnab\phi|^2 +m^2 V\phi^2\Big) dt d\Omega
\\
&\stackrel{r\rightarrow r_c}{\longrightarrow}& 
\frac{1}{2}\int\displaylimits_{\small\substack{ \mR\times S^{n-1}\\(\calC\calH_1^+)}} (K\cdot \phi )^2 du d\Omega 
+
\frac{1}{2}\int\displaylimits_{\small \substack{\mR\times S^{n-1} \\ (\calC\calH_2^+)}} (K\cdot \phi )^2 dv d\Omega  
\end{eqnarray*}
(since $V(r_c)=0$). So $E(r)$ `loses control' of the transverse and angular derivatives as $r\rightarrow r_c$. To remedy this problem, 
we define a new energy $\widetilde{E}$, by adding $Y$ to $X$, obtaining 
$$
\widetilde{E}(r):=E(r)+\int_{\mR\times S^{n-1}} T(Y,N) dV_n.
$$
In the old $(t,r,\cdots)$-coordinates, $N=\sqrt{V}\;\!\partial_r$, and so 
\begin{eqnarray*}
 T(Y,N)&=& T(\partial_r,N)-\frac{1}{V}T(\partial_t,N)=\frac{1}{\sqrt{V}}T(N,N)-\frac{1}{\sqrt{V}}T(\partial_t,\partial_r)
 \\
 &=&\frac{1}{\sqrt{V}}\left(T(N,N)-T(\partial_t,\partial_r)\right).
\end{eqnarray*}
We have 
$$
T(\partial_t,\partial_r)=\dot{\phi}\;\!\phi'.
$$
So 
{\small 
\begin{eqnarray*}
 \widetilde{E}(r)\!\!\!\!&\!=\!&\!\!\! E(r)\!+\!\int_{\mR\times \!S^{n-1}}\!\! \!\left(\!\frac{1}{\sqrt{V}}\frac{1}{2}\Big(V\phi'^2\!+\!\frac{\dot{\phi}^2}{V}\!+\!\frac{|\cnab \phi|^2}{r^2}
 \!+\!m^2\phi^2\Big)
\! -\!\frac{1}{\sqrt{V}}\dot{\phi}\phi' \!\right)\sqrt{V}r^{n-1} dt d\Omega\\
 &\!=\!&\!\!\!
 E(r)\!+\!\int_{\mR\times S^{n-1}} \frac{1}{2}\left(V \Big(\phi'-\frac{1}{V}\dot{\phi}\Big)^2\!+\!\frac{1}{r^2} |\cnab \phi|^2
 \!+\!m^2\phi^2\right)r^{n-1} dt d\Omega\\
 &\!=\!&\!\!\!
 E(r)\!+\!\int_{\mR\times S^{n-1}} \frac{1}{2}\left( V (Y\cdot \phi)^2\!+\!\frac{1}{r^2} |\cnab \phi|^2
 \!+\!m^2\phi^2\right)r^{n-1} dt d\Omega.
\end{eqnarray*}}
We now have 
\begin{eqnarray*}
\widetilde{E}(r_c)&=& E(r_c)
+\frac{r_c^{n-3}}{2} \int\displaylimits_{\small\substack{ \mR\times S^{n-1}\\(\calC\calH_1^+)}} |\cnab\phi|^2 du d\Omega 
+\frac{r_c^{n-3}}{2} \int\displaylimits_{\small\substack{ \mR\times S^{n-1}\\(\calC\calH_2^+)}} |\cnab\phi|^2 dv d\Omega \\
&&\phantom{E(r_c)} 
+\frac{m^2r_c^{n-1}}{2} \!\!\!\!\!\int\displaylimits_{\small\substack{ \mR\times S^{n-1}\\(\calC\calH_1^+)}} \phi^2 du d\Omega 
+\frac{m^2r_c^{n-1}}{2} \!\!\!\!\int\displaylimits_{\small\substack{ \mR\times S^{n-1}\\(\calC\calH_2^+)}} \phi^2 dv d\Omega,
 \end{eqnarray*}
so that using $\widetilde{E}$ instead of $E$ allows us to regain some control of the angular derivatives as $r\rightarrow r_c$. 
We note that $\widetilde{E}(r_c)$ is equivalent to 
$$
\|\phi\|^2_{H^1(\calC\calH_1^+)} +\|\phi\|^2_{H^1(\calC\calH_2^+)}.
$$
We will now compute the deformation tensor $\Xi$ corresponding to the multiplier $Y$. We have 
\begin{itemize}
\item $\displaystyle \Big[-\frac{1}{V} \partial_t,\partial_r\Big]=-\frac{V'}{V} \partial_t$,
\item $\displaystyle \calL_{\partial_r} g=\frac{V'}{V} dr^2 -\frac{2}{V} dr\calL_{\partial_r} dr +V'dt^2+2rd\Omega^2
= \frac{V'}{V} dr^2  +V'dt^2+2rd\Omega^2$,
\item $\displaystyle \calL_{-\frac{1}{V}\partial_t} g=-\frac{2}{V} dr \calL_{-\frac{1}{V}\partial_t} dr +2V dt\calL_{-\frac{1}{V}\partial_t} dt 
=2V dt \;\Big(- \frac{V'}{V^2} \Big) dr$.
\end{itemize}
Hence 
\begin{eqnarray*}
 \Xi &=& \frac{1}{2} \calL_Y g = \frac{1}{2} \calL_{\partial_r - \frac{1}{V} \partial_t }g 
 = \frac{1}{2} \calL_{\partial_r }g+\frac{1}{2} \calL_{- \frac{1}{V} \partial_t }g\phantom{\frac{1}{V'}}\\
 &=& 
 \frac{1}{2}\frac{V'}{V^2}dr^2 +\frac{V'}{2} dt^2 +r d\Omega^2 +\frac{V'}{V}dt dr
 \\
 &=& \frac{1}{2} V' \Big( dt +\frac{1}{V} dr \Big)^2 +r d\Omega^2 = \frac{1}{2} V' du^2 +r d\Omega^2.
\end{eqnarray*}
We have 
$$
 du(\partial_r)=\Big(dt+\frac{1}{V}\Big)\partial_r=\frac{1}{V},\;\; \textrm{ and } \;\; du(\partial_t)=1, 
$$
and on the other hand, 
$$
-g(Y,\partial_r)=-g\Big(\partial_r-\frac{1}{V}\partial_t,\;\! \partial_r\Big)=\frac{1}{V} ,\textrm{ and } 
-g(Y,\partial_t)= -g\Big(\partial_r-\frac{1}{V}\partial_t,\;\! \partial_t\Big)=1,
$$
showing that 
$$
du=-g(Y,\cdot).
$$
Also, we recall that 
\begin{eqnarray*}
T_{\mu \nu}
&=&
\partial_\mu\phi \partial_\nu\phi -\frac{g_{\mu\nu}}{2}\left(\partial_\alpha \phi \partial^\alpha\phi +m^2\phi^2\right) 
\\
&=&
\left( d\phi \otimes d\phi \!-\!\frac{1}{2}\langle d\phi,d\phi\rangle g-\frac{1}{2}m^2\phi^2 g\right)_{\mu \nu}.
\end{eqnarray*}
It follows that 
\begin{eqnarray}
 \nonumber  \!\!\!\!\!\!\!\!\!\!
 T^{\mu\nu}\Xi_{\mu\nu}\!\!\!&\!=\!&\!\!\!\Xi^{\mu \nu}T_{\mu\nu}\phantom{\frac{1}{2}}\\
 \label{9_Aug_1618}
 &\!=\!&\!\!\! \frac{1}{2} V'(Y\cdot\phi)^2+\frac{1}{r^3} |\cnab \phi|^2 -\frac{n-1}{2r} \langle d\phi,d\phi\rangle -\frac{n-1}{2r} m^2\phi^2.
\end{eqnarray}
 
\noindent 
We have $d\phi(\partial_\mu)=\partial_\mu\phi$. So 
\begin{eqnarray*}
\langle d\phi,d\phi\rangle &=& g^{\alpha\beta} (d\phi)_\alpha (d\phi)_\beta 
=g^{\alpha\beta} (\partial_\alpha \phi)( \partial_\beta\phi)=(\partial_\alpha \phi)( \partial^\alpha \phi)\phantom{\frac{1}{V}}\\
&=&\phi'^2 (-V)+\dot{\phi}^2\frac{1}{V}+\frac{1}{r^2} |\cnab \phi|^2.
\end{eqnarray*}
Also, 
\begin{eqnarray*}
 -2(K\cdot\phi)(Y\cdot\phi)-V(Y\cdot\phi)^2
 &=& -2\dot{\phi}\Big(\phi'-\frac{1}{V}\dot{\phi}\Big)-V\Big(\phi'-\frac{1}{V} \dot{\phi}\big)^2
 \\
 &=& \frac{1}{V} \dot{\phi}^2 -V\phi'^2.
\end{eqnarray*}
So 
\begin{equation}
 \label{9_8_2019_1556} 
\langle d\phi,d\phi \rangle
=
-2(K\cdot\phi)(Y\cdot\phi)-V(Y\cdot\phi)^2+\frac{1}{r^2} |\cnab\phi|^2.
\end{equation}
Combining \eqref{9_Aug_1618} and \eqref{9_8_2019_1556}, we obtain 
{\small 
\begin{eqnarray*}
 T^{\mu \nu}\Xi_{\mu \nu}
 \!\!\!\!&\!=\!&\!\!\!\!
 \Big( \frac{V'}{2}\!+\!\frac{n\!-\!1}{2r} V\Big) (Y\!\cdot\!\phi)^2 \!+\!\frac{n\!-\!1}{r} (K\!\cdot\! \phi) (Y\!\cdot\! \phi) \!-\!\frac{n\!-\!3}{2r^3}|\cnab \phi|^2\!-\!\frac{n\!-\!1}{2r}m^2 \phi^2 
 \\
 \!\!\!\!&\!=\!&\!\!\!\! 
 \frac{V'}{2} \Big( Y\!\cdot \!\phi\!+\!\frac{n\!-\!1}{rV'} (K\!\cdot\! \phi)\Big)^2\!-\!\frac{(n\!-\!1)^2}{2r^2 V'} (K\!\cdot\!\phi)^2
  \!-\! \frac{n\!-\!3}{2r^3}|\cnab \phi|^2\!-\!\frac{n\!-\!1}{2r}m^2 \phi^2 .
\end{eqnarray*}}

\vspace{-0.2cm} 

\noindent 
Now as $V'(r)> 0$ for $r\geq r_c$ (global redshift), it follows that the first summand in the last expression is nonnegative, and so we obtain the inequality 
\begin{equation}
 \label{9_aug_16:29} 
T^{\mu \nu}\Xi_{\mu \nu}\geq -\frac{(n-1)^2}{2r^2 V'} (K\cdot\phi)^2- \frac{n-3}{2r^3}|\cnab \phi|^2-\frac{n-1}{2r}m^2 \phi^2 .
\end{equation}
Now suppose that $r_0$ is fixed. As $r^2 V'(r)>0$ for all $r\in [r_c,r_0]$, we have (by the extreme value theorem)
$$
\min_{r\in [r_c,r_0]}r^2 V'(r)>0.
$$
Thus 
$$
-\frac{(n-1)^2}{2r^2 V'}\geq -\frac{(n-1)^2}{\min\limits_{r\in [r_c,r_0]}r^2 V'(r)} =:-C_1(r_0).
$$

\smallskip 

\noindent 
Similarly, for $r\in [r_c,r_0]$, $\displaystyle \frac{1}{r^3}\leq \frac{1}{r_c^3}$, and so 
$$
-\frac{n-3}{2r^3}  \geq -\frac{n-3}{2r_c^3} =:-\widetilde{C}_2(r_c).
$$
Also,  for $r\in [r_c,r_0]$,
$$
-\frac{(n-1)}{2r}m^2  \geq -\frac{(n-1)}{2r_c}m^2 =:-C_3(r_c).
$$
Also, if we set 
$$
\Pi:=\frac{1}{2}\calL_X g,
$$
then from Step 1 of the proof of Theorem~\ref{theorem_2b} (see in particular the inequality \eqref{15:19:5:Nov:2019} on page \pageref{15:19:5:Nov:2019}), 
we have for $r>r_c$ that 
\begin{eqnarray*}
T^{\mu\nu}\Pi_{\mu\nu}&\geq&  -\left( 1+\frac{n}{2} + \frac{e^2}{2r_c^{n-1}}\right)\frac{1}{r^{n+2}}  |\cnab \phi|^2 \\
&\geq &-\underbrace{ \left( 1+\frac{n}{2} + \frac{e^2}{2r_c^{n-1}}\right)\frac{1}{r_c^{n+2}}}_{=:\widetilde{\widetilde{C}}_2(r_c)}|\cnab \phi|^2.
\end{eqnarray*}
Set $C_2(r_c):=\widetilde{C}_2(r_c)+\widetilde{\widetilde{C}}_2(r_c)$. We have 
\begin{eqnarray*}
T^{\mu\nu}\Pi_{\mu\nu}\!\!\!\!&+&\!\!\!T^{\mu\nu} \Xi_{\mu \nu}\\
&\geq& -\widetilde{\widetilde{C}}_2(r_c)|\cnab\phi|^2 -C_1(r_0) (K\cdot \phi)^2 -\widetilde{C}_2(r_c) |\cnab \phi|^2 -C_3(r_c) \phi^2\\
&=& -C_1(r_0) (K\cdot \phi)^2 -C_2(r_c) |\cnab \phi|^2 -C_3(r_c) \phi^2\\
&\geq & -\underbrace{\max\{C_1(r_0),\;\! C_2(r_c),\;\! C_3(r_c)\}}_{=:C(r_c,r_0)>0}\cdot \left( (K\cdot \phi)^2+|\cnab \phi|^2 +\phi^2\right).
\end{eqnarray*}
For $r_1\in (r_c,r_0)$, and  $T>0$, define 
 $
\calD=\{r=r_1\}\;\!\cap\;\! \{-T\leq t\leq T\}.
$ 
We now apply the divergence theorem, with the current $J$ corresponding to the multiplier $X+Y$, in the region 
 $
\calT=D^+(\calD)\;\! \cap \;\! \{r\leq r_0\}. 
$ 
 Noticing that the flux across the future null boundaries is less than or equal to $0$, we obtain, 
after passing the limit $T\rightarrow \infty$, that 
\begin{equation}
 \label{9_8_19:15}
\widetilde{E}(r_1)-\widetilde{E}(r_0)\geq -\!\int_{r_1}^{r_0} \!\int_{\mR\times S^{n-1}}\!\!C(r_0,r_c)\left( (K\cdot \phi)^2+|\cnab \phi|^2 +\phi^2\right)r^{n-1}dtd\Omega dr.
\end{equation}
But 
\begin{eqnarray*}
 \widetilde{E}(r)&=& \frac{1}{2} \int_{\mR\times S^{n-1}}\left(V^2 \phi'^2 +\dot{\phi}^2+\frac{V}{r^2} |\cnab \phi|^2+m^2V\phi^2\right)dtd\Omega
 \\
 &&\!\!\!+
 \frac{1}{2}\! \int_{\mR\times S^{n-1}}\left(V (Y\cdot \phi)^2+\frac{1}{r^2} |\cnab \phi|^2+m^2\phi^2\right)r^{n-1} dt d\Omega.
\end{eqnarray*}
In particular,
\begin{eqnarray*}
 \int_{r_1}^{r_0} \int_{\mR\times S^{n-1}}\!(K\cdot \phi)^2 r^{n-1} dt d\Omega dr \!&\leq &\! \int_{r_1}^{r_0} \int_{\mR\times S^{n-1}}\dot{\phi}^2 r_0^{n-1} dt d\Omega dr\\
 &\leq &\int_{r_1}^{r_0} 2\widetilde{E}(r)r_0^{n-1} dr=2r_0^{n-1} \int_{r_1}^{r_0} \widetilde{E}(r)dr.
\end{eqnarray*}
Also, 
\begin{eqnarray*}
 \int_{r_1}^{r_0} \int_{\mR\times S^{n-1}}|\cnab \phi|^2 r^{n-1} dt d\Omega dr &\leq & \int_{r_1}^{r_0} \int_{\mR\times S^{n-1}} \frac{|\cnab \phi|^2}{r^2} r^{n-1} r^2 dt d\Omega dr\\
 &\leq &\int_{r_1}^{r_0} 2\widetilde{E}(r)r_0^{2} dr=2r_0^{2} \int_{r_1}^{r_0} \widetilde{E}(r)dr.
\end{eqnarray*}
Finally, 
\begin{eqnarray*}
 \int_{r_1}^{r_0} \int_{\mR\times S^{n-1}} \phi^2 r^{n-1} dt d\Omega dr &\leq & \int_{r_1}^{r_0} \int_{\mR\times S^{n-1}} m^2 \phi^2 r^{n-1}\frac{1}{m^2} dt d\Omega dr\\
 &\leq &\int_{r_1}^{r_0} 2\widetilde{E}(r)\frac{1}{m^2} dr=\frac{2}{m^2} \int_{r_1}^{r_0} \widetilde{E}(r)dr.
\end{eqnarray*}
Using the above three estimates, it follows from \eqref{9_8_19:15} that 
\begin{eqnarray}
 \nonumber 
 \widetilde{E}(r_1)-\widetilde{E}(r_0)&\geq& -\int_{r_1}^{r_0} C(r_0,r_c)\left( 2r_0^{n-1}+2r_0^2+\frac{2}{m^2}\right)\widetilde{E}(r)dr \\
 \label{9_august_19_16}
& =&-\int_{r_1}^{r_0} k(r_0,r_c) \widetilde{E}(r)dr,
\end{eqnarray}
where  $\displaystyle 
k(r_0,r_c):=C(r_0,r_c)\left( 2r_0^{n-1}+2r_0^2+\frac{2}{m^2}\right).
$

\smallskip 

\noindent 
Now suppose that $r_2$ is such that $
r_c<r_1<r_2<r_0.
$

\smallskip 

\noindent 
If we redo all of the above steps in order to obtain \eqref{9_august_19_16}, 
but with $r_2$ replacing $r_0$, we obtain 
\begin{equation}
\label{nau_august_19:39}
\widetilde{E}(r_1)-\widetilde{E}(r_2)\geq -\int_{r_1}^{r_2} k(r_2,r_c) \widetilde{E}(r)dr,
\end{equation}
where $\displaystyle 
 k(r_2,r_c)=C(r_2,r_c)\left( 2r_2^{n-1}+2r_2^2+\frac{2}{m^2}\right).
$

\smallskip 

\noindent 
But 
\begin{eqnarray*}
\!\! k(r_2,r_c)\!\!&=&\!\!C(r_2,r_c)\left( 2r_2^{n-1}+2r_2^2+\frac{2}{m^2}\right)\\
 &\leq &\!\!C(r_2,r_c)\left( 2r_0^{n-1}+2r_0^2+\frac{2}{m^2}\right)\\
 &=&\!\!\max\{ C_1(r_2),\;\! C_2(r_c),\;\! C_3(r_c)\} \cdot \left( 2r_0^{n-1}+2r_0^2+\frac{2}{m^2}\right).
\end{eqnarray*}
We have 
$$
C_1(r_2)=\frac{(n-1)^2}{\min\limits_{r\in [r_c,r_2]}r^2 V'(r)}
\leq 
\frac{(n-1)^2}{\min\limits_{r\in [r_c,r_0]}r^2 V'(r)}=C_1(r_0),
$$
since $[r_c,r_2]\subset [r_c,r_0]$. Hence 
\begin{eqnarray*}
 \!\!k(r_2,r_c)\!\!&\leq &\!\!\max\{ C_1(r_2),\;\! C_2(r_c),\;\! C_3(r_c)\}\! \cdot \!\left( 2r_0^{n-1}+2r_0^2+\frac{2}{m^2}\right)\\
 &\leq &\!\!\max\{ C_1(r_0),\;\! C_2(r_c),\;\! C_3(r_c)\} \!\cdot \!\left( 2r_0^{n-1}+2r_0^2+\frac{2}{m^2}\right)=k(r_0,r_c).
\end{eqnarray*}
So from \eqref{nau_august_19:39}, we get
$$
\widetilde{E}(r_1)-\widetilde{E}(r_2)\geq -\int_{r_1}^{r_2} k(r_2,r_c) \widetilde{E}(r)dr\geq  -\int_{r_1}^{r_2} k(r_0,r_c) \widetilde{E}(r)dr.
$$
Consequently, for all $r_2\in [r_1,r_0)$, $\displaystyle  \widetilde{E}(r_2)\leq \widetilde{E}(r_1)+\int_{r_1}^{r_2} k(r_0,r_c) \widetilde{E}(r) dr $.

\smallskip 

\noindent 
By the integral form of Gr\"onwall's inequality (see e.g. \cite[Thm.~1.10]{Tao}), we obtain for all $r_2\in [r_1,r_0)$ that 
$$
\widetilde{E}(r_2)\leq \widetilde{E}(r_1) e^{\int_{r_1}^{r_2} k(r_0,r_c)dr } =\widetilde{E}(r_1) e^{k(r_0,r_c)\cdot (r_2-r_1)}.
$$
Passing the limit as $r_2\nearrow r_0$ yields
$$
\widetilde{E}(r_0) \leq \widetilde{E}(r_1)e^{ k(r_0,r_c)\cdot (r_0-r_1)},
$$
and this holds for all $r_1\in (r_c, r_0)$. Now passing the limit as $r_1\searrow r_c$, we obtain
$$
\widetilde{E}(r_0)\leq \widetilde{E}(r_c)e^{ k(r_0,r_c)\cdot (r_0-r_c)}.
$$
Consequently, 
\begin{eqnarray*}
 E(r_0)\leq \widetilde{E}(r_0) &\leq & \widetilde{E}(r_c) e^{k(r_0,r_c)\cdot (r_0-r_c)} \\
 &\lesssim & \widetilde{E}(r_c) \lesssim \|\phi\|^2_{H^1(\calC\calH_1^+)}+\|\phi\|^2_{H^1(\calC\calH_2^+)}<+\infty.
\end{eqnarray*}
Commuting with the Killing vector fields $\frac{\partial}{\partial t}$ and $L_{ij}$, we see that the hypothesis from Theorem~\ref{theorem_4}, namely, 
$$
\|\phi\|_{H^k(\calC\calH_1^+)}<+\infty \;\; \;\textrm{ and }\; \;\;\|\phi\|_{H^k(\calC\calH_2^+)}<+\infty,
$$
for some $k>\frac{n}{2}+2$,  yields also that 
$$
\|\phi\|_{H^k(\{r=r_0\})}\lesssim 
\|\phi\|_{H^k(\calC\calH_1^+)}+\|\phi\|_{H^k(\calC\calH_2^+)}<+\infty.
$$
We now show that this justifies the assumption used in the previous two subsections. 
For simplicity, we only consider just one of the energies 
$$
\calE(r)=\frac{1}{2}\int_{\mR\times S^{n-1}} \left( \psi'^2+\frac{1}{V^2} \dot{\psi}^2 +\frac{1}{r^2 V} |\cnab \psi |^2+\theta \psi^2\right) dt d\Omega.
$$
(The proof of the finiteness of  $\widetilde{\calE}(r_0)$ is entirely analogous.) 
As $\psi=r^{\kappa} \phi$, we obtain 
finiteness of the last summand, namely 
\begin{eqnarray*}
\phantom{aaa}\int_{\mR\times S^{n-1}} \theta(r_0) \left(\psi(r_0,\cdot)\right)^2 dt d\Omega
&=& \theta(r_0) r_0^{2\kappa} \int_{\mR\times S^{n-1}}\left(\phi(r_0,\cdot)\right)^2 dt d\Omega\\
&\leq &\theta(r_0) r_0^{2\kappa} \|\phi(r_0,\cdot)\|_{H^1(\{r=r_0\})}^2<+\infty. \phantom{\int_{S^{n}}}
\end{eqnarray*}
We have 
\begin{eqnarray*}
\int_{\mR\times S^{n-1}} \left(\phi'(r_0,\cdot)\right)^2 dt d\Omega 
&=&
\frac{1}{(V(r_0))^2}\int_{\mR\times S^{n-1}} \left(V(r_0,\cdot) \phi'(r_0,\cdot)\right)^2 dt d\Omega \\
&\leq& \frac{2E(r_0)}{(V(r_0))^2} <+\infty.
\end{eqnarray*}
Since $\psi'(r_0,\cdot)=\kappa r_0^{\kappa-1} \phi(r_0,\cdot)+r_0^\kappa \phi'(r_0,\cdot)$, and 
as $\phi(r_0,\cdot)\in H^1(\{r=r_0\})$, we have $\psi(r_0,\cdot)\in L^2(\mR\times S^{n-1})$, that is, 
$ \displaystyle 
\int_{\mR\times S^{n-1}}  \left(\psi'(r_0,\cdot)\right)^2 dtd\Omega <+\infty .
$ 

\smallskip 

\noindent 
We also have 
\begin{eqnarray*}
\int_{\mR\times S^{n-1}} \frac{1}{\left(V(r_0)\right)^2} \left(\dot{\psi}(r_0,\cdot)\right)^2 dt d\Omega 
&=&
\frac{r_0^{2\kappa}}{\left(V(r_0)\right)^2}\int_{\mR\times S^{n-1}}  \left(\dot{\phi}(r_0,\cdot)\right)^2dtd\Omega 
\\
&\leq& \frac{r_0^{2\kappa}}{\left(V(r_0)\right)^2}\|\phi(r_0,\cdot)\|_{H^1(\{r=r_0\})}^2<+\infty.
\end{eqnarray*}
Finally, 
\begin{eqnarray*}
\int_{\mR\times S^{n-1}} \frac{1}{r_0^2 V(r_0)} |\cnab \psi(r_0,\cdot)|^2  dt d\Omega
&=&
\frac{r_0^{2\kappa}}{r_0^2 V(r_0)}\int_{\mR\times S^{n-1}}  |\cnab \phi(r_0,\cdot)|^2  dt d\Omega
\\
&\lesssim &
\|\phi(r_0,\cdot)\|_{H^1(\{r=r_0\})}^2<+\infty.\phantom{\int_{S^{n}}\frac{1}{V^2}}
\end{eqnarray*}
Thus each summand in the expression for $\calE(r_0)$ is finite.

\noindent 
This completes the proof of Theorem~\ref{theorem_4}.

\section{Decay in $\mathrm{RNdS}$ when $m=0$, the wave equation}
\label{Section_thm_2b}

\noindent 
In \cite[Theorem~2]{CNO}, the following result was shown:

\begin{theorem}$\;$\label{theorem_CNO_2}

\noindent 
Suppose that 
\begin{itemize}
 \item $\delta>0$,  
 \item $M>0$, 
 \item $e\geq 0$,
 \item $n> 2$, 
 \item $(M,g)$ is the $(n+1)$-dimensional subextremal Reissner-Nordstr\"om-de Sitter solution given by the metric 
   $$
   g=-\frac{1}{V} dr^2+Vdt^2 +r^2 d\Omega^2,
   $$
   where 
   $$
   V= r^2+\frac{2M}{r^{n-2}} -\frac{e^2}{r^{n-1}}-1,
   $$
   and $d\Omega^2$ is the metric of the unit $(n-1)$-dimensional sphere $S^{n-1}$, 
 \item $k>\frac{n}{2}+2$, and 
 \item $\phi$ is a smooth solution to $\square_g \phi=0$ such that 
$$
\|\phi\|_{H^{k}(\calC\calH_1^+)}<+\infty \;\;\;\textrm{ and }\;\;\;\|\phi\|_{H^{k}(\calC\calH_2^+)}<+\infty,
$$
where $\calC\calH_1^+\simeq \calC\calH_2^+\simeq \mR\times S^{n-1}$ are the two components of the 
future cosmological horizon, parameterised by the flow parameter of the global Killing vector field $\frac{\partial}{\partial t}$. 
\end{itemize}
Then there exists a $r_0$ large enough so that for all $r\geq r_0$, 
$$
\| \partial_r\phi(r,\cdot)\|_{L^\infty(\mR\times S^{n-1})}\lesssim r^{-3+\delta}.
$$
\end{theorem}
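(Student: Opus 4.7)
The plan is to mirror the approach used in \cite[Theorem~1]{CNO} for the FLRW case, but adapted to the cosmological region of the RNdS spacetime, with the radial coordinate $r$ playing the role of time. The key geometric input is the global redshift lemma ($V'(r)>0$ for $r\geq r_c$) established in Section~\ref{Section_RNdS}, which plays the role of the monotonicity $\dot a(t)\geq 0$.

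First, I set up an energy-current argument with the timelike future-pointing multiplier $X=\frac{V}{r^{n-1}}\partial_r$, the analog of $a^{2-n}\partial_t$ from the FLRW case. A direct computation of the deformation tensor $\Pi=\frac{1}{2}\calL_X g$ shows that the bulk term $\nabla_\mu J^\mu=T^{\mu\nu}\Pi_{\mu\nu}$ is nonnegative whenever $V'\geq 0$, and the divergence theorem applied to a suitable region yields the monotonicity $E(r)\leq E(r_0)$ for the energy
$$
E(r)=\frac{1}{2}\int_{\mR\times S^{n-1}}\!\left(V^2\phi'^2+\dot\phi^2+\frac{V}{r^2}|\cnab\phi|^2\right)dt\,d\Omega.
$$
Finiteness of $E(r_0)$ is guaranteed by the redshift estimate of Subsection~\ref{Subsection_redshift_estimates}, which transfers the initial regularity $\|\phi\|_{H^k(\calC\calH_1^+)}+\|\phi\|_{H^k(\calC\calH_2^+)}<+\infty$ across the cosmological horizon to a finite energy at $r=r_0$.

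Next, since $\partial_t$ and the angular Killing vectors $L_{ij}$ on $S^{n-1}$ commute with $\square_g$, the above estimate applies to every $\partial_t^i L^\balpha\phi$ with $i+|\balpha|\leq k$; combining the resulting Sobolev-space bounds with the Sobolev embedding on $\mR\times S^{n-1}$ (valid because $k>\frac n 2 +2$, and because $\mR\times S^{n-1}$ has bounded sectional curvature and positive injectivity radius), I obtain the pointwise bounds $\|\ddot\phi(r,\cdot)\|_{L^\infty}\lesssim r^{\delta/2}$ and $\|\cLap\phi(r,\cdot)\|_{L^\infty}\lesssim r^{\delta/2}$ (the small growth being the analog of the $\epsilon$-loss in \cite[Theorem~1]{CNO}). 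The final step exploits the wave equation in the form
$$
\partial_r\!\left(r^{n-1}V\phi'\right)=\frac{r^{n-1}}{V}\ddot\phi+r^{n-3}\cLap\phi,
$$
which, integrated from $r_0$ to $r$ against the pointwise bounds from the previous step and using $V(s)\sim s^2$ as $s\to\infty$, gives
$$
\bigl|r^{n-1}V(r)\phi'(r,\cdot)\bigr|\lesssim 1+\int_{r_0}^r s^{n-3+\delta/2}\,ds\lesssim r^{n-2+\delta},
$$
and hence, after dividing by $r^{n-1}V(r)\sim r^{n+1}$, the desired bound $\|\phi'(r,\cdot)\|_{L^\infty(\mR\times S^{n-1})}\lesssim r^{-3+\delta}$.

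The main obstacle is the $\epsilon$-loss arising in the passage from the energy bound to the pointwise bounds on $\ddot\phi$ and $\cLap\phi$, which contaminates the integration of the wave equation and is responsible for the $\delta$ in the exponent. Removing this loss is precisely what Theorem~\ref{theorem_2b} accomplishes in a separate argument, mirroring the way that Theorem~\ref{theorem_2} improves Theorem~\ref{theorem_1} by switching to an integration in conformal coordinates instead of a direct energy argument.
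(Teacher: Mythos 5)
Your proposal essentially reproduces the paper's proof of the sharper Theorem~\ref{theorem_2b} (the paper does not reprove Theorem~\ref{theorem_CNO_2} itself, which is quoted from \cite{CNO}): same multiplier $X=\frac{V}{r^{n-1}}\partial_r$, same commutation with $\partial_t$ and $L_{ij}$ followed by Sobolev embedding, same redshift transfer of finiteness from $\calC\calH_1^+\cup\calC\calH_2^+$ to $\{r=r_0\}$, and the same decisive final step of integrating $\partial_r(r^{n-1}V\phi')=\frac{r^{n-1}}{V}\ddot\phi+r^{n-3}\cLap\phi$ in $r$. In fact, the pointwise bounds one actually gets from the energy estimate are $\|\ddot\phi(r,\cdot)\|_{L^\infty}\lesssim 1$ and $\|\cLap\phi(r,\cdot)\|_{L^\infty}\lesssim 1$ with no loss, so your own integration step already yields $r^{-3}$; the $r^{\delta/2}$ growth you insert is not where the $\delta$ in \cite[Theorem~2]{CNO} comes from, and your closing paragraph has the logic backwards --- the integration in $r$ (equivalently in $\rho=\int dr/V$) is precisely the device by which Theorem~\ref{theorem_2b} removes the loss.

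There is one concrete error in your Step 1: the bulk term for this multiplier is \emph{not} nonnegative when $V'\geq 0$. The computation (carried out in the proof of Theorem~\ref{theorem_2b}) gives
$$
T^{\mu\nu}\Pi_{\mu\nu}=\frac{n-1}{r^n}V^2\phi'^2+|\cnab\phi|^2\left(\frac{V}{r^{n+2}}-\frac{V'}{2r^{n+1}}\right),
\qquad
\frac{V}{r^{n+2}}-\frac{V'}{2r^{n+1}}=-\frac{1}{r^{n+2}}\left(1+\frac{(n+1)e^2}{2r^{n-1}}-\frac{nM}{r^{n-2}}\right),
$$
and the bracket tends to $1$ as $r\to\infty$, so the coefficient of $|\cnab\phi|^2$ is strictly negative for large $r$. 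Hence you cannot conclude the monotonicity $E(r)\leq E(r_0)$ from a sign of the bulk term. The repair is exactly the paper's: bound the bulk term below by $-\frac{C}{r^{n+2}}|\cnab\phi|^2$, note that $\int_{\mR\times S^{n-1}}|\cnab\phi|^2\,dt\,d\Omega\leq\frac{2r^2}{V}E(r)\lesssim E(r)$, and apply Gr\"onwall with the integrable weight $r^{-3}$ to obtain $E(r)\lesssim E(r_0)$ for all $r\geq r_0$. With this correction everything downstream in your argument goes through and in fact proves the stronger $r^{-3}$ bound.
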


\noindent Using a method similar to the one we used to show Rendall's conjecture in Theorem~\ref{theorem_2}, we 
can improve the almost-exact bound of $r^{-3+\delta}$ to $r^{-3}$. 

\begin{remark} 
As observed in \cite[Remark~1.4]{CNO}, this decay rate bound of $r^{-3}$ for $\partial_r  \phi$  is in fact  the decay rate one would expect in light of Rendall's conjecture. 
Indeed, for freely falling observers in 
the cosmological region, one has 
$$
r(\tau)\sim e^\tau\sim a(\tau), 
$$
where $\tau$ is the proper time, and $a(\tau)$ is the radius of a comparable de Sitter universe in flat FLRW form, giving 
$$
\partial_r \phi\sim \frac{\partial_\tau \phi}{\partial_\tau r}\sim \frac{e^{-2\tau}}{e^\tau}\sim \frac{1}{r^3}.
$$
\end{remark}

\goodbreak

\noindent Thus our improved version of Theorem~\ref{theorem_CNO_2} is the following result.

\begin{theorem}$\;$\label{theorem_2b}

\noindent 
Suppose that 
\begin{itemize}  
 \item $M>0$, 
 \item $e\geq 0$,
 \item $n> 2$, 
 \item $(M,g)$ is the $(n+1)$-dimensional subextremal Reissner-Nordstr\"om-de Sitter solution given by the metric 
   $$
   g=-\frac{1}{V} dr^2+Vdt^2 +r^2 d\Omega^2,
   $$
   where 
   $$
   V= r^2+\frac{2M}{r^{n-2}} -\frac{e^2}{r^{n-1}}-1,
   $$
   and $d\Omega^2$ is the metric of the unit $(n-1)$-dimensional sphere $S^{n-1}$, 
 \item $k>\frac{n}{2}+2$, and 
 \item $\phi$ is a smooth solution to $\square_g \phi=0$ such that 
$$
\|\phi\|_{H^{k}(\calC\calH_1^+)}<+\infty \;\;\;\textrm{ and }\;\;\;\|\phi\|_{H^{k}(\calC\calH_2^+)}<+\infty,
$$
where $\calC\calH_1^+\simeq \calC\calH_2^+\simeq \mR\times S^{n-1}$ are the two components of the 
future cosmological horizon, parameterised by the flow parameter of the global Killing vector field $\frac{\partial}{\partial t}$. 
\end{itemize}
Then there exists a $r_0$ large enough so that for all $r\geq r_0$, 
$$
\| \partial_r\phi(r,\cdot)\|_{L^\infty(\mR\times S^{n-1})}\lesssim r^{-3}.
$$
\end{theorem}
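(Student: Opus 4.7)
The plan is to follow the three-step strategy of the proof of Theorem~\ref{theorem_2} (Rendall's conjecture), with the role of the conformal time integration replaced by direct integration in $r$. First I would obtain uniform-in-$r$ bounds on the second derivatives $\ddot\phi$ and $\cLap\phi$; then I would integrate the wave equation once in $r$ and use these bounds, together with the asymptotics $V(r)\sim r^2$ and the hypothesis $n>2$, to conclude.

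For the first step, applying the energy argument of \S5.1 with $m=0$ to the multiplier $X=V^{1/2}r^{-(n-1)}N$, combined with the redshift estimates of \S\ref{Subsection_redshift_estimates} (whose hypotheses match those of the theorem), yields $E(r)\lesssim 1$ for all $r\geq r_0$, and hence $\|\dot\phi(r,\cdot)\|_{L^2(\mR\times S^{n-1})}\lesssim 1$ and $\|\cnab\phi(r,\cdot)\|_{L^2(\mR\times S^{n-1})}\lesssim 1$ (since $V/r^2\to 1$). Commuting with the Killing vector fields $\partial_t$ and $L_{ij}$ up to order $k-2>\tfrac{n}{2}$, and applying Sobolev embedding on $\mR\times S^{n-1}$, then gives the crucial uniform bounds $\|\ddot\phi(r,\cdot)\|_{L^\infty}\lesssim 1$ and $\|\cLap\phi(r,\cdot)\|_{L^\infty}\lesssim 1$, which play the role of \eqref{30_Aug_9:48} in the proof of Theorem~\ref{theorem_2}.

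For the second step, I would rewrite $\square_g\phi=0$ as
\begin{equation*}
\partial_r\bigl(r^{n-1}V\phi'\bigr)=r^{n-1}\left(\frac{\ddot\phi}{V}+\frac{\cLap\phi}{r^2}\right),
\end{equation*}
integrate from $r_0$ to $r$, and solve for $\phi'(r,\cdot)$. Taking $L^\infty$ norms and using the bounds from Step~1 together with $V(\tilde r)\sim\tilde r^2$ at infinity, the integrand is controlled by a constant multiple of $\tilde r^{n-3}$, whose integral from $r_0$ to $r$ is $\lesssim r^{n-2}$ (using $n>2$). Dividing by $r^{n-1}V(r)\sim r^{n+1}$ finally delivers
\begin{equation*}
\|\phi'(r,\cdot)\|_{L^\infty(\mR\times S^{n-1})}\lesssim \frac{r^{n-2}}{r^{n+1}}=r^{-3}.
\end{equation*}

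The main obstacle I expect is Step~1: establishing the uniform $L^\infty$ bounds on $\ddot\phi$ and $\cLap\phi$ at the regularity $k>\tfrac{n}{2}+2$ prescribed by the hypothesis. This requires commuting the wave operator with the Killing fields $\partial_t$ and $L_{ij}$ to upgrade $L^2$-energy control into higher-order Sobolev bounds on the non-compact cylinder $\mR\times S^{n-1}$, and invoking the redshift machinery of \S\ref{Subsection_redshift_estimates} at each order to ensure the initial energies at $r=r_0$ are finite. Once this ingredient is in place, Steps~2 and 3 reduce to a short explicit computation entirely parallel to Step~3 of the proof of Theorem~\ref{theorem_2}, and the triangle inequality delivers the claimed $r^{-3}$ decay.
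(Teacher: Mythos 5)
Your proposal is correct and follows essentially the same two-step strategy as the paper: first the energy/redshift argument with the multiplier $X=\frac{V}{r^{n-1}}\partial_r$ and commutation with $\partial_t$, $L_{ij}$ to get uniform $L^\infty$ bounds on $\ddot\phi$ and $\cLap\phi$, then a single integration of the divergence-form wave equation in $r$. The only cosmetic difference is that the paper first passes to the coordinate $\rho=\int_{r_0}^r V^{-1}\,dr$ before integrating, whereas you integrate $\partial_r(r^{n-1}V\phi')=r^{n-1}(\ddot\phi/V+\cLap\phi/r^2)$ directly in $r$; the resulting integral identity and the $r^{n-3}$ integrand bound are identical.
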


\begin{proof}$\;$ 

\noindent {\bf Step 1:} We will first establish the following estimates: 
there exists an $r_0$ large enough such that  for all $r\geq r_0$, 
\begin{eqnarray*}
 \|\ddot{\phi}(r,\cdot)\|_{L^\infty(\mR,S^{n-1})}\!\! & \lesssim & \!\! 1,\\
 \|\cLap \phi(r,\cdot)\|_{L^\infty(\mR,S^{n-1})} \!\! & \lesssim & \!\! 1.
\end{eqnarray*}
We will follow \cite[\S3.2]{CNO} in order to obtain the bounds above,
which will be needed in Step 2 of our proof below. 
We repeat this preliminary step here from \cite[\S3.2]{CNO} for the 
sake of completeness and for the convenience of the reader.
 
 Suppose that $\phi$ satisfies the wave equation $\square_g\phi=0$. 
 Recall that the energy-momentum tensor associated with $\phi$ is given by 
 $$
 T_{\mu \nu}=\partial_\mu \phi \partial_\nu \phi -\frac{1}{2} g_{\mu\nu} \partial_\alpha \partial^\alpha \phi.
 $$
 Thus 
 \begin{eqnarray*}
  T(N,N)
  &=&
  \left( \phi'^2\!-\!\frac{1}{2}\frac{(-1)}{V} \Big(\phi'^2(-V)\!+\!\dot{\phi}^2 \frac{1}{V}\!+\!\frac{1}{r^2} |\cnab\phi|^2 \Big)\right)V
  \\
  &=&
   \frac{1}{2}\left( V\phi'^2 \!+\!\frac{1}{V}\dot{\phi}^2 \!+\!\frac{1}{r^2} |\cnab\phi|^2\right).
 \end{eqnarray*}
  Define 
 $\displaystyle 
 X:=\frac{V^{\frac{1}{2}}}{r^{n-1}} N=\frac{V^{\frac{1}{2}}}{r^{n-1}} V^{\frac{1}{2}}\frac{\partial}{\partial r}=\frac{V}{r^{n-1}}\frac{\partial}{\partial r}.
 $
 
 \medskip 
 
 \noindent 
 The current $J$ is given by   
 $
 J_\mu:=T_{\mu \nu} X^\nu.
 $
 We define the energy 
 $$
 E(r)\!:=\!\int_{\mR\times S^{n-1}} \!T(X,N) dV_n
 \!=\!
 \frac{1}{2} \int_{\mR\times S^{n-1}}  \!\left( V^2 \phi'^2 \!+\!\dot{\phi}^2 \!+\!\frac{V}{r^2} |\cnab \phi|^2\right)dt d\Omega.
 $$
 The deformation tensor $\Pi$ associated to the multiplier $X$ is given by 
 $$
 \Pi=\frac{1}{2}\calL_X g=
 -\frac{1}{V} dr \calL_Xdr +\frac{V}{2r^{n-1}} \frac{V'}{V^2} dr^2 +\frac{V'V}{2r^{n-1}}dt^2 +\frac{V}{r^{n-1}}rd\Omega^2.
 $$
 It can be shown that 
 $\displaystyle 
 \calL_X dr =\left(\frac{V'}{r^{n-1}}-\frac{(n-1) V}{r^n} \right)dr.
 $ 
 Thus 
 \begin{eqnarray*}
  \Pi\!\!\!
  &=& \!\!\!
  -\frac{1}{V} \left(\frac{V'}{r^{n-1}}\!-\!\frac{(n-1) V}{r^n} \right)dr^2 \!+\!\frac{V'}{2r^{n-1}V} dr^2 \!+\!\frac{V'V}{2r^{n-1}}dt^2 \!+\!\frac{V}{r^{n-1}}rd\Omega^2\\
  &=&\!\!\!
  \underbrace{\frac{V'}{2r^{n-1}}\left(-\frac{1}{V}dr^2+Vdt^2\right)}_{=:\Pi^{(1)}} 
  +
  \underbrace{\phantom{\bigg(}\frac{n-1}{r^n} dr^2\phantom{\bigg)}}_{=:\Pi^{(2)}} 
  +
  \underbrace{\phantom{\bigg(}\frac{V}{r^{n-2}} d\Omega^2\phantom{\bigg)}}_{\Pi^{(3)}} .
 \end{eqnarray*}
 We have 
 \begin{eqnarray*}
  T^{\mu\nu}\Pi^{(1)}_{\mu \nu}
  \!\!\!\!&\!=\!&\!\!\!
  \frac{V'}{2r^{n-1}}\left(-\frac{1}{2}V\phi'^2\!-\!\frac{1}{2V} \dot{\phi}^2\!-\!\frac{1}{2r^2}|\cnab \phi|^2 \right.
  \\
&&  \phantom{aaaaaa}+\left. \frac{1}{V}\left(\dot{\phi}^2 \!-\!\frac{1}{2}V\Big(\phi'^2(-V)+\dot{\phi}^2\frac{1}{V}\!+\!\frac{1}{r^2}|\cnab\phi|^2\Big)\right)\right)\\
  &=& \frac{V'}{2r^{n-1}}\left(-\frac{1}{r^2}|\cnab \phi|^2\right).
 \end{eqnarray*}
Also, 
\begin{eqnarray*}
T^{\mu\nu}\Pi^{(2)}_{\mu \nu}
  &=& 
  \frac{(n-1)}{2r^n}\left(V^2 \phi'^2 +\dot{\phi}^2 +\frac{V}{r^2} |\cnab \phi|^2 \right).
  \end{eqnarray*}
  Finally, 
  \begin{eqnarray*}
  T^{\mu\nu}\Pi^{(3)}_{\mu \nu}
  \!\!\!&\!=\!&\!\!\! 
  \frac{V}{r^{n-2}} \left( \frac{1}{r^4} |\cnab \phi|^2 \!-\!\frac{(n-1)}{2r^2} \left( \phi'^2(-V)\!+\!\dot{\phi}^2\frac{1}{V}\!+\!\frac{1}{r^2}|\cnab \phi|^2 \right)\right)\\
  \!\!\!&\!=\!&\!\!\!
  \frac{V}{r^{n+2}} |\cnab \phi|^2 +\frac{(n-1)}{2r^n} \left( V^2 \phi'^2 \!-\!\dot{\phi}^2\!-\!\frac{V}{r^2} |\cnab \phi|^2\right).
  \end{eqnarray*}
  Consequently, the full bulk term is 
  $$
   \nabla_\mu J^\mu = 
   T^{\mu \nu} \Pi_{\mu \nu} 
   =
   \frac{(n-1)}{r^n}V^2\phi'^2 +|\cnab \phi|^2 \left(\frac{V}{r^{n+2}}-\frac{V'}{2r^{n+1}}\right).
  $$
  Using the expression for $V$, we compute 
  $$
  \frac{V}{r^{n+2}}-\frac{V'}{2r^{n+1}}
  =
  -\frac{1}{r^{n+2}}\left(1+\frac{(n+1)e^2}{2r^{n-1}}-\frac{nM}{r^{n-2}}\right). 
  $$
  For $r>r_c$, we have $V(r)>0$, and so 
  \begin{eqnarray*}
   1+\frac{(n+1)e^2}{2r^{n-1}}-\frac{nM}{r^{n-2}}&=& 1-\frac{n}{2}\left( r^2+\frac{2M}{r^{n-2}}-\frac{e^2}{r^{n-1}}-1\right) -\frac{n}{2}+\frac{e^{2}}{2r^{n-1}}
   \\
   &=&1-\frac{n}{2} V(r)-\frac{n}{2}+\frac{e^{2}}{2r^{n-1}} \\
   &\leq& 1-0-\frac{n}{2}+\frac{e^{2}}{2r^{n-1}}\leq 1+\frac{n}{2}+\frac{e^{2}}{2r_c^{n-1}}=:C.
  \end{eqnarray*}
  Hence 
   \begin{equation}
   \label{15:19:5:Nov:2019}
    \displaystyle 
   \nabla_\mu J^\mu = 
   T^{\mu \nu} \Pi_{\mu \nu} 
   \geq -\frac{C}{r^{n+2}} |\cnab \phi|^2 .
   \end{equation}
   For each $T<0$, define the set 
   $$
   \calC:=\{r=r_0\} \;\! \cap \;\! \{-T\leq t\leq T\}.
   $$
   Also, consider the region 
    $
   \calS:=D^+(\calC)\;\!\cap \;\! \{r\leq r_1\}.
   $ 
%

\begin{figure}[h]
    \center
      \includegraphics[width=9 cm]{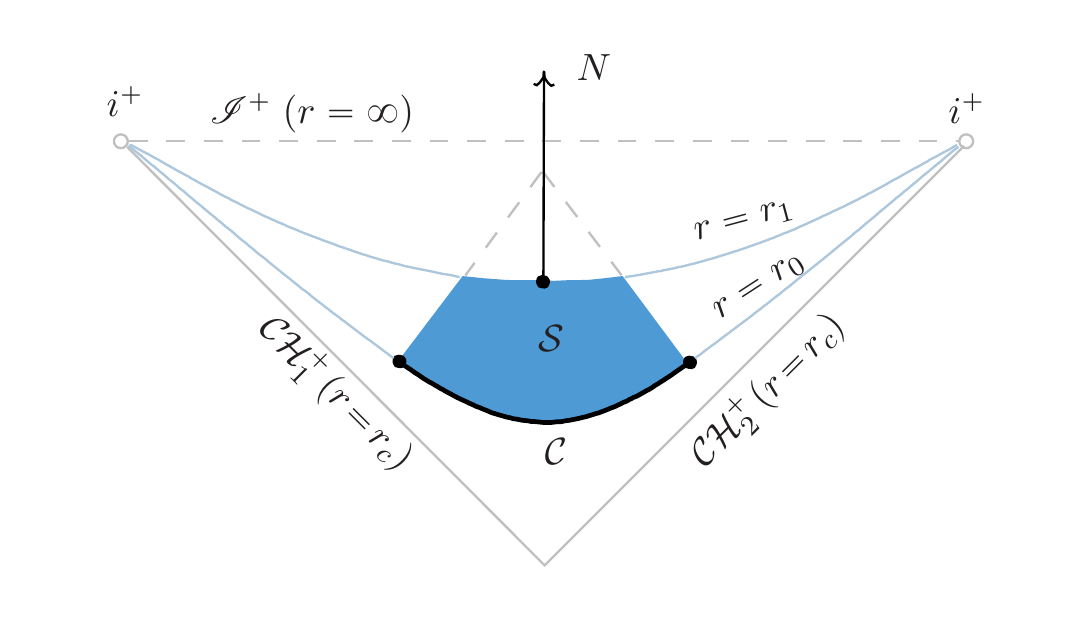}
 \end{figure}

\noindent We will apply the divergence theorem to the current $J$ on the region $\calS $. 
As the flux across the future null boundaries is nonpositive, we have 
\begin{eqnarray*}
&&
\int_{r_0}^{r_1}\int_{\mR\times S^{n-1}} \left( -\frac{C}{r^{n+2}} |\cnab \phi|^2 \right)dtd\Omega dr
\\
&\leq & 
\int_{\calS} (\nabla_\mu J^\mu) \epsilon= 
\underbrace{ \int_{\substack{\tiny\textrm{null}\\ \tiny \textrm{part}}} J \intprod \epsilon}_{\leq 0} +\underbrace{\int_{\{r=r_1\}} J\intprod \epsilon}_{-E(r_1)} 
+\underbrace{\int_{\{r=r_0\}} J\intprod \epsilon}_{E(r_0)} .
\end{eqnarray*}
So 
$\displaystyle 
E(r_0)-E(r_1)\geq -\int_{r_0}^{r_1} \int_{\mR\times S^{n-1}} \frac{C}{r^{3}} |\cnab \phi|^2 dtd\Omega dr.
$ 
We have 
$$
\int_{\mR\times S^{n-1}} |\cnab \phi|^2 dt d\Omega =\int_{\mR\times S^{n-1}} \frac{V}{r^2}|\cnab \phi|^2 \frac{r^2}{V} dt d\Omega \leq 2E(r)\cdot \frac{r^2}{V} =\frac{2r^2}{V}E(r).
$$
Since
$$
\lim_{r\rightarrow \infty}\frac{2r^2}{V}=\lim_{r\rightarrow \infty}\frac{2}{1+\frac{2M}{r^n}-\frac{e^2}{r^{n+1}}-\frac{1}{r^2}}=\frac{2}{1},
$$
there exists an $r_0$ large enough such that for all $r\geq r_0$, $|\frac{2r^2}{V}-2|<1$, and in particular, 
 $
(0<) \frac{2r^2}{V}<3.
$ 
Hence 
$$
 E(r_1) \!\leq \! E(r_0) \!+\!\int_{r_0}^{r_1} \int_{\mR\times S^{n-1}} \!\left( \frac{C}{r^{3}} |\cnab \phi|^2 \right)dtd\Omega dr
 \!\leq \! E(r_0)\!+\!\int_{r_0}^{r_1} \frac{3C}{r^3} E(r) dr .
$$
Using Gr\"onwall's inequality (see for e.g. \cite[Thm.~1.10]{Tao}), we obtain 
$$
E(r_1)\leq E(r_0) e^{\int_{r_0}^{r_1} \frac{3C}{r^3} dr}\lesssim C(r_0) E(r_0), 
$$
as was also noted in \cite[Eq.(71)]{CNO}. Thus, we have in particular that 
there exists an $r_0$ large enough such that  for all $r\geq r_0$, 
$$
 \int_{\mR \times S^{n-1}} \dot{\phi}^2 dtd\Omega \lesssim 1,\quad  \textrm{ and } \quad 
 \int_{\mR \times S^{n-1}} |\cnab \phi|^2 dtd\Omega \lesssim 1.
$$
Commuting with the Killing vector fields $\frac{\partial}{\partial t}$ and $L_{ij}$, 
we obtain (after the transferral of the 
finiteness of the energies along the branches $\calC\calH_1^+$ and $\calC\calH_2^+$ of the cosmological horizon  
to finiteness at $r=r_0$, and an application of Sobolev's inequality) that 
for all $r\geq r_0$, 
$$
\|\ddot{\phi}(r,\cdot)\|_{L^\infty(\mR,S^{n-1})} \lesssim  1,\quad \textrm{ and }\quad 
\|\cLap \phi(r,\cdot)\|_{L^\infty(\mR,S^{n-1})}    \lesssim  1.
$$

\medskip

\noindent {\bf Step 2:} 
In this step, we will write the wave equation in new coordinates which `equalises' the magnitude of the coefficient weights for the $r$ and $t$ coordinates in the matrix of the metric. 

\medskip 

\noindent 
To this end, we define 
 $\displaystyle  
\rho= \int_{r_0}^r \frac{1}{V(r)} dr.
$ 
Then 
 $
\displaystyle 
\frac{d\rho}{dr}=\frac{1}{V(r)}\textrm{ and }  V(r) \displaystyle \frac{d}{dr}=\frac{d}{d\rho}.
$

\medskip 

\noindent 
With a slight abuse of notation, we write 
 $
V(\rho):=V(r(\rho)).
$ 
We have 
\begin{eqnarray*}
g&=& -\frac{1}{V} dr^2 + V dt^2  +r^2d\Omega^2\\
&=& -Vd\rho^2+Vdt^2+(r(\rho))^2 d\Omega^2.\phantom{\frac{1}{V}}
\end{eqnarray*}
The wave equation $\square_g \phi=0$ can be rewritten as
$\partial_\mu(\sqrt{-g} \;\!\partial^\mu \phi)=0$, which becomes
 $
\partial_\mu (Vr^{n-1}  \partial^\mu \phi)=0.
$ 
Separating the differential operators with respect to the $\rho,t,\cdots$ coordinates, we obtain 
$$
\partial_\rho (r^{n-1} \partial_\rho \phi)=r^{n-1} \ddot{\phi} +V r^{n-3} \cLap \phi .
$$
 Integrating from $\rho_0:=\rho(r_0)=0$ to $\rho=\rho(r)$, we obtain 
 $$
 r^{n-1} \partial_\rho \phi -r_0^{n-1} \left. (\partial_\rho \phi)\right|_{\rho=\rho_0}
 =
 \int_{0}^\rho \left( r^{n-1} \ddot{\phi}+V r^{n-3} \cLap \phi \right) d\rho,
 $$
 and so 
 $$
 r^{n-1} V \partial_r \phi =r_0^{n-1} V(r_0) \left. (\partial_r \phi)\right|_{r=r_0} + \int_{0}^\rho \left( r^{n-1} \ddot{\phi}+V r^{n-3} \cLap \phi \right) d\rho,
 $$
that is, 
 $$
 \partial_r \phi =\Big( \frac{r_0}{r}\Big)^{n-1} \frac{V(r_0)}{V(r)} \left. (\partial_r \phi)\right|_{r=r_0}+ \frac{1}{r^{n-1} V}
 \int_{0}^\rho \left( r^{n-1} \ddot{\phi} +V r^{n-3} \cLap \phi \right) d\rho.
 $$
Hence 
\begin{eqnarray*}
 \!\!\!\!&&\|\partial_r \phi(r,\cdot)\|_{L^\infty(\mR\times S^{n-1})} \phantom{\Big(\frac{a(t_0)}{a(t)}\Big)^{n-2}}
 \\
 &\leq &\!\!\!\!\!\Big( \frac{r_0}{r}\Big)^{n-1} \frac{V(r_0)}{V(r)} \| (\partial_r \phi)(r_0,\cdot)\|_{L^\infty(\mR\times S^{n-1})} \\
 && \!\!\!\!\!+ \frac{1}{r^{n-1} V}\!
\int_{\rho_0}^\rho \!\left( r^{n-1} \|\ddot{\phi}(r,\cdot)\|_{L^\infty(\mR\times S^{n-1})} \!+\!V r^{n-3} \|\cLap \phi(r,\cdot)\|_{L^\infty(\mR\times S^{n-1})} \right)\! d\rho.
\end{eqnarray*}
Using the fact that $V\sim r^2$ for $r\geq r_0$, with $r_0$ large enough, and the estimates from Step 1 above, we obtain 
\begin{eqnarray*}
 \|\partial_r \phi(r,\cdot)\|_{L^\infty(\mR\times S^{n-1})} 
 &\lesssim &\frac{A}{r^{n+1}} +\frac{B}{r^{n+1}} \int_{\rho_0}^\rho  \left(r(\rho)\right)^{n-1}d\rho \phantom{\frac{1}{V}}\\
 &\lesssim & \frac{A}{r^{n+1}} +\frac{B}{r^{n+1}} \int_{r_0}^r  r^{n-1}\frac{1}{V(r)} dr \\
 &\lesssim &\frac{A}{r^{n+1}} +\frac{B'}{r^{n+1}} \int_{r_0}^r  r^{n-3} dr. 
\end{eqnarray*}
Recalling that $n>2$, we have  
$$
\|\partial_r \phi(r,\cdot)\|_{L^\infty(\mR\times S^{n-1})} \lesssim \frac{A}{r^{n+1}}+\frac{B'}{r^{n+1}}\frac{1}{(n-2)} (r^{n-2}-r_0^{n-2})\lesssim \frac{1}{r^3}.
$$
This completes the proof of Theorem~\ref{theorem_2b}.
\end{proof}

\goodbreak

\section{Appendix A: Fourier modes (de Sitter in flat FLRW form)}

\noindent  In this appendix, we give the details of the Fourier modal analysis that 
motivates the specific estimates given in Theorem~\ref{theorem_3},  
starting with   spatially periodic solutions to the Klein-Gordon equation. 

Let $\mT^n=\mR^n/(2\pi \mZ)^n$. Suppose that the `spatially periodic' $\phi:\mR\times \mT^n\rightarrow \mR$ 
satisfies the Klein-Gordon equation \eqref{KG_FLRW}. 
Writing 
$$
\phi=\sum_{\bbk \in \mZ^n} c_{\bbk}(t) e^{i \langle \bbk ,\bbx\rangle},
$$
\eqref{KG_FLRW} yields 
$$
-\ddot{c}_{\bbk}-\frac{n\dot{a}}{a}\dot{c}_\bbk+\frac{1}{a^2} \delta^{pq} i k_p ik_q c_{\bbk}-m^2c_{\bbk}=0,
$$
that is, 
$$
\ddot{c}_{\bbk}+\frac{n\dot{a}}{a}\dot{c}_\bbk+\frac{k^2}{a^2} c_{\bbk}+m^2c_{\bbk}=0,
$$
where 
$$
k^2:=\langle \bbk,\bbk\rangle.
$$
So 
\begin{eqnarray*}
 \frac{d}{dt}(a^n \dot{c}_\bbk)
 &=&
 na^{n-1} \dot{a} \dot{c}_\bbk +a^n \ddot{c}_\bbk\\
 &=& 
 na^{n-1} \dot{a} \dot{c}_\bbk +a^n \left( -\frac{n\dot{a}}{a}\dot{c}_\bbk-\Big(\frac{k^2}{a^2}+m^2\Big) c_{\bbk}\right)\\
 &=& 
 -a^n \left( \frac{k^2}{a^2}+m^2\right) c_\bbk,
\end{eqnarray*}
that is, 
\begin{equation}
\label{KG_Fourier}
\frac{d}{dt}(a^n \dot{c}_\bbk)+a^{n-2} (k^2+m^2 a^2) c_\bbk=0.
\end{equation}
Let 
$$
\tau=\int \frac{1}{a(t)} dt.
$$
Then 
$
\frac{d}{dt}=\frac{1}{a} \frac{d}{d\tau},
$ 
and so \eqref{KG_Fourier} becomes (with $ \frac{d}{d\tau}=:{\phantom{\cdot}}^{\prime}\;$) 
$$
\frac{1}{a}\left(a^n \frac{1}{a} c_\bbk'\right)'+a^{n-1}(k^2+m^2 a^2) c_\bbk=0,
$$
that is, 
\begin{equation}
 \label{KG_Fourier_conformal}
 (a^{n-1}c_\bbk')'+a^{n-1}(k^2+m^2 a^2) c_\bbk=0.
\end{equation}
Defining $d_\bbk$ by 
 $
c_\bbk =: a^{-\frac{n-1}{2}} d_\bbk,
$ 
we have 
$$
c_\bbk'=a^{-\frac{n-1}{2}} d_\bbk'-\frac{n-1}{2} a^{-\frac{n-1}{2}-1} a' d_\bbk.
$$
\eqref{KG_Fourier_conformal} yields 
$$
\left( a^{\frac{n-1}{2}} d_\bbk'-\frac{n-1}{2} a^{\frac{n-1}{2}-1} a'd_\bbk\right)'+a^{\frac{n-1}{2}}(k^2+m^2 a^2) d_\bbk=0,
$$
that is,
\begin{equation}
 \label{KG_Fourier_conformal_2}
d_\bbk''+\left( k^2 +m^2 a^2 -\frac{(n-1)}{2} \frac{a''}{a} -\frac{(n-1)(n-3)}{4} \Big(\frac{a'}{a}\Big)^2 \right)d_\bbk=0.
\end{equation}
Now if $a(t)=e^t$, then we may take $\tau=-e^{-t}$, so that $-t=\log (-\tau)$, that is, $-\tau=e^{-t}$. 
We remark that relative to our earlier use of conformal coordinates in \eqref{Aug_30_15:18} on page \pageref{Aug_30_15:18}, we are taking $t_0=+\infty$ for simplicity.
We then have 
$$
 a=e^{-\log (-\tau)}=-\frac{1}{\tau},\quad \quad 
 a'=\frac{1}{\tau^2},\quad \quad 
 a''=-\frac{2}{\tau^3}.
$$
Hence \eqref{KG_Fourier_conformal_2} becomes 
$$
d_\bbk''\!+\!\left(k^2 \!+\!\frac{m^2}{\tau^2} \!-\!\frac{n-1}{2}\Big( \!-\!\frac{2}{\tau^3}\Big)\Big(\!-\!\frac{\tau}{1}\Big)
\!-\!\frac{(n-1)(n-3)}{4}\Big(\frac{1}{\tau^2}\frac{(-\tau)^2}{1}\Big)\right)d_\bbk\!=\!0,
$$
that is,
\begin{equation}
 \label{KG_Fourier_conformal_a=exp_t}
 d_\bbk''+\left( k^2-\frac{\mu}{\tau^2}\right) d_\bbk=0,
\end{equation}
where 
$$
\mu:=n-1+\frac{(n-1)(n-3)}{4}-m^2.
$$
The general solution to this equation is\footnote{See for example \cite[p.95]{Wat}. For the relevant notation, 
see also \cite[pages 82,100,101]{Wat}.} given by 
\begin{equation}
\label{KG_Fourier_conformal_a=exp_t_gen_soln}
C_1 \sqrt{\tau} \;J_\nu (|k| \tau) + C_2 \sqrt{\tau}\;Y_\nu(|k|\tau),
\end{equation}
where $\nu$ satisfies 
$$
\nu^2=\frac{1}{4}+\mu=\frac{n^2}{4}-m^2.
$$
Here $J_\nu$ denotes the Bessel function of the first kind, 
$$
J_\nu (z)=\sum_{m=0}^\infty \frac{(-1)^m}{m!\;\!\Gamma(m+\nu+1)} \Big(\frac{z}{2}\Big)^{2m+\nu},
$$
and $Y_\nu$ is the Bessel function of the second kind, 
$$
Y_\nu(z)=\frac{J_\nu (z)\cos(\nu \pi)-J_{-\nu}(z)}{\sin(\nu \pi)},
$$
where the right hand side is replaced by its limiting value if $\nu$ is an integer.
Without loss of generality, in the solution \eqref{KG_Fourier_conformal_a=exp_t_gen_soln}, 
we may only consider $\nu$ such that $\textrm{Re}(\nu)\geq 0$. 

We note that as $t\rightarrow\infty$, $-\tau=e^{-t}\searrow 0$, and so $\tau\nearrow 0$. 
We now use the asymptotic expansions of $J_\nu(z)$ and $Y_\nu (z)$ as $z\nearrow 0$ 
(see e.g. \cite[9.1.7-9]{AS}):
\begin{itemize}
 \item[$\underline{1}^\circ$] If $\nu\neq 0$ (that is, $m\neq \pm\frac{n}{2}$), then as $\tau\nearrow 0$, we have 
 \begin{eqnarray*}
  J_\nu(|k|\tau)&=&C(-\tau)^\nu +O(|\tau|),\\
  Y_\nu(|k|\tau)&=&A (-\tau)^\nu+B(-\tau)^{-\nu}+C(-\tau)^{2-\nu}+O(|\tau|).
 \end{eqnarray*}
 So as $\tau\nearrow 0$ or $t\rightarrow \infty$, we have 
 \begin{eqnarray*}
  d_\bbk&=& A(-\tau)^{\frac{1}{2}+\nu}+B(-\tau)^{\frac{1}{2}-\nu} +C(-\tau)^{\frac{5}{2}-\nu}+O(|\tau |)\\
  &=&A e^{(-\frac{1}{2}-\nu)t}+Be^{(-\frac{1}{2}+\nu)t}+Ce^{(-\frac{5}{2}+\nu)t}+O(e^{-t}).
 \end{eqnarray*}
 But $c_\bbk=a^{-\frac{n}{2}+\frac{1}{2}} d_\bbk=e^{(-\frac{n}{2}+\frac{1}{2})t}d_\bbk$, and so 
 $$
 c_\bbk=Ae^{(-\frac{n}{2}-\nu)t}+Be^{(-\frac{n}{2}+\nu)t}+Ce^{(-\frac{n}{2}-2+\nu)t}+O(e^{-\frac{n+1}{2}t})
 $$
as $t\rightarrow+\infty$. We recall that $\textrm{Re}(\nu)\geq 0$, and so keeping only the dominating 
term, we have 
$$
|c_\bbk|= C'e^{-\left(\frac{n}{2}-\textrm{Re}(\nu)\right)t}+O(e^{-\frac{n+1}{2}t})
$$
as $t\rightarrow+\infty$. 
Thus we expect $\phi$ to satisfy 
$$
\|\phi(t,\cdot)\|_{L^\infty(\mR^n)}\lesssim 
\left\{ \begin{array}{ll} 
a^{-\frac{n}{2}} & \textrm{if } \;|m|> \frac{n}{2},\\
a^{-\frac{n}{2}+\sqrt{\frac{n^2}{4}-m^2}} &\textrm{if }\;|m|<\frac{n}{2}.
\end{array}\right.
$$
\item[$\underline{2}^\circ$] If $\nu=0$ (that is, $m=\pm \frac{n}{2}$), then as $\tau\nearrow 0$, we have 
 \begin{eqnarray*}
  J_\nu(|k|\tau)&=&C + O(|\tau|),\\
  Y_\nu(|k|\tau)&=&C\log (-\tau)+O(|\tau|).
 \end{eqnarray*}
 This implies that 
 $$
 d_\bbk= (A+Bt)e^{-\frac{1}{2}t} +O(e^{-t})
 $$
 as $t\rightarrow +\infty$.
 Hence 
 $$
 |c_\bbk|=(A+Bt)e^{-\frac{n}{2} t}+O(e^{-\frac{n+1}{2}t}).
 $$
 Thus we expect $\phi$ to satisfy 
$$
\|\phi(t,\cdot)\|_{L^\infty(\mR^n)}\lesssim  
a^{-\frac{n}{2}}\log a  \;\;\textrm{ if } \; m=\pm \frac{n}{2}.
$$
\end{itemize}
Summarising, $\phi$ is expected to have the decay 
$$
\|\phi(t,\cdot)\|_{L^\infty(\mR^n)}\lesssim 
\left\{ \begin{array}{ll} 
a^{-\frac{n}{2}} &\displaystyle  \textrm{if } |m|> \frac{n}{2},\phantom{a^{\sqrt{\frac{n^2}{4}-m^2}}}\\[0.2cm]
a^{-\frac{n}{2}}\log a &\displaystyle \textrm{if }|m|=\frac{n}{2},\phantom{a^{\sqrt{\frac{n^2}{4}-m^2}}}\\[0.2cm]
a^{-\frac{n}{2}+\sqrt{\frac{n^2}{4}-m^2}} &\displaystyle\textrm{if }|m|<\frac{n}{2}.
\end{array}\right.
$$
This motivates the decay estimates in Theorem~\ref{theorem_3}.

\goodbreak 

\section{Appendix B}

\noindent In this section, we prove the technical result we had used in the proof of Theorem~\ref{theorem_3}, in Section~\ref{Section_FLRW}. 

\smallskip 

\begin{lemma}
\label{technical_lemma_surface_integral}
If $f,g\in H^1(\mR^n)$, then 
 $\displaystyle 
\lim_{r\rightarrow +\infty}\int_{S_r} fg d\sigma_r=0.
$ 
\end{lemma}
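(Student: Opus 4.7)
The plan is to reduce the problem to a statement about the single $W^{1,1}$ function $h := fg$. Two applications of the Cauchy--Schwarz inequality give
$$\|h\|_{L^1(\mR^n)} \le \|f\|_{L^2}\|g\|_{L^2} \quad \textrm{ and }\quad \|\nabla h\|_{L^1(\mR^n)} \le \|f\|_{L^2}\|\nabla g\|_{L^2}+\|g\|_{L^2}\|\nabla f\|_{L^2},$$
where the second bound uses the weak product rule $\nabla h = f\nabla g + g\nabla f$; thus $h\in W^{1,1}(\mR^n)$. Setting $B(r) := \int_{S_r} h\, d\sigma_r$, which is defined for almost every $r>0$ by Fubini in polar coordinates, the lemma is equivalent to $B(r)\to 0$ as $r\to\infty$.

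The first key step is to apply the divergence theorem on the annulus $\Omega_{r_1,r_2} := \{r_1<|x|<r_2\}$ to the vector field $V(x) := h(x)\, x/|x|$, whose divergence away from the origin equals $\partial_r h + (n-1)h/|x|$. Since the outward unit normals on $S_{r_1}$ and $S_{r_2}$ are $-x/|x|$ and $x/|x|$ respectively, this yields
$$B(r_2) - B(r_1) \;=\; \int_{\Omega_{r_1,r_2}} \!\Big(\partial_r h + \tfrac{n-1}{|x|}\, h\Big)\, dx.$$
The right-hand side is bounded uniformly in $r_2>r_1$ by
$$\int_{|x|\ge r_1}|\nabla h|\, dx \;+\; \frac{n-1}{r_1}\,\|h\|_{L^1(\mR^n)},$$
which tends to $0$ as $r_1\to\infty$ (the first term by absolute continuity of the integral of an $L^1$ function, the second by the explicit $1/r_1$ decay). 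Hence $\{B(r)\}$ is Cauchy at infinity and converges to some $L\in\mR$.

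To identify this limit, note that Fubini in polar coordinates also yields $\int_0^\infty |B(r)|\, dr\le \|h\|_{L^1(\mR^n)}<\infty$, so $B\in L^1(0,\infty)$. A function in $L^1(0,\infty)$ which has a limit at infinity must have that limit equal to $0$, for otherwise $|B(r)|$ would stay above $|L|/2$ on a tail and $B$ would fail to be integrable. Therefore $L=0$, as required.

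The main technical point is justifying the divergence theorem for a function that is only $W^{1,1}$ rather than smooth. I would handle this by mollification: taking $h_\epsilon := h * \rho_\epsilon$ produces smooth approximants with $h_\epsilon\to h$ in $W^{1,1}(\mR^n)$; the annular identity above holds classically for each $h_\epsilon$; and one passes to the limit using $L^1$-convergence on the volume term and, for almost every pair $r_1<r_2$, the continuity of the trace map $W^{1,1}(\Omega_{r_1,r_2})\to L^1(\partial\Omega_{r_1,r_2})$ on the two boundary terms.
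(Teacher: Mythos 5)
Your proof is correct, but it takes a genuinely different route from the paper. The paper reduces to showing $\int_{S_r}|f|^2\,d\sigma_r\to 0$ via Cauchy--Schwarz and then argues by contradiction: if the spherical integrals stayed above $\epsilon$ along a sequence $r_k\to\infty$, a quantitative trace inequality on annuli (Grisvard's theorem, with careful tracking of how the constants depend on the radii) fattens each sphere into an annulus carrying $H^1$-mass at least $\widetilde{\epsilon}$, and infinitely many disjoint such annuli contradict $\|f\|_{H^1(\mR^n)}<\infty$. You instead work with the single function $h=fg\in W^{1,1}(\mR^n)$ and use the divergence theorem with the radial field $h\,x/|x|$ to show that $B(r)=\int_{S_r}h\,d\sigma_r$ has arbitrarily small oscillation on tails, hence converges, and then identify the limit as $0$ from $B\in L^1(0,\infty)$. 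Your argument is cleaner in that it avoids quantitative trace constants entirely (traces enter only qualitatively, to pass to the limit in the mollification, and even that can be bypassed by working with a.e.\ radii), and it is more general, applying to any $W^{1,1}$ function rather than only to products of $H^1$ functions; the paper's argument, on the other hand, yields the slightly stronger conclusion that $\int_{S_r}|f|^2\,d\sigma_r\to 0$ for each $f\in H^1$ individually. One small point worth making explicit: your conclusion is a priori only for a.e.\ $r$ (or for traces), whereas the lemma is applied in the paper to smooth $f,g$, for which $r\mapsto\int_{S_r}fg\,d\sigma_r$ is continuous and the a.e.\ statement upgrades to all $r$; this is a one-line remark but should be included.
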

\begin{proof} By the Cauchy-Schwarz inequality, 
$$
\left|\int_{S_r} fg d\sigma_r \right|^2\leq \int_{S_r} |f|^2 d\sigma_r \cdot \int_{S_r} |g|^2 d\sigma_r,
$$
and so, it is enough to show that 
$$
\lim\limits_{r\rightarrow\infty} \int\limits_{S_r}|f|^2 d\sigma_r=0.
$$
Suppose this does not hold. Then there exists an increasing sequence $(r_k)_{k}$ such that 
$r_k\stackrel{k\rightarrow \infty}{\longrightarrow}\infty$, and there exists an $\epsilon>0$ 
such that for each $k$, 
$$
\int_{S_{r_k}} |f|^2 d\sigma_{r_k}>\epsilon.
$$
(The plan is to use the trace theorem to fatten these $S_{r_k}$-slices to `annuli' $A_k$ and obtain 
$\|f\|_{H^1(A_k)}^2>\widetilde{\epsilon}>0$ for all $k$, giving rise to the contradiction that 
$$
\infty>\|f\|_{H^1(\mR^n)}^2 >\sum_k \|f\|^2_{H^1(A_k)} >\sum_{k} \tilde{\epsilon} =+\infty.
$$
So we will construct a subsequence $(r_{k_m})_m$ of $(r_k)_k$ and a sequence $(\delta_m)_m$ of positive 
numbers such that $r_{k_1}<r_{k_1}+\delta_1<r_{k_2}<r_{k_2}+\delta_2<r_{k_3}<\cdots$, and such that for the 
`annuli' $A_m:=\{\bbx: r_{k_m}<|\bbx|<r_{k_m}+\delta_m\}$, we have $\|f\|_{H^1(A_m)}^2>\widetilde{\epsilon}$. 
We will need to keep track of the constants in the trace theorems on our annuli $A_m$, and we will use the following 
\cite[p.41]{Gri}.)  

\begin{theorem}
\label{theorem_Grisvard}
Let $\Omega$ be a bounded open subset of $\mR^n$ with a Lipschitz boundary $\Gamma$. Then for $f\in H^1(\Omega)$ and for all $\epsilon\in (0,1)$, 
$$
\|f\|^2_{L^2(\partial\Omega)}\leq \frac{\|\bmu\|_{C^1(\overline{\Omega})}}{\delta} \left( \epsilon^{1/2} \|\nabla f\|_{L^2(\Omega)}^2 +(1+\epsilon^{-1/2})\|f\|_{L^2(\Omega)}^2\right),
$$
where $\bmu\in C^1(\overline{\Omega},\mR^n)$ is such that $\bmu\cdot \bbn \geq \delta$ on $\partial \Omega$, $\bbn$ being the outer normal vector. 
\end{theorem}

\noindent 
If $\Omega$ is an annulus $A=\{\bbx:r<\|\bbx\|<R\}$ (which is clearly bounded, open, and also it has the 
Lipschitz boundaries which are the two spheres $S_r$ and $S_R$), then with $\bmu(\bbx)=\bbx$, we have 
$$
\bmu\cdot \bbn=\|\bbx\|=\left\{\begin{array}{cc} R \textrm{ on } S_R,\\
                                r \textrm{ on } S_r 
                               \end{array}\right\}\geq r=:\delta.
$$
Also, if we take $\epsilon=1/4$, then 
$$
\|f\|_{L^2(S_r)}^2\leq \|f\|_{L^2(\partial A)}^2 \leq 3\frac{\|\bmu\|_{C^1(\overline{A})}}{r}\|f\|_{H^1(A)}^2.
$$
As 
$$
\|\bmu\|_{C^1(\overline{A})} =\max\limits_{\overline{A}} \|\bmu\|+ \max\limits_{\overline{A}} |\nabla\cdot\bmu|=R+n,
$$ 
we obtain 
$$
\|f\|_{L^2(S_r)}^2\leq 3\frac{R+n}{r} \|f\|_{H^1(A)}^2.
$$
Now we will construct $(r_{k_m})_{m}$ and $(\delta_m)_m$. 

\medskip 

\noindent 
We choose $k_1$ such that $r_{k_1}>n$. Let $\delta_1$ be such that $0<\delta_1<r_{k_1}-n$. 
Then for the annulus 
$$
A_1:=\{ \bbx: r_{k_1}<\|\bbx\|<r_{k_1}+\delta_1\},
$$
 we have 
$$
\|f\|^2_{H^1(A_1)} \geq \frac{r_{k_1}/3}{(r_{k_1}+\delta_1)+n} \|f\|^2_{L^2(S_{r_{k_1}})} 
\geq \frac{1/3}{1+\frac{\delta_1+n}{r_{k_1}}}\epsilon>\frac{1/3}{1+1}\epsilon=\frac{\epsilon}{6}=:\widetilde{\epsilon}.
$$
Now suppose $r_{k_1},\cdots, r_{k_m}, \delta_1,\cdots, \delta_m$ possessing the desired properties have been constructed. 
Choose $k_{m+1}$ such that $r_{k_{m+1}}>r_{k_m}+\delta_m$. Let $\delta_{m+1}$ be such that 
$0<\delta_{m+1}<r_{k_{m+1}} -n$.

\smallskip 

\noindent Then for the annulus 
$$
A_{m+1}:=\{\bbx: r_{k_{m+1}}<\|\bbx\|<r_{k_{m+1}}+\delta_{m+1}\},
$$ 
we have 
$$
\|f\|_{H^1(A_{m+1})}^2 >\frac{r_{k_{m+1}}/3}{(r_{k_{m+1}}+\delta_{m+1})+n} \|f\|_{L^2(S_{r_{k_{m+1}}})}
\geq \frac{1/3}{1+\frac{\delta_{m+1}+n}{r_{k_{m+1}}}} \epsilon>\frac{\epsilon}{6}=\widetilde{\epsilon}.
$$
This completes the induction step. 

\smallskip 

\noindent 
So we have arrived at the contradiction that 
$$
+\infty >\|f\|_{H^1(\mR^n)}^2 \geq \sum_m \|f\|^2_{H^1(A_m)} 
\geq \sum_m \widetilde{\epsilon} =+\infty.
$$
This shows that our original assumption was incorrect, and so 
$$
\lim_{r\rightarrow \infty}\int_{S_r} |f|^2 d\sigma_r=0,
$$
completing the proof of our lemma. 
\end{proof}

\noindent An analogous result also holds for the cylinder $\mR\times S^{n-1}$. This was used in the proof of 
our Theorem~\ref{theorem_4}. 

\begin{lemma}
\label{cylinder_technical_lemma_surface_integral}
If $n\geq 3$ and $f,g\in H^1(\mR\times S^{n-1})$, then 
$$
\displaystyle 
\lim_{t\rightarrow +\infty}\int_{S^{n-1}} fg d\Omega=0=\lim_{t\rightarrow -\infty} \int_{S^{n-1}} fg d\Omega.
$$
\end{lemma}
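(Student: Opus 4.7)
The plan is to mirror the proof of Lemma~\ref{technical_lemma_surface_integral}, with cylindrical slabs $A_k := (t_k, t_k+1) \times S^{n-1}$ playing the role of the spherical annuli, but exploiting the translation-invariance of the product metric in the $t$-direction to avoid the delicate tracking of constants that was required in the noncompact radial setting. First, Cauchy--Schwarz applied fibrewise yields
$$
\left| \int_{S^{n-1}} fg \, d\Omega \right|^2 \le \int_{S^{n-1}} f^2 \, d\Omega \cdot \int_{S^{n-1}} g^2 \, d\Omega,
$$
so it suffices to prove that $\int_{S^{n-1}} f(t,\cdot)^2 \, d\Omega \to 0$ as $t \to +\infty$ whenever $f \in H^1(\mR \times S^{n-1})$; the $t \to -\infty$ case is identical.

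Next I would argue by contradiction. If the limit fails, there exist $\epsilon > 0$ and a sequence $t_k \nearrow +\infty$ along which $\int_{S^{n-1}} f(t_k,\cdot)^2 \, d\Omega > \epsilon$. After extracting a subsequence one may assume $t_{k+1} > t_k + 1$, so that the slabs $A_k$ are pairwise disjoint. Because the product metric on $\mR \times S^{n-1}$ is translation-invariant in $t$, each $A_k$ is isometric to the reference slab $(0,1) \times S^{n-1}$, so a single trace inequality on the reference slab transfers to every $A_k$ with one and the same constant $C$, giving
$$
\|f(t_k,\cdot)\|_{L^2(S^{n-1})}^2 \le C \, \|f\|_{H^1(A_k)}^2 \qquad \text{for all } k.
$$
Combining with the lower bound $\|f(t_k,\cdot)\|_{L^2(S^{n-1})}^2 > \epsilon$ and summing over the disjoint slabs yields
$$
\|f\|_{H^1(\mR \times S^{n-1})}^2 \ge \sum_k \|f\|_{H^1(A_k)}^2 \ge \sum_k \frac{\epsilon}{C} = +\infty,
$$
contradicting $f \in H^1(\mR \times S^{n-1})$.

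The only step requiring genuine attention is the trace inequality on the compact Riemannian manifold-with-boundary $(0,1) \times S^{n-1}$. Since Theorem~\ref{theorem_Grisvard} is stated for bounded open subsets of $\mR^n$, one option is to invoke a standard Riemannian version on compact manifolds with Lipschitz boundary; alternatively, one can reduce to Theorem~\ref{theorem_Grisvard} directly by covering $S^{n-1}$ by finitely many coordinate charts $U_\alpha$, applying the theorem on each product $U_\alpha \times (0,1) \subset \mR^n$ (for instance with $\bmu = \partial_t$, so that $\bmu \cdot \mathbf{n} = 1$ on $\{1\} \times S^{n-1}$ and $-1$ on $\{0\} \times S^{n-1}$ after an affine change of variables centring $\bmu$), and patching with a partition of unity on $S^{n-1}$. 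The key point, which makes the argument cleaner than in Lemma~\ref{technical_lemma_surface_integral}, is that translation-invariance in $t$ makes this constant independent of $k$. The hypothesis $n \ge 3$ plays no essential role in the argument and appears to be retained only for consistency with the ambient setting of Section~\ref{Section_RNdS}.
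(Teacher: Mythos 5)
Your proposal is correct and follows essentially the same route as the paper's (sketched) proof: Cauchy--Schwarz, a contradiction argument with disjoint cylindrical bands, and a trace inequality whose constant is uniform in $k$ thanks to translation invariance in $t$ (the paper simply cites \cite[Prop.~4.5, p.287]{Tay} for the trace inequality on $[t_k,t_k+\delta]\times S^{n-1}$ rather than reducing to Theorem~\ref{theorem_Grisvard} via charts). Your closing observation that the hypothesis $n\ge 3$ is not actually used is also accurate.
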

\begin{proof}(Sketch.) The proof is based on the same idea as the above, but is somewhat simpler, 
 since the radius of $S^{n-1}$ doesn't change,  
and the constants one has in the trace theorem for a `cylindrical band' of the form
$(a,b) \times  S^{n-1}$ already work, as opposed to having to keep careful track, via Theorem~\ref{theorem_Grisvard}, of the 
constants in the earlier case when the radii of the $S_{r}^{n-1}$ were changing. Procceding in the same way as in the previous lemma, we assume that 
$$
\neg \left(\lim_{t\rightarrow +\infty}\int_{S^{n-1}} |f|^2 d\Omega=0\right),
$$
and so there exists an $\epsilon>0$ and a sequence $(t_k)_{k\in \mN}$ such that $\lim\limits_{k\rightarrow \infty}t_k=+\infty$, and
$$
\lim_{k\rightarrow +\infty}\int_{S^{n-1}} |f(t_k,\cdot)|^2 d\Omega > \epsilon.
$$
In order to fatten the  
`circle' $\{t_k\}\times S^{n-1}$ to a cylindrical band of the form $I=(t_k,t_k+\delta)\times S^{n-1}$, while keeping the $L^2$-norm of $f$ 
on the band uniformly (in $k$) bigger than a fixed positive quantity, one can use the inequality 
$$
\|f(t_k,\cdot)\|_{L^2(S^{n-1})}\leq C \|f\|_{H^1(I\times S^{n-1})}.
$$
This follows from  \cite[Prop. 4.5, p.287]{Tay}, by taking $\Omega=[t_k,t_k+\delta] \times S^{n-1}$.  
The rest of the proof is the along the same lines. 
\end{proof}

\goodbreak 

\section{Appendix C: Sharpness of bound when $|m|=\frac{n}{2}$ in Theorem~\ref{theorem_3} }$\;$

\noindent In this appendix, we will show the sharpness of the bound from Theorem~\ref{theorem_3} we had obtained for the decay of the solution to 
the Klein-Gordon equation in the de Sitter universe in flat FLRW form, when $|m|=\frac{n}{2}$. Let us recall  this bound:
$$
\forall t\geq t_0,\;\;  \|\phi(t,\cdot)\|_{L^\infty(\mR^n)} \lesssim a^{-\frac{n}{2}}\log a.
$$
\noindent 
If $|m|=\frac{n}{2}$, then with $\psi:=a^{\frac{n}{2}}\phi$, we had seen that 
$$
\ddot{\psi} -\frac{1}{a^2} \Delta \psi=0.
$$
We will now construct a solution $\psi$ that satisfies 
$$
 \|\psi(t,\cdot)\|_{L^2(\mR^n)}\sim A+Bt \;\textrm{ as } t\rightarrow \infty,
 $$
 showing that 
 $$
 \|\phi(t,\cdot)\|_{L^2(\mR^n)} \sim (A+Bt) a^{-\frac{n}{2}} \;\textrm{ as } t\rightarrow \infty,
 $$
 and so the bound 
 $$
 \|\phi(t,\cdot)\|_{L^2(\mR^n)} \lesssim (A+Bt) a^{-\frac{n}{2}} \; \textrm{ for all large }t
 $$
 cannot be improved. 
 
 \smallskip 
 
 \noindent 
 We want 
 \begin{equation}
 \label{KG_FLRW_m=n/2}
 \ddot{\psi}-\frac{1}{e^{2t}} \Delta \psi=0.
 \end{equation}
 Taking the Fourier transform with respect to only the (spatial) $\bbx$-variable, and denoting 
 $$
 \widehat{\psi}(t,\bxi):=\int_{\mR^n} \psi(t,\bbx) e^{i\langle \bxi, \bbx\rangle } d^n\bbx,
 $$
 \eqref{KG_FLRW_m=n/2} becomes
 \begin{equation}
 \label{FT_KG_FLRW_m=n/2}
 \frac{\partial^2}{\partial t^2} \widehat{\psi} (t,\bxi)+\frac{\|\bxi\|^2}{e^{2t}} \widehat{\psi}(t,\bxi)=0,
 \end{equation}
 which is a family of ordinary differential equations in $t$, parameterised by $\bxi \in \mR^n$. 
 For a fixed $\bxi\in \mR^n$, the general solution to the ODE \eqref{FT_KG_FLRW_m=n/2} is given by 
 $$
 \widehat{\psi}(t,\bxi)=C_1(\bxi)\cdot  J_0(\|\bxi\| e^{-t})+C_2(\bxi)\cdot  Y_0(\|\bxi\|e^{-t}),
 $$
 where 
 \begin{itemize} 
  \item $J_0$ is the Bessel function of first kind and of order $0$, and 
  \item $Y_0$ is the Bessel function of second kind and of order $0$.
 \end{itemize}
 In order to construct our $\psi$, we will make special choices of $C_1$ and $C_2$. 
 
 \smallskip 
 
 \noindent 
 We recall \cite[(9.1.7-8)]{AS} that 
 \begin{eqnarray*}
  J_0(z)&\sim& 1,\\
  Y_0(z) & \sim& \frac{2}{\pi} \log z
 \end{eqnarray*}
 as $z\searrow 0$ ($z\in \mR$). 
 
 \noindent 
 Now as $t\rightarrow \infty$, $e^{-t}\searrow 0$, and so from the above limiting behaviour of $J_0$ and $Y_0$, 
 we obtain that as $t\rightarrow\infty$, 
 \begin{eqnarray*}
  \widehat{\psi}(t,\bxi)&\sim &C_1(\bxi) \cdot 1+C_2(\bxi) \cdot \left(\frac{2}{\pi} \log (\|\bxi\|e^{-t})\right)\\
  &=&
  C_1(\bxi)+\frac{2}{\pi} C_2(\bxi) \log \|\bxi\|-\frac{2}{\pi} t\cdot  C_2(\bxi).
 \end{eqnarray*}
By Plancherel's identity (see e.g. \cite[Prop. 3.2]{Tay}) 
$$
\|\widehat{\psi}(t,\cdot)\|_{L^2(\mR^n)}=\|\psi(t,\cdot)\|_{L^2(\mR^n)}.
$$
Since we want the linear behaviour in $t$ of $\|\psi(t,\cdot)\|_{L^2(\mR^n)}$,  
we keep $C_2$ nonzero, but may take $C_1\equiv 0$. Then as $t\rightarrow \infty$, 
$$
\widehat{\psi}(t,\bxi)=C_2(\bxi) \cdot Y_0(\|\bxi\|e^{-t}).
$$
In order to have $\widehat{\psi}(t,\cdot)$ (and so also $\psi(t,\cdot)$) in 
$L^2(\mR^n)$ for all $t$, we choose $C_2$ to have a sufficiently fast decay. 

\medskip 

\noindent 
We recall \cite[\S9.2.2]{AS} that 
$$
Y_0(z)= \sqrt{\frac{2}{\pi z}} \left(\sin \Big(z-\frac{\pi}{4}\Big)+O\Big(\frac{1}{|z|}\Big)\right) 
$$
as $z\rightarrow \infty$ ($z\in \mR$). So we have 
$$
Y_0(\|\bxi\| e^{-t})
=
\sqrt{\frac{2}{\pi \|\bxi\|e^{-t}}} \left(\sin \Big(\|\bxi\|e^{-t}-\frac{\pi}{4}\Big)+O\Big(\frac{1}{\|\bxi\|e^{-t}}\Big)\right) 
$$
as $\|\bxi\|\rightarrow +\infty$ (and $t$ is kept fixed). So to arrange $\widehat{\psi}(t,\cdot)\in L^2(\mR^n)$ for all $t$, we may take 
$$
C_2(\bxi):=\frac{\|\bxi\|}{(\|\bxi\|^2+1)^{1+\frac{n}{4}}}.
$$
(Also this choice makes 
$$
\bxi \mapsto C_2(\bxi)\log \|\bxi\|\in L^2(\mR^n),
$$
which will be needed below.) 

\medskip 

\noindent 
Then $\widehat{\psi}(t,\cdot)\in L^2(\mR^n)$ for all $t$. Also, as $t\rightarrow \infty$, 
$$
\widehat{\psi}(t,\bxi)
\sim 
\frac{2}{\pi} \bigg( \underbrace{\frac{\|\bxi\|}{(\|\bxi\|^2+1)^{1+\frac{n}{4}}} \log\|\bxi\|}_{=:f\in L^2(\mR^n)} 
-t \underbrace{\frac{\|\bxi\|}{(\|\bxi\|^2+1)^{1+\frac{n}{4}}}}_{=:g\in L^2(\mR^n)} \bigg),
$$
and 
$$
\|\widehat{\psi}(t,\cdot)\|_{L^2(\mR^n)} 
\geq 
\frac{2}{\pi}\Big(t\underbrace{\|g\|_{L^2(\mR^n)}}_{\neq 0}-\|f\|_{L^2(\mR^n)}\Big)\geq 0
$$
for large $t$.

\end{document}